\documentclass{amsart}
%

\usepackage{mathrsfs,amssymb,setspace}
\usepackage{verbatim, enumerate, color, stmaryrd, bbold, bm}

\theoremstyle{plain}
\newtheorem{theorem}{Theorem}[section]
\newtheorem{lemma}[theorem]{Lemma}
\newtheorem{proposition}[theorem]{Proposition}
\newtheorem{corollary}[theorem]{Corollary}

\theoremstyle{definition}
\newtheorem{notation}[theorem]{Notation}
\newtheorem{example}[theorem]{Example}
\newtheorem{definition}[theorem]{Definition}

\theoremstyle{remark}
\newtheorem{remark}[theorem]{Remark}

\usepackage[all]{xy}

\newcommand{\dref}[1]{Definition \ref{#1}}
\newcommand{\lref}[1]{Lemma \ref{#1}}
\newcommand{\tref}[1]{Theorem \ref{#1}}
\newcommand{\pref}[1]{Proposition \ref{#1}}
\newcommand{\cref}[1]{Corollary \ref{#1}}
\newcommand{\rref}[1]{Remark \ref{#1}}

\newcommand{\eref}[1]{Example \ref{#1}}
\newcommand{\sref}[1]{Section \ref{#1}}

\newcommand{\powerset}{\raisebox{.15\baselineskip}{\Large\ensuremath{\wp}}}

\begin{document}


\title[{\tt if-then-else} over possibly non-halting programs and tests]{Axiomatization of {\tt if-then-else} over possibly non-halting programs and tests}
\author[Gayatri Panicker]{Gayatri Panicker}
\address{Department of Mathematics,  Indian Institute of Technology Guwahati, Guwahati, India}
\email{p.gayatri@iitg.ac.in}
\author[K. V. Krishna]{K. V. Krishna}
\address{Department of Mathematics, Indian Institute of Technology Guwahati, Guwahati, India}
\email{kvk@iitg.ac.in}
\author[Purandar Bhaduri]{Purandar
 Bhaduri}
\address{Department of Computer Science and Engineering, Indian Institute of Technology Guwahati, Guwahati, India}
\email{pbhaduri@iitg.ac.in}


\begin{abstract}
In order to study the axiomatization of the {\tt if-then-else} construct over possibly non-halting programs and tests, this paper introduces the notion of \emph{$C$-sets} by considering the tests from an abstract $C$-algebra.  When the $C$-algebra is an ada, the axiomatization is shown to be complete by obtaining a subdirect representation of $C$-sets. Further, this paper considers the equality test with the {\tt if-then-else} construct and gives a complete axiomatization through the notion of \emph{agreeable $C$-sets}.
\end{abstract}

\subjclass[2010]{08A70, 03G25 and 68N15.}

\keywords{Axiomatization, if-then-else,  non-halting programs, $C$-algebra}

\maketitle

\section*{Introduction}

Being one of the fundamental constructs, the conditional expression {\tt if-then \sloppy -else} has received considerable importance in programming languages. It plays a vital role in the study of program semantics. One of the seminal works in the axiomatization of this conditional expression was by McCarthy in \cite{mccarthy63}, where he gave an axiom schema for the determination of the semantic equivalence between any two conditional expressions. Since then several authors have studied the axiomatization of {\tt if-then-else} in different contexts.

Following McCarthy's approach, Igarashi in \cite{igarashi71} studied a formal system comprising ALGOL-like statements including various programming features along with {\tt if-then-else} with predicates. The two systems were shown to be equipollent by de Bakker in \cite{debakker69}, i.e., axioms of one could be derived from the other. In \cite{sethi78} Sethi gave a different framework to determine the semantic equivalence of statements of the form {\tt if $E = F$ then $G$ else $H$}. In \cite{pigozzi91} Pigozzi gave an axiomatization of the theory of the equality test algebras appended with {\tt if-then-else}, where the test is purely $T$ ({\tt true}) or $F$ ({\tt false}). He gave a finite axiom scheme for the quasi-equational theory of equality test algebras and another finite axiom scheme for the equational theory of {\tt if-then-else} algebras. In \cite{bergman91} Bergman studied the sheaf-theoretic representation of sets equipped with an action of a Boolean algebra. This Boolean action was in fact the {\tt if-then-else} function. This approach was adopted in \cite{stokes98} by Stokes who obtained a representation theorem for the Boolean algebra case of {\tt if-then-else} algebras of \cite{manes93}.

In \cite{jackson09} Jackson and Stokes gave a complete axiomatization of {\tt if-then-else} over halting programs and tests. They also modelled composition of functions and of functions with predicates and further showed that the more natural setting of only considering composition of functions would not admit a finite axiomatization.

The work listed above mainly focus on halting tests (by assuming them to be of Boolean type) and halting programs. A natural interest in this context is to study non-halting tests and programs. On these lines as well considerable work has been done, besides the work in \cite{mccarthy63} and \cite{igarashi71}.

In \cite{bloom83} Bloom and Tindell studied four versions of {\tt if-then-else} along with the equality test. In two cases they considered the halting scenario whilst in the other two they modeled possibly non-halting programs and tests. They provided an equationally complete proof system for each such framework while noting that none of the classes formed an equational class. In order to obtain similar results in the context of functional programming languages that have user-definable data types, in \cite{guessarian87}, Guessarian and Mesegeuer extended the proof system of \cite{bloom83} to heterogeneous algebras that have extra operations, predicates and equations. Another extension of \cite{bloom83} was by Mekler and Nelson in \cite{mekler87}. In this work the authors expanded the algebras in some equational class $K$ by adding the {\tt if-then-else} operation and found axioms for the equational class $K^*$ generated by these algebras. They also showed that the equational theory for $K^*$ is decidable if the word problem for $K$ is decidable. On a slightly different track, Manes in \cite{manes90} gave a transformational characterisation for {\tt if-then-else} where the tests are Boolean but the functions on which they act could be non-halting. Further, in \cite{manes93}, Manes considered {\tt if-then-else} algebras over Boolean algebras, $C$-algebras and adas. Here $C$-algebras and adas are algebras of non-halting tests, generalizing Boolean algebras to three-valued logics.

While there are several studies (cf. \cite{belnap70}, \cite{bergstra95}, \cite{bochvar38}, \cite{heyting34}, \cite{kleene38}, \cite{lukasiewicz20}) on extending two-valued Boolean logic to three-valued logic, McCarthy's logic (cf. \cite{mccarthy63}) models the lazy evaluation exhibited by programming languages that evaluate expressions in sequential order, from left to right. In \cite{guzman90} Guzm\'{a}n and Squier gave a complete axiomatization of McCarthy's three-valued logic and called the corresponding algebra a $C$-algebra, or the algebra of conditional logic. While studying {\tt if-then-else} algebras, in \cite{manes93}, Manes defined an {\em ada} (Algebra of Disjoint Alternatives) which is essentially a $C$-algebra equipped with an oracle for the halting problem.

Recently, in \cite{jackson15} Jackson and Stokes studied the algebraic theory of computable functions, which can be viewed as possibly non-halting programs, equipped with composition, {\tt if-then-else} and {\tt while-do}. In this work they assumed that the tests form a Boolean algebra. Further, they demonstrated how an algebra of non-halting tests could be constructed from Boolean tests in their setting. Jackson and Stokes proposed an alternative approach by considering an abstract collection of non-halting tests as in \cite{manes93} and posed the following problem:

{\em Characterize the algebras of computable functions associated with an abstract $C$-algebra of non-halting tests.}

In this paper, we attempt to address the problem by adopting the approach of Jackson and Stokes in \cite{jackson09}. To that aim, we define the notion of a {\em $C$-set} through which we provide a complete axiomatization for {\tt if-then-else} over a class of possibly non-halting programs and tests, where tests are drawn from an ada. The paper has been organized as follows. The necessary background material is provided in Section 1. In Section 2, we introduce the notion of $C$-sets and give a few properties of $C$-sets. Targeting our goal, Section 3 is dedicated to provide a subdirect representation of $C$-sets over adas. Further, in Section 4 we give a complete axiomatization for $C$-sets over adas equipped with the equality test, called {\em agreeable $C$-sets}.  A brief conclusion with possible extensions of this work are discussed in Section 5.

\section{Preliminaries}


In this section, we shall list definitions and results that will be useful to us. In \cite{jackson09} Jackson and Stokes considered the notion of a $B$-set in order to study the theory of halting programs equipped with the operation of {\tt if-then-else}.

\begin{definition}
 Let $\langle Q, \vee, \wedge, \neg, T, F \rangle$ be a Boolean algebra and $S$
be a set. A \emph{$B$-set} is a pair $(S, Q)$, equipped with a function, called \emph{$B$-action} $\eta : Q \times S \times S \rightarrow
S$, where $\eta(\alpha, a, b)$ is denoted by $\alpha[a, b]$, read ``{\tt if $\alpha$ then $a$ else $b$}", that satisfies the following axioms for all $\alpha, \beta \in Q$ and $a, b, c \in S$:

\begin{align}
  \alpha[a, a] & = a \label{B1} \\
   \alpha[\alpha[a, b], c] & = \alpha[a, c] \label{B2} \\
   \alpha[a, \alpha[b, c]] & = \alpha[a, c] \label{B3} \\
   F[a, b] & = b \label{B4} \\
   \neg \alpha[a, b] & = \alpha[b, a] \label{B5} \\
   (\alpha \wedge \beta)[a, b] & = \alpha[\beta[a, b], b] \label{B6}
\end{align}
\end{definition}

We recall the following examples from \cite{jackson09}.

\begin{example}
 For any Boolean algebra $Q$, the pair $(Q, Q)$ is a $B$-set with the following action for all $\alpha, \beta, \gamma \in Q$:
 $$\alpha [\beta, \gamma] = (\alpha \wedge \beta) \vee (\neg \alpha \wedge
\gamma).$$
\end{example}

\begin{example}
 Consider the two-element Boolean algebra $\mathbb{2}$ with the universe $\{T, F \}$.  For any set $S$, the pair $(S, \mathbb{2})$ is a $B$-set with the following action for all $a, b \in S$:
 \begin{align*}
  T[a, b] & = a, \\
  F[a, b] & = b. \newline
 \end{align*}
 These $B$-sets are called \emph{basic} $B$-sets.
\end{example}

\begin{notation}
Let $X$ and $Y$ be two sets. The set of all functions from $X$ to $Y$ will be denoted by $Y^{X}$.  The set of all total functions over $X$ will be denoted by $\mathcal{T}(X)$.
\end{notation}

\begin{example}
 For any set $X$, the pair $(\mathcal{T}(X), \mathbb{2}^{X})$ is a $B$-set with the following action for all $\alpha \in \mathbb{2}^{X}$ and $g, h \in \mathcal{T}(X)$:
  \begin{equation*}
  \alpha[g, h](x) =
  \begin{cases}
   g(x), & \text{ if } \alpha(x) = T; \\
   h(x), & \text{ if } \alpha(x) = F.
  \end{cases}
 \end{equation*}
\end{example}

In \cite{jackson09} Jackson and Stokes showed that every $B$-set can be represented in terms of basic $B$-sets.

\begin{theorem}[\cite{jackson09}] \label{BsetSubdir}
 Every $B$-set is a subdirect product of basic $B$-sets.
\end{theorem}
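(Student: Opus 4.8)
The plan is to exhibit, for each $B$-set $(S,Q)$, a separating family of $B$-set homomorphisms onto basic $B$-sets, and then invoke the standard subdirect-representation criterion: a family of congruences (or homomorphisms) whose intersection is the identity yields a subdirect embedding into the product of the corresponding quotients. Since the basic $B$-sets are exactly the pairs $(S',\mathbb{2})$ with the two-point action $T[a,b]=a$, $F[a,b]=b$, the natural move is to factor $Q$ through its ultrafilters. First I would recall that a Boolean algebra $Q$ embeds into a product of copies of $\mathbb{2}$ via its Stone space: for each ultrafilter $U$ of $Q$ the map $\chi_U:Q\to\mathbb{2}$ sending $\alpha\mapsto T$ iff $\alpha\in U$ is a Boolean homomorphism, and $\bigcap_U \ker\chi_U$ is trivial. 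The task is to lift each $\chi_U$ to a $B$-set homomorphism $(S,Q)\to (S_U,\mathbb{2})$ for a suitable quotient set $S_U$.

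The key construction is the following equivalence on $S$ relative to an ultrafilter $U$: declare $a\sim_U b$ iff there exists $\alpha\in U$ with $\alpha[a,b]=a$ (intuitively, $a$ and $b$ agree on the ``piece of the world'' named by $U$). I would check, using axioms \eqref{B1}--\eqref{B6}, that $\sim_U$ is an equivalence relation — reflexivity from \eqref{B1} with $\alpha=T$; symmetry by replacing $\alpha$ with $\neg\alpha$ and using \eqref{B5}; transitivity by meeting the two witnessing elements and using \eqref{B6} together with \eqref{B2}, \eqref{B3}, noting that $U$ is closed under $\wedge$. Next, writing $S_U=S/\!\sim_U$, I would verify that the action descends: if $\alpha\in U$ then $\alpha[a,b]\sim_U a$ (take the witness $\alpha$ itself, using \eqref{B2}), and if $\alpha\notin U$ then $\neg\alpha\in U$ and $\alpha[a,b]\sim_U b$ via \eqref{B5}; moreover $\sim_U$ respects the action in each coordinate, which again reduces to manipulations with \eqref{B2}, \eqref{B3}, \eqref{B6}. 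This gives a well-defined map $\eta_U:\mathbb{2}\times S_U\times S_U\to S_U$ satisfying $T[\bar a,\bar b]=\bar a$, $F[\bar a,\bar b]=\bar b$, i.e. $(S_U,\mathbb{2})$ is a basic $B$-set, and the pair $(\pi_U,\chi_U)$ is a $B$-set homomorphism.

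Finally I would show the family $\{\pi_U : U \text{ an ultrafilter of } Q\}$ is separating: if $a\neq b$ in $S$, I must produce an ultrafilter $U$ with $a\not\sim_U b$, equivalently, with no $\alpha\in U$ satisfying $\alpha[a,b]=a$. Consider the set $I_{a,b}=\{\alpha\in Q : \alpha[a,b]=a\}$; using \eqref{B1}, \eqref{B2}, \eqref{B3}, \eqref{B6} one checks this is a filter (it contains $T$, and is closed upward and under $\wedge$), and it is proper precisely because $a\neq b$ forces $T\notin$... wait — rather, $I_{a,b}$ being proper is equivalent to $a \ne b$ since $T \in I_{a,b}$ always, so instead the object to localize is its complement behaviour: $a\not\sim_U b$ means $U\cap I_{a,b}=\varnothing$, so I want an ultrafilter disjoint from $I_{a,b}$, which exists iff $I_{a,b}$ generates a proper filter avoiding... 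The clean formulation: $a\sim_U b$ for \emph{every} ultrafilter $U$ would force $I_{a,b}$ to meet every ultrafilter, hence $\bigvee$-cover all of $Q$, and a short argument with \eqref{B1}--\eqref{B6} shows the join of $I_{a,b}$ being $T$ collapses to $a=b$; contrapositively $a\neq b$ yields an ultrafilter $U$ with $a\not\sim_U b$. I expect this last separation step — pinning down exactly which filter to form from $a\neq b$ and proving it avoids some ultrafilter — to be the main obstacle, as it is where the interaction between the Boolean structure of $Q$ and the action axioms is most delicate; the homomorphism-lifting and the verification that $(S_U,\mathbb{2})$ is basic are comparatively routine diagram-chasing with \eqref{B1}--\eqref{B6}.
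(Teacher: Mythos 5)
Your overall strategy --- attach to each ultrafilter $U$ of $Q$ an equivalence $\sim_U$ on $S$, check that the pair of quotient maps is a $B$-set homomorphism onto a basic $B$-set, and separate points of $S$ by the ultrafilter lemma --- is exactly the standard route; it is the one the paper follows for the $C$-set analogue (\dref{DefinitionETheta}, \tref{CsetSubdir}) and the one it quotes from \cite{jackson09} in the proof of \tref{Prop.JS.Agreeable}. However, your key definition is oriented the wrong way, and as written the construction collapses. You declare $a\sim_U b$ iff $\alpha[a,b]=a$ for some $\alpha\in U$. Since $T[a,b]=(\neg F)[a,b]=F[b,a]=a$ by \eqref{B5} and \eqref{B4}, the witness $\alpha=T$ works for \emph{every} pair $(a,b)$, and $T$ lies in every ultrafilter; hence $\sim_U$ is the total relation, every quotient $S_U$ is a singleton, and the family $\{\pi_U\}$ separates nothing. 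In the functional picture, $\alpha[a,b]=a$ says that $a$ and $b$ agree where $\alpha$ is \emph{false}, i.e.\ away from the point of the Stone space named by $U$ --- the opposite of your stated intuition. Your own last paragraph runs into this: you observe that $T\in I_{a,b}$ always and the filter argument stalls. That is not a delicate point to be finessed but a symptom that the relation is inverted.

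The repair is to set $E_U=\{(a,b): \gamma[a,b]=b \text{ for some }\gamma\in U\}$, which is the relation of \cite{jackson09} (compare $E_F$ in the proof of \tref{Prop.JS.Agreeable}, and the $C$-set version $\beta[s,t]=\beta[t,t]$ of \dref{DefinitionETheta}). Tellingly, your verification sketches already prove the right statements for this corrected relation rather than for yours: the computation $\alpha[\alpha[a,b],a]=\alpha[a,a]=a$ via \eqref{B2} and \eqref{B1} shows $\alpha[a,b]\mathrel{E_U}a$ for $\alpha\in U$, and $(\neg\alpha)[\alpha[a,b],b]=\alpha[b,\alpha[a,b]]=\alpha[b,b]=b$ via \eqref{B5}, \eqref{B3}, \eqref{B1} shows $\alpha[a,b]\mathrel{E_U}b$ for $\alpha\notin U$. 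Two further corrections: symmetry is obtained by reusing the \emph{same} witness $\gamma$ (namely $\gamma[b,a]=\gamma[\gamma[a,b],a]=\gamma[a,a]=a$, using $\gamma[a,b]=b=\gamma[b,b]$), not by passing to $\neg\alpha$, which leaves the ultrafilter; and the separation step you feared is in fact routine once the orientation is fixed: $I_{a,b}=\{\gamma\in Q: \gamma[a,b]=b\}$ is an ideal (it contains $F$ by \eqref{B4}, is downward closed by \eqref{B6} and \eqref{B1}, and is closed under $\vee$ by \eqref{B5} and \eqref{B3}), and it contains $T$ iff $a=b$; so for $a\neq b$ it is proper and any ultrafilter disjoint from it witnesses $a\not\sim_U b$.
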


This tells us that studying the identities satisfied by the subclass of basic $B$-sets suffices to understand those satisfied by the entire class of $B$-sets. Checking the validity of any identity in a basic $B$-set involves merely checking the respective values for {\tt true} and {\tt false} and is thus far simpler than checking the same in an arbitrary $B$-set. Further, in \cite{jackson09}, they model the equality test based on the assumption that the tests arise from a Boolean algebra and that the functions are halting.

\begin{definition}
 A $B$-set $(S, B)$ is said to be \emph{agreeable} if it is equipped with an operation $* : S \times S \rightarrow B$ satisfying the following axioms for all $s, t, u, v \in S$ and $\alpha \in B$:

 \begin{align}
  s * s & = T \label{AB1} \\
  (s * t)[s, t] & = t \label{AB2} \\
  \alpha[s, t] * \alpha[u, v] & = \alpha[s * u, t * v] \label{AB3}
 \end{align}
\end{definition}

The following are examples of agreeable $B$-sets.

\begin{example}
 The pair $(\mathcal{T}(X), \mathbb{2}^{X})$ is an agreeable $B$-set with the operation $*$ defined as follows for all $f, g \in \mathcal{T}(X)$:
 \begin{equation*}
  (f * g)(x) =
  \begin{cases}
   T, & \text{ if } f(x) = g(x); \\
   F, & \text{ otherwise.}
  \end{cases}
 \end{equation*}
\end{example}

\begin{example}
 Let $S$ be any set. The pair $(S, \mathbb{2})$ is an agreeable $B$-set under the operation $*$ defined in the following manner for all $s, t \in S$:
 \begin{equation*}
  s * t =
  \begin{cases}
   T, & \text{ if } s = t; \\
   F, & \text{ otherwise.}
  \end{cases}
 \end{equation*}
 These $B$-sets are called \emph{basic agreeable $B$-sets}.
\end{example}

Jackson and Stokes proved the following result.

\begin{theorem}[\cite{jackson09}] \label{AgreeableBsetSubdir}
 Every agreeable $B$-set is a subdirect product of basic agreeable $B$-sets.
\end{theorem}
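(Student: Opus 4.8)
The plan is to build directly on Theorem~\ref{BsetSubdir}. Let $(S, B, *)$ be an agreeable $B$-set. Forgetting $*$, the pair $(S, B)$ is a $B$-set, so by Theorem~\ref{BsetSubdir} there is an index set $I$ and, for each $i \in I$, a surjective $B$-set morphism $(q_i, p_i) : (S, B) \to (S_i, \mathbb{2})$ onto a basic $B$-set, with the family $\{q_i\}_{i \in I}$ jointly injective; put $\theta_i = \ker q_i$. Each basic $B$-set $(S_i, \mathbb{2})$ carries its unique basic agreeable structure $*_i$, the equality test. The whole statement then reduces to showing that the very same maps are compatible with $*$, that is,
\[ p_i(s * t) \;=\; q_i(s) *_i q_i(t) \qquad \text{for all } s, t \in S,\ i \in I; \]
for then each $(q_i, p_i)$ is a morphism of agreeable $B$-sets, the family is still jointly injective and componentwise surjective, and hence it realizes $(S, B, *)$ as a subdirect product of the basic agreeable $B$-sets $(S_i, \mathbb{2}, *_i)$. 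Since both sides of the displayed equation lie in $\mathbb{2}$ and $q_i(s) *_i q_i(t) = T$ precisely when $q_i(s) = q_i(t)$, it suffices to prove the biconditional $p_i(s * t) = T \iff s \mathrel{\theta_i} t$.

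For the implication ($\Rightarrow$), suppose $p_i(s*t) = T$. Axiom~\eqref{AB2} gives $(s*t)[s,t] = t$ in $S$; transporting this along the morphism $(q_i, p_i)$ yields $T[q_i(s), q_i(t)] = q_i(t)$ in $S_i$, and since $T[a,b] = a$ holds in every $B$-set (take $\alpha = F$ in~\eqref{B5} and apply~\eqref{B4}), we get $q_i(s) = q_i(t)$, i.e.\ $s \mathrel{\theta_i} t$. For ($\Leftarrow$) I would invoke the explicit description of the congruences produced in the proof of Theorem~\ref{BsetSubdir}: $s \mathrel{\theta_i} t$ holds iff there is $\alpha \in B$ with $p_i(\alpha) = T$ and $\alpha[s,t] = t$. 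Fixing such an $\alpha$ and computing in $B$, using first $t = \alpha[s,t]$, then $s = \alpha[s,s]$ by~\eqref{B1}, then~\eqref{AB3}, then $s*s = T$ by~\eqref{AB1},
\[ s * t \;=\; s * \alpha[s,t] \;=\; \alpha[s,s] * \alpha[s,t] \;=\; \alpha[\, s*s,\ s*t\,] \;=\; \alpha[\, T,\ s*t\,] \;=\; \alpha \vee (s*t), \]
the last step being the Boolean action of $B$ on itself. Applying $p_i$ and using $p_i(\alpha) = T$ gives $p_i(s*t) = T \vee p_i(s*t) = T$, completing the biconditional.

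The bookkeeping --- that each $(S_i, \mathbb{2}, *_i)$ is genuinely a basic agreeable $B$-set (immediate from the definition of a basic agreeable $B$-set) and that a jointly injective, componentwise surjective family of morphisms of agreeable $B$-sets is exactly a subdirect representation --- is routine. The one point requiring care is the ($\Leftarrow$) direction, and in particular the appeal to the internal form of the congruences from Theorem~\ref{BsetSubdir}; I would either cite that description from \cite{jackson09} verbatim, or, for a self-contained treatment, reprove just the needed fragment, namely that for each ultrafilter $\mathcal{U}$ of $B$ the relation $s \mathrel{\theta_{\mathcal{U}}} t \iff (\exists\, \alpha \in \mathcal{U})\; \alpha[s,t] = t$ is a $B$-set congruence whose quotient is a basic $B$-set and that these congruences meet in $\Delta_S$. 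With that in hand, the displayed computation is the whole content of the theorem.
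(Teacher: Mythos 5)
Your proof is correct, and it follows the same overall architecture as the paper's argument for \tref{Prop.JS.Agreeable}: start from the subdirect decomposition of the underlying $B$-set via the ultrafilter congruences $E_{F}=\{(s,t): \gamma[s,t]=t \text{ for some }\gamma\in F\}$, and use the operation-interchange axiom \eqref{AB3} as the key tool to carry the operation $*$ along. The difference is in how compatibility with $*$ is verified. The paper checks that $(E_{F},\theta_{F})$ respects $*$: it takes two $E_{F}$-related pairs, combines the two witnesses into $\alpha\wedge\beta$ (using commutativity of $\wedge$ and \eqref{B6}), applies \eqref{AB3} to get $(\alpha\wedge\beta)[a_{1}*b_{1},\,a_{2}*b_{2}]=a_{2}*b_{2}$, and then concludes $a_{1}*b_{1}\in F\Leftrightarrow a_{2}*b_{2}\in F$ from upward closure of the ultrafilter, with the converse direction handled by symmetry of $E_{F}$. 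You instead prove the sharper statement that the quotient map literally sends $s*t$ to the equality test of the images, reducing it to the biconditional $p_{i}(s*t)=T\Leftrightarrow s\,\theta_{i}\,t$: the forward direction transports \eqref{AB2}, and the reverse is the one-line computation $s*t=\alpha[s,s]*\alpha[s,t]=\alpha[T,s*t]=\alpha\vee(s*t)$ from \eqref{B1}, \eqref{AB3} and \eqref{AB1}. This buys two small things: you never need to combine witnesses or invoke the filter order, and you make explicit (rather than leaving to the uniqueness of the agreeable structure on a basic $B$-set) that the induced operation on each quotient is the equality test, which is what the statement about \emph{basic agreeable} $B$-sets actually requires. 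Your own caveat is the right one: the reverse direction does depend on the internal description of the congruences coming out of \tref{BsetSubdir}, so that description must be cited or reproved; the paper uses exactly the same description, so this is not an obstacle.
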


In \cite{kleene52} Kleene discussed various three-valued logics that are extensions of Boolean logic. McCarthy first studied the three-valued non-commutative logic in the context of programming languages in \cite{mccarthy63}. This is the non-commutative regular extension of Boolean logic to three truth values. Here the third truth value $U$ denotes the {\tt undefined} state which is attained when a test diverges. In this new context, the evaluation of expressions is carried out sequentially from left to right, mimicking that of a majority of programming languages. The complete axiomatization for the class of algebras associated with this logic was given by Guzm\'{a}n and Squier in \cite{guzman90}. They called the algebra associated with this logic as a \emph{$C$-algebra}.  We shall denote an arbitrary $C$-algebra by $M$.

\begin{definition}
 A \emph{$C$-algebra} is an algebra $\langle M, \vee, \wedge, \neg \rangle$ of
type $(2, 2, 1)$, which satisfies the following axioms for all $\alpha, \beta, \gamma \in M$:

\begin{align}
  \neg \neg \alpha & = \alpha \label{C1} \\
   \neg (\alpha \wedge \beta) & = \neg \alpha \vee \neg \beta \label{C2} \\
   (\alpha \wedge \beta) \wedge \gamma & = \alpha \wedge (\beta \wedge \gamma) \label{C3} \\
   \alpha \wedge (\beta \vee \gamma) & = (\alpha \wedge \beta) \vee (\alpha \wedge \gamma) \label{C4} \\
   (\alpha \vee \beta) \wedge \gamma & = (\alpha \wedge \gamma) \vee (\neg \alpha \wedge \beta \wedge \gamma) \label{C5} \\
   \alpha \vee (\alpha \wedge \beta) & = \alpha \label{C6} \\
   (\alpha \wedge \beta) \vee (\beta \wedge \alpha) & = (\beta \wedge \alpha) \vee (\alpha \wedge \beta) \label{C7}
\end{align}

\end{definition}

\begin{example}
Every Boolean algebra is a $C$-algebra. In particular, $\mathbb{2}$ is a $C$-algebra.
\end{example}

\begin{example}
 Let $\mathbb{3}$ denote the $C$-algebra with the universe $\{ T, F, U \}$ and the following operations. This is, in fact, McCarthy's three-valued logic.
 \begin{center}
  \begin{tabular}{c|c}
  $\neg$ & \\
  \hline
  $T$ & $F$ \\
  $F$ & $T$ \\
  $U$ & $U$
 \end{tabular}
 \quad
 \begin{tabular}{c|ccc}
  $\wedge$ & $T$ & $F$ & $U$ \\
  \hline
  $T$ & $T$ & $F$ & $U$ \\
  $F$ & $F$ & $F$ & $F$ \\
  $U$ & $U$ & $U$ & $U$
 \end{tabular}
 \quad
 \begin{tabular}{c|ccc}
  $\vee$ & $T$ & $F$ & $U$ \\
  \hline
  $T$ & $T$ & $T$ & $T$ \\
  $F$ & $T$ & $F$ & $U$ \\
  $U$ & $U$ & $U$ & $U$
 \end{tabular}
 \end{center}
 \quad \newline
\end{example}

Guzm\'{a}n and Squier showed the following result.

\begin{theorem}[\cite{guzman90}] \label{SubdirIrredCAlg}
$\mathbb{3}$ and $\mathbb{2}$ are the only subdirectly irreducible $C$-algebras. Hence, every $C$-algebra is a subalgebra of a product of copies of $\mathbb{3}$.
\end{theorem}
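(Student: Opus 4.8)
The plan is to establish a Stone-type representation — every $C$-algebra embeds into a power of $\mathbb{3}$ — and to read off both assertions from it. Two easy observations come first. The subalgebras of $\mathbb{3}$ are exactly the trivial one, $\{T,F\}\cong\mathbb{2}$, and $\mathbb{3}$ itself, since the two-element subsets containing $U$ are not closed under $\neg$. Moreover $\mathbb{2}$ and $\mathbb{3}$ are subdirectly irreducible: this is a finite check, for running through the equivalence relations on $\{T,F,U\}$ with the operation tables shows that identifying any two of the three elements forces all three to collapse — for instance $T\equiv U$ gives $F=\neg T\equiv\neg U=U$, hence $T\equiv F\equiv U$ — so $\mathbb{3}$ has no proper nontrivial congruence and is in fact simple, and likewise $\mathbb{2}$ is simple.

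The core is the following separation statement: for every $C$-algebra $M$ and every pair $a\neq b$ in $M$ there is a homomorphism $\varphi\colon M\to\mathbb{3}$ with $\varphi(a)\neq\varphi(b)$. To prepare for it I would first record the elementary arithmetic that follows from \eqref{C1}--\eqref{C7}: idempotency of $\wedge$ and $\vee$, the De Morgan duals of \eqref{C3}--\eqref{C6}, associativity of $\vee$, absorption, and the identities governing $\alpha\wedge\neg\alpha$ and $\alpha\vee\neg\alpha$. Unwinding the definition, a homomorphism $\varphi\colon M\to\mathbb{3}$ is the same thing as an ordered partition $M=P\sqcup N\sqcup Z$ — the preimages of $T$, $F$, $U$ — in which $\neg$ interchanges $P$ with $N$ and fixes $Z$ setwise, $Z$ is absorbing for $\wedge$ and $\vee$ on either side (reflecting $U\wedge x=U=x\wedge U$ and dually), and on the complement of $Z$ the pair $(P,N)$ forms a prime filter together with its complementary prime ideal (reflecting $T\vee x=T$, $F\wedge x=F$, and so on). Such partitions are the $C$-algebra counterpart of the prime ultrafilters used in Stone's representation of Boolean algebras.

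These separating partitions are obtained by a Zorn's lemma argument. One considers the family of ``partial'' assignments that still admit an extension to such a $(P,N,Z)$ while keeping $a$ and $b$ in distinct blocks — the seed being chosen according to how $a$ and $b$ differ, either in their definedness $\alpha\vee\neg\alpha$ or, when both are defined, through a symmetric-difference element — and passes to a maximal one; the identities recorded above are then used to verify that a maximal such assignment is total and meets all the closure requirements, so it yields the desired $\varphi$. Given the resulting embedding $M\hookrightarrow\mathbb{3}^{X}$, the displayed consequence is immediate. And if $M$ is moreover subdirectly irreducible, then, letting $\theta_{x}$ be the kernel of the $x$-th coordinate projection restricted to $M$, we have $\bigcap_{x}\theta_{x}=\Delta_{M}$ while every nontrivial congruence of $M$ contains the monolith; hence some $\theta_{x_{0}}=\Delta_{M}$, so $M$ is isomorphic to a subalgebra of $\mathbb{3}$, and being nontrivial it is $\mathbb{2}$ or $\mathbb{3}$. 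Together with the first paragraph this identifies $\mathbb{2}$ and $\mathbb{3}$ as precisely the subdirectly irreducible $C$-algebras.

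The main obstacle is the separation step itself — producing enough homomorphisms into $\mathbb{3}$. Because the target carries a negation-fixed sink $U$ that is absorbing for both operations, a separating partition must behave at once like a prime filter on the Boolean-like ``defined'' part and like an absorbing ideal that tracks undefinedness, and one has to check that a Zorn-maximal partial assignment respects all of \eqref{C1}--\eqref{C7} simultaneously. The delicate points are the genuinely non-commutative axiom \eqref{C5} and the quasi-commutativity axiom \eqref{C7}, which are exactly the features distinguishing McCarthy's logic from the Boolean and Kleene calculi and which prevent the classical Boolean prime-filter argument from transferring verbatim.
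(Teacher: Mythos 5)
The paper does not prove this statement at all: it is quoted verbatim from Guzm\'{a}n and Squier \cite{guzman90}, so there is no in-paper argument to compare yours against. Judged on its own terms, your proposal has the right global architecture --- the finite checks in your first paragraph are correct (the subalgebras of $\mathbb{3}$ are $\{U\}$, $\{T,F\}$ and $\mathbb{3}$; identifying any two of $T,F,U$ collapses everything, so $\mathbb{2}$ and $\mathbb{3}$ are simple), and the standard deduction that a subdirectly irreducible algebra embedded in $\mathbb{3}^{X}$ must factor through one coordinate is fine. But the entire mathematical content of the theorem lives in your separation step, and that step is only described, not carried out. You say that a Zorn-maximal partial assignment ``is total and meets all the closure requirements,'' but this is precisely the assertion that needs proof, and you yourself flag it as the main obstacle. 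Neither the choice of seed (how exactly the ``symmetric-difference element'' separates $a$ from $b$), nor the extension step (why a maximal consistent partial partition must already decide every element of $M$), nor the verification that the resulting map respects \eqref{C5} and \eqref{C7} is given. As submitted, the proposal establishes only the easy half.

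There is also a concrete error in your characterization of homomorphisms $\varphi\colon M\to\mathbb{3}$, which would derail the Zorn argument if you tried to execute it. You require $Z=\varphi^{-1}(U)$ to be ``absorbing for $\wedge$ and $\vee$ on either side, reflecting $U\wedge x=U=x\wedge U$.'' But in $\mathbb{3}$ one has $F\wedge U=F$ and $T\vee U=T$: the element $U$ is a \emph{left} zero for $\wedge$ and $\vee$ but not a right zero --- this one-sidedness is exactly the sequential left-to-right evaluation that you correctly identify in your last paragraph as the distinguishing feature of McCarthy's logic. The correct closure conditions on the ordered partition $(P,N,Z)$ are asymmetric: $\alpha\in Z$ forces $\alpha\wedge\beta\in Z$ and $\alpha\vee\beta\in Z$ for every $\beta$, while $\alpha\in N$ forces $\alpha\wedge\beta\in N$ and $\alpha\in P$ forces $\alpha\wedge\beta$ to lie in the same block as $\beta$ (dually for $\vee$). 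Since your notion of ``partial assignment admitting an extension'' is defined relative to these closure conditions, getting them wrong means the family you apply Zorn's lemma to is not the right one. To repair the proposal you would need to restate the kernel conditions correctly and then actually prove the extension lemma; alternatively, simply cite Guzm\'{a}n--Squier as the paper does.
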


This tells us that for any set $X$, $\mathbb{3}^{X}$ is a $C$-algebra with the operations defined point-wise. In fact, Guzm\'{a}n and Squier showed that elements of $\mathbb{3}^{X}$ along with the $C$-algebra operations may be viewed in terms of \emph{pairs of sets}. This is a pair $(A, B)$ where $A, B \subseteq X$ and $A \cap B = \emptyset$. Akin to the well-known correlation between $\mathbb{2}^{X}$ and the power set of $X$, $\powerset(X)$, for any element $\alpha \in \mathbb{3}^{X}$, associate the pair of sets $(A, B)$ where $A = \{ x \in X : \alpha(x) = T \}$ and $B = \{ x \in X : \alpha(x) = F \}$. Conversely, for any pair of sets $(A, B)$ where $A, B \subseteq X$ and $A \cap B = \emptyset$ associate the function $\alpha$ where $\alpha(x) = T$ if $x \in A$, $\alpha(x) = F$ if $x \in B$ and $\alpha(x) = U$ otherwise. With this correlation, the operations can be expressed as follows:

\begin{align*}
 \neg (A_{1}, A_{2}) & = (A_{2}, A_{1}) \\
 (A_{1}, A_{2}) \wedge (B_{1}, B_{2}) & = (A_{1} \cap B_{1}, A_{2} \cup (A_{1} \cap B_{2})) \\
 (A_{1}, A_{2}) \vee (B_{1}, B_{2}) & = ((A_{1} \cup (A_{2} \cap B_{1}), A_{2} \cap B_{2}) \\
\end{align*}

\begin{remark}
 Considering a $C$-algebra $M$ as a subalgebra of $\mathbb{3}^{X}$, one may observe that $M_{\#} = \{ \alpha \in M : \alpha \vee \neg \alpha = T \}$ forms a Boolean algebra under the induced operations.
\end{remark}

\begin{notation}
 A $C$-algebra with $T, F, U$ is meant a $C$-algebra with nullary operations $T, F, U$, where $T$ is the (unique) left-identity (and right-identity) for $\wedge$, $F$ is the (unique) left-identity (and right-identity) for $\vee$ and $U$ is the (unique) fixed point for $\neg$. Note that $U$ is also a left-zero for both $\wedge$ and $\vee$ while $F$ is a left-zero for $\wedge$.
\end{notation}

There is an important subclass of the variety of $C$-algebras. In \cite{manes93} Manes introduced the notion of algebra of disjoint alternatives, in short \emph{ada}, which is a $C$-algebra equipped with an oracle for the halting problem. He showed that the category of adas are equivalent to that of Boolean algebras. The $C$-algebra $\mathbb{3}$ is not functionally-complete. However, $\mathbb{3}$ is functionally-complete when treated as an ada. In fact, the variety of adas is generated by the ada $\mathbb{3}$.

\begin{definition}
 An \emph{ada} is a $C$-algebra $M$ with $T, F, U$ equipped with an additional unary operation $(\text{ })^{\downarrow}$ subject to the following equations for all $\alpha, \beta \in M$:
 \begin{align}
  F^{\downarrow} & = F \label{A1} \\
  U^{\downarrow} & = F \label{A2} \\
  T^{\downarrow} & = T \label{A3} \\
  \alpha \wedge \beta^{\downarrow} & = \alpha \wedge (\alpha \wedge \beta)^{\downarrow} \label{A4} \\
  \alpha^{\downarrow} \vee \neg (\alpha^{\downarrow}) & = T \label{A5} \\
  \alpha & = \alpha^{\downarrow} \vee \alpha \label{A6}
 \end{align}
\end{definition}

\begin{example}
 The three-element $C$-algebra $\mathbb{3}$ with the unary operation $(\text{ })^{\downarrow}$ defined as follows forms an ada.
 \begin{align*}
  T^{\downarrow} & = T \\
  U^{\downarrow} & = F = F^{\downarrow}
 \end{align*}
We shall also use $\mathbb{3}$ to denote this ada. One may easily resolve the notation overloading -- whether $\mathbb{3}$ is a $C$-algebra or an ada --  depending on the context.

\end{example}

In \cite{manes93} Manes showed that the three-element ada $\mathbb{3}$ is the only subdirectly irreducible ada. Thus for any set $X$, $\mathbb{3}^{X}$ is an ada with operations defined pointwise. Note that the three element ada $\mathbb{3}$ is also simple. Manes also showed the following result.

\begin{proposition}[\cite{manes93}]
Let $A$ be an ada. Then $A^{\downarrow} = \{ \alpha^{\downarrow} : \alpha \in A \}$ forms a Boolean algebra under the induced operations.
\end{proposition}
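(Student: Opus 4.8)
The plan is to show that $A^{\downarrow}$ coincides with the set $A_{\#} = \{\alpha \in A : \alpha \vee \neg \alpha = T\}$, which, since $A$ is in particular a $C$-algebra, is a Boolean algebra under the operations induced from $\langle A, \vee, \wedge, \neg, T, F \rangle$ by the earlier remark on $M_{\#}$. Granting the identity $A^{\downarrow} = A_{\#}$, the proposition follows immediately: closure of $A^{\downarrow}$ under $\vee, \wedge, \neg$, the membership of $T$ and $F$, and all the Boolean-algebra identities are inherited from $A_{\#}$, and the induced operations on the two sets are literally the same.

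One inclusion is immediate. Axiom \eqref{A5} states $\alpha^{\downarrow} \vee \neg(\alpha^{\downarrow}) = T$ for every $\alpha \in A$, so $\alpha^{\downarrow} \in A_{\#}$ and hence $A^{\downarrow} \subseteq A_{\#}$. For the reverse inclusion it suffices to prove that $(\text{ })^{\downarrow}$ fixes each element of $A_{\#}$, i.e.\ $\gamma^{\downarrow} = \gamma$ whenever $\gamma \vee \neg \gamma = T$; then $\gamma = \gamma^{\downarrow} \in A^{\downarrow}$.

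To establish $\gamma^{\downarrow} = \gamma$ for $\gamma \in A_{\#}$, first instantiate \eqref{A4} with $\beta := T$: using \eqref{A3} and the fact that $T$ is a two-sided identity for $\wedge$, the two sides collapse to $\gamma = \gamma \wedge \gamma^{\downarrow}$ (an identity which in fact holds for every $\gamma \in A$). On the other hand, rewriting $\gamma$ as $\gamma^{\downarrow} \vee \gamma$ via \eqref{A6} inside the term $\gamma^{\downarrow} \wedge \gamma$ and then applying the distributive law \eqref{C4}, idempotency of $\wedge$, and the absorption law \eqref{C6} yields $\gamma^{\downarrow} \wedge \gamma = \gamma^{\downarrow}$. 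Now $\gamma \in A_{\#}$ by hypothesis and $\gamma^{\downarrow} \in A_{\#}$ by \eqref{A5}; since $A_{\#}$ is a Boolean algebra, $\wedge$ is commutative on it, so $\gamma = \gamma \wedge \gamma^{\downarrow} = \gamma^{\downarrow} \wedge \gamma = \gamma^{\downarrow}$, as required.

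I expect the only genuine obstacle to be this identity $\gamma^{\downarrow} = \gamma$ on $A_{\#}$ — everything else is bookkeeping — and in particular one should confirm that the small derived $C$-algebra facts used above (idempotency of $\wedge$, the absorption laws) are available from \cite{guzman90}. A representation-theoretic alternative for the key step, bypassing those manipulations, is to invoke Manes's theorem that $\mathbb{3}$ is the only subdirectly irreducible ada: embed $A$ into some power $\mathbb{3}^{X}$ as adas, observe that the members of $A_{\#}$ are sent to $\{T, F\}$-valued functions, on which $(\text{ })^{\downarrow}$ acts as the identity, and conclude $\gamma^{\downarrow} = \gamma$ from injectivity of the embedding.
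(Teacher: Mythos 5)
Your proof is correct. The paper gives no argument of its own here---the proposition is quoted from Manes---but your route via the identity $A^{\downarrow} = A_{\#}$ is exactly the content of the remark the authors place immediately after the statement, and each step checks out: $A^{\downarrow} \subseteq A_{\#}$ is \eqref{A5}, the identities $\gamma = \gamma \wedge \gamma^{\downarrow}$ and $\gamma^{\downarrow} \wedge \gamma = \gamma^{\downarrow}$ follow as you describe (idempotency of $\wedge$ and the needed absorption are verifiable in $\mathbb{3}$ and hence hold in all $C$-algebras by \tref{SubdirIrredCAlg}), and commutativity of $\wedge$ on the Boolean algebra $A_{\#}$ closes the loop.
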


\begin{remark}
 In fact, $A^{\downarrow} = A_{\#}$. Also, $A^{\downarrow} = \{ \alpha \in A : \alpha^{\downarrow} = \alpha \}.$
\end{remark}

Further, as outlined in the following remark, Manes established that the category of adas and the category of Boolean algebras are equivalent.

\begin{remark}[\cite{manes93}] \label{RemarkStone}
Let $B$ be a Boolean algebra. In view of Stone's representation of Boolean algebras, suppose $B$ is a subalgebra of $\mathbb{2}^{X}$ for some set $X$. Consider the subalgebra $B^{\star}$ of the ada $\mathbb{3}^{X}$ with the universe $B^{\star} = \{(P, Q) : P \cap Q = \emptyset\}$ given in terms of pairs of subsets of $X$. Note that the map $B \mapsto (B^{\star})_{\#}$ is a Boolean isomorphism. Similarly, for an ada $A$, the map $A \mapsto (A_{\#})^{\star}$ is an ada isomorphism. Hence, the functor based on the aforesaid assignment establishes that the category of adas and the category of Boolean algebras are equivalent.
\end{remark}

\begin{notation}
Let $X$ be a set and $\bot \notin X$. The pointed set $X \cup \{ \bot \}$ with base point $\bot$ is denoted by $X_{\bot}$.
The set of all total functions on $X_{\bot}$ which fix $\bot$ is denoted by $\mathcal{T}_{o}(X_{\bot})$, i.e. $\mathcal{T}_{o}(X_{\bot}) = \{f \in \mathcal{T}(X_{\bot}) \; : \; f(\bot) = \bot\}$.
\end{notation}

\section{$C$-sets} \label{Csets}

In this section we introduce the notion of a $C$-set to study an axiomatization of {\tt if-then-else} that includes the models of possibly non-halting programs and tests. The concept of $C$-sets is an extension of the notion of $B$-sets, wherein the tests are drawn from a $C$-algebra instead of a Boolean algebra, and includes a non-halting or {\tt error} state.

\begin{definition}
 Let $S_{\bot}$ be a pointed set with base point $\bot$ and $M$ be a $C$-algebra with $T, F, U$. The pair $(S_{\bot}, M)$ is called a \emph{$C$-set} if it is equipped with an action \[\_\; [\_\; , \_] : M \times S_{\bot} \times S_{\bot} \rightarrow S_{\bot}\] that satisfies the following axioms for all $\alpha, \beta \in M$ and $s, t, u, v \in S_{\bot}$:
 \begin{align}
  U[s, t] & = \bot \label{EC1} & \text{($U$-axiom)} \\
  F[s, t] & = t \label{EC6} & \text{($F$-axiom)} \\
  (\neg \alpha)[s, t] & = \alpha[t, s] \label{EC5} & \text{($\neg$-axiom)} \\
  \alpha[\alpha[s, t], u] & = \alpha[s, u]  \label{EC3} & \text{(positive redundancy)} \\
  \alpha[s, \alpha[t, u]] & = \alpha[s, u] \label{EC4} & \text{(negative redundancy)} \\
  (\alpha \wedge \beta)[s, t] & = \alpha[\beta[s, t], t] \label{EC7} & \text{($\wedge$-axiom)} \\
  \alpha[\beta[s, t], \beta[u, v]] & = \beta[\alpha[s, u], \alpha[t, v]] \label{EC2} & \text{(premise interchange)} \\
  \alpha[s, t] = \alpha[t, t] & \Rightarrow (\alpha \wedge \beta)[s, t] = (\alpha \wedge \beta)[t, t] \label{EC8} & \text{($\wedge$-compatibility)}
 \end{align}
\end{definition}

\begin{remark}\label{v-axiom}
In view of \eqref{C1} and \eqref{C2} of $C$-algebras and \eqref{EC5} and \eqref{EC7} of $C$-sets, we have the following property in $C$-sets.
\begin{align}
  (\alpha \vee \beta)[s, t] & = \alpha[s, \beta[s, t]] \label{VA} & \text{($\vee$-axiom)}
  \end{align}
\end{remark}

We shall now present the intuition behind the notion of $C$-set and its axioms with respect to program constructs. In order to include the possibility of non-halting tests, we shall assume that the tests form a $C$-algebra. A test diverges at a given input if the output evaluates to $U$, {\tt undefined}. When a test diverges or if the program throws up an {\tt error} or does not halt, we shall say that the program evaluates to $\bot$. Thus a pointed set $S_{\bot}$ models the set of states and base point $\bot$ serves to denote the {\tt error} state.

The $U$-axiom \eqref{EC1} essentially encapsulates the real-world requirement that if a test diverges, the output should fall in the {\tt error} state. The $F$-axiom \eqref{EC6} is natural as when the test is {\tt false}, the {\tt then} part of the {\tt if-then-else} construct is executed. The $\neg$-axiom \eqref{EC5} simply states that executing {\tt if} \emph{not $P$} {\tt then} $f$ {\tt else} $g$ is the same as executing {\tt if} \emph{$P$} {\tt then} $g$ {\tt else} $f$. The axioms of positive redundancy and negative redundancy \eqref{EC3} and \eqref{EC4} encapsulate the \emph{cascading} nature of {\tt if-then-else}. The $\wedge$-axiom \eqref{EC7} states that evaluating test $P$ \emph{AND} $Q$ and then executing $f$ else $g$ works in exactly the same way as evaluating $P$ first, which if {\tt true}, executing {\tt if} $Q$ {\tt then} $f$ {\tt else} $g$, and if {\tt false} then simply executing $g$. The axiom of premise interchange \eqref{EC2} serves as a \emph{switching} law. This states that the behaviour of the program where $P$ is evaluated first and $Q$ is a test situated within both the branches of the main {\tt if-then-else}, is exactly the same as evaluating $Q$ first with $P$ in each branch, on suitably interchanging the programs situated at the leaves. The last axiom of $\wedge$-compatibility \eqref{EC8} loosely means that if $f$ and $g$ agree with regards to some domain, then they will agree on any subdomain.

\begin{example} \label{e-m-m}
Let $M$ be a $C$-algebra with $T, F, U$. By treating $M$ as a pointed set with base point $U$, the pair $(M, M)$ is a $C$-set under the following action for all $\alpha, \beta, \gamma \in M$:
 $$\alpha[\beta, \gamma] = (\alpha \wedge \beta) \vee (\neg \alpha \wedge \gamma).$$
Hereafter, the action of the $C$-set $(M, M)$ will be denoted by double brackets $\_\; \llbracket \_ \;, \_ \rrbracket$. For verification of the axioms \eqref{EC1} -- \eqref{EC8} refer to Appendix \ref{VerCAlgCset}.
\end{example}

We now present the motivating example of $C$-sets. Since the natural models of possibly non-halting programs are partial functions, we consider the model $\mathcal{T}_{o}(X_{\bot})$ in view of the following one-to-one correspondence between  $\mathcal{T}_{o}(X_{\bot})$ and the set of partial functions on a set $X$. Each partial function $f$ on $X$ is represented by the total function $f' \in \mathcal{T}_{o}(X_{\bot})$ where $f'(x) = f(x)$ when $x$ is in the domain of $f$, and maps to $\bot$ otherwise. Conversely, each $g \in \mathcal{T}_{o}(X_{\bot})$ is represented by the partial function $g''$ over $X$ where $g''(x) = g(x)$ when $x \in X$ and $g(x) \neq \bot$, and is not defined elsewhere. The model $\mathcal{T}_{o}(X_{\bot})$ can be seen as a $C$-set under the action of the $C$-algebra $\mathbb{3}^{X}$ as per the following example.

\begin{example} \label{ExampleFunctionalCset}
Consider $\mathcal{T}_{o}(X_{\bot})$ as a pointed set with base point $\zeta_{\bot}$, the constant function taking the value $\bot$. The pair $\big( \mathcal{T}_{o}(X_{\bot}), \mathbb{3}^{X}\big)$ is a $C$-set with the following action for all $f, g \in \mathcal{T}_{o}(X_{\bot})$ and $\alpha \in \mathbb{3}^{X}$:

\begin{equation} \label{FunctionalAction}
 \alpha[f, g](x) =
 \begin{cases}
  f(x), & \text{ if } \alpha(x) = T; \\
  g(x), & \text{ if } \alpha(x) = F; \\
  \bot, & \text{ otherwise. }
 \end{cases}
\end{equation}
Note that the execution of the first two cases, $\alpha(x) \in \{ T, F \}$ demands that $x \in X$ as $\alpha \in \mathbb{3}^{X}$. Using the pairs of sets representation of $\mathbb{3}^{X}$, one may routinely verify that the pair $\big( \mathcal{T}_{o}(X_{\bot}), \mathbb{3}^{X}\big)$ satisfies the axioms \eqref{EC1} -- \eqref{EC8}. These $C$-sets will be called \emph{functional $C$-sets}. For verification of the axioms \eqref{EC1} -- \eqref{EC8} refer to Appendix \ref{VerFunctionalCset}.
\end{example}

\begin{example}
Consider $S_{\bot}^{X}$, the set of all functions from $X$ to $S_\bot$, as a pointed set with base point $\zeta_{\bot}$. The pair $\big(S_{\bot}^{X}, \mathbb{3}^{X}\big)$ is a $C$-set under the action given in \eqref{FunctionalAction}, where $f, g \in S_{\bot}^{X}$ and $\alpha \in \mathbb{3}^{X}$. The axioms \eqref{EC1} -- \eqref{EC8} can be verified on the same lines as in \eref{ExampleFunctionalCset}.
\end{example}

\begin{example}\label{tot-c-set}
Consider $\mathcal{T}(X_{\bot})$, the set of all total functions on $X_\bot$, as a pointed set with base point $\zeta_{\bot}$. The pair $\big(\mathcal{T}(X_{\bot}), \mathbb{3}^{X}\big)$ is a $C$-set under the action given in \eqref{FunctionalAction}, where $f, g \in \mathcal{T}(X_{\bot})$ and $\alpha \in \mathbb{3}^{X}$. The axioms \eqref{EC1} -- \eqref{EC8} can be verified on the same lines as in \eref{ExampleFunctionalCset}.
\end{example}

We believe that the $C$-set given in \eref{tot-c-set} does not occur naturally in the context of programs as this would include elements that terminate even when the input diverges, i.e. the input is $\bot$.

We now present a fundamental example of a $C$-set, where we only consider the basic tests, {\tt true, false, undefined}.

\begin{example} \label{ExampleBasicCset}
Let $S_{\bot}$ be a pointed set with base point $\bot$. The pair $(S_{\bot}, \mathbb{3})$ is a $C$-set with respect to the following action for all $a, b \in S_\bot$ and $\alpha \in \mathbb{3}$:
 \begin{equation*}
  \alpha[a, b] =
  \begin{cases}
   a, & \text{ if } \alpha = T; \\
   b, & \text{ if } \alpha = F; \\
   \bot, & \text{ if } \alpha = U.
  \end{cases}
 \end{equation*}
 These $C$-sets are called \emph{basic $C$-sets}. For verification of the axioms \eqref{EC1} -- \eqref{EC8} refer to Appendix \ref{VerBasicCset}.
\end{example}

Henceforth, unless explicitly mentioned otherwise, an arbitrary $C$-set is always denoted by $(S_{\bot}, M)$. In the remainder of this section, we shall prove certain properties of $C$-sets.

\begin{proposition} \label{PropCset}
The following statements hold for all $\alpha, \beta \in M$ and $s, t, r \in S_{\bot}$:
 \begin{enumerate}[\rm(i)]
  \item $\alpha[\bot, \bot] = \bot$.
  \item If $\alpha[s, u] = \alpha[t, q]$ for some $u, q \in S_{\bot}$ then $\alpha[s, v] = \alpha[t, v]$ for all $v \in S_{\bot}$.
  \item If $\alpha[s, u] = \alpha[r, r]$ for some $u \in S_{\bot}$ then $\alpha[s, r] = \alpha[r, r]$.
  \item If $\alpha[s, u] = \alpha[t, u]$ for some $u \in S_{\bot}$ then $\alpha[s, v] = \alpha[t, v]$ for all $v \in S_{\bot}$.
  \item If $\alpha[s, t] = \alpha[t, t]$ then $(\beta \wedge \alpha)[s, t] = (\beta \wedge \alpha)[t, t]$.
 \end{enumerate}
\end{proposition}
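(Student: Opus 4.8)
The plan is to derive all five statements directly from the $C$-set axioms, with (ii) doing the bulk of the work and (iii), (iv) falling out as immediate specializations; no induction or representation theorem is needed, and the whole proposition is really a matter of isolating manipulations of the redundancy and conjunction axioms that will be reused later.

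For (i), I would exploit premise interchange \eqref{EC2} with the second test set to $U$. Instantiating $\beta := U$ in $\alpha[\beta[s,t],\beta[u,v]] = \beta[\alpha[s,u],\alpha[t,v]]$ and applying the $U$-axiom \eqref{EC1} to each side collapses the left-hand side to $\alpha[\bot,\bot]$ and the right-hand side to $\bot$, giving $\alpha[\bot,\bot] = \bot$. This is the one place where a slightly nonobvious choice is required, and even it is mild. For (ii), I would use positive redundancy \eqref{EC3} twice: reading it as $\alpha[\alpha[x,y],v] = \alpha[x,v]$, apply it with $(x,y) = (s,u)$ and with $(x,y) = (t,q)$ to obtain $\alpha[s,v] = \alpha[\alpha[s,u],v]$ and $\alpha[t,v] = \alpha[\alpha[t,q],v]$ for arbitrary $v \in S_\bot$; the hypothesis $\alpha[s,u] = \alpha[t,q]$ then equates the two right-hand sides, so $\alpha[s,v] = \alpha[t,v]$ for all $v$.

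Statements (iii) and (iv) are then corollaries of (ii): (iv) is the case $q = u$, and (iii) is the case $t = r$, $q = r$ followed by taking $v = r$. For (v), I would observe that the $\wedge$-compatibility axiom \eqref{EC8} is not even needed — the $\wedge$-axiom \eqref{EC7} alone suffices, since $(\beta \wedge \alpha)[s,t] = \beta[\alpha[s,t],t]$ and $(\beta \wedge \alpha)[t,t] = \beta[\alpha[t,t],t]$, and these two coincide under the hypothesis $\alpha[s,t] = \alpha[t,t]$. I do not anticipate any genuine obstacle here; the only thing to be careful about is that $C$-algebras are noncommutative, so (v) (with $\beta \wedge \alpha$) is a genuinely different statement from \eqref{EC8} (with $\alpha \wedge \beta$) and must be obtained via \eqref{EC7} rather than by quoting \eqref{EC8}.
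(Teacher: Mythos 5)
Your proposal is correct and matches the paper's own proof essentially step for step: (i) via premise interchange with $U$ plus the $U$-axiom, (ii) by applying positive redundancy to both sides, (iii) and (iv) as specializations of (ii), and (v) directly from the $\wedge$-axiom without invoking $\wedge$-compatibility. No gaps.
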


\begin{proof}$\;$
 \begin{enumerate}[(i)]
  \item Using \eqref{EC1} and \eqref{EC2}, $\alpha[\bot, \bot] = \alpha[U[\bot, \bot], U[\bot, \bot]] = U[\alpha[\bot, \bot], \alpha[\bot, \bot]] = \bot$.
  \item Using \eqref{EC3}, $\alpha[s, v] = \alpha[ \alpha[s, u], v] = \alpha[ \alpha[t, q], v] = \alpha[t, v]$.
  \item Using \pref{PropCset}(ii), putting $t = q = v = r$, $\alpha[s, r] = \alpha[r, r]$.
  \item Using \pref{PropCset}(ii), putting $q = u$, $\alpha[s, v] = \alpha[t, v]$.
  \item Using \eqref{EC7}, $(\beta \wedge \alpha)[s, t] = \beta[\alpha[s, t], t] = \beta[\alpha[t, t], t] = (\beta \wedge \alpha)[t, t]$.
 \end{enumerate}
\end{proof}

\begin{remark}$\;$
 \begin{enumerate}[(i)]
 \item The $C$-set axioms from \eqref{EC6} to \eqref{EC7} are the same as the ones in the definition of $B$-set. In view of \eqref{EC1}, the only $B$-set axiom that does not carry over in the context of $C$-sets is \eqref{B1}.
  \item It is routine to verify that the axiom of premise interchange \eqref{EC2} holds in a basic $B$-set. Hence, in view of \tref{BsetSubdir}, axiom \eqref{EC2} holds in all $B$-sets.
  \item Following the proof given in \pref{PropCset}(v) and using the commutativity of $\wedge$ in the context of $B$-sets, it can be observed that the axiom of $\wedge$-compatibility \eqref{EC8} holds in $B$-sets.
 \end{enumerate}
\end{remark}

\begin{proposition} \label{PropBset}
 For each $\alpha \in M_{\#}$ and $s \in S_{\bot}$, we have $\displaystyle \alpha[s, s] = s$.
\end{proposition}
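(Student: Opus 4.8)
The statement is the $C$-set counterpart of the $B$-set axiom \eqref{B1}, but restricted to the ``Boolean'' elements of $M$, i.e. those $\alpha$ with $\alpha \vee \neg\alpha = T$. Since a general $\alpha \in M$ may evaluate to $U$ somewhere, \eqref{B1} fails outright in $C$-sets (e.g. $U[s,s] = \bot \neq s$ by \eqref{EC1}), so the restriction to $M_{\#}$ is exactly what is needed to recover it.

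The plan is to reduce the claim to the $\vee$-axiom \eqref{VA} of Remark \ref{v-axiom}. First I would record the auxiliary fact that $T[s,t] = s$ for all $s,t \in S_{\bot}$: in a $C$-algebra with $T,F,U$ one has $\neg F = T$ (because $\neg$ is an involution and, by \eqref{C2} applied to $F \vee \alpha = \alpha$, the element $\neg F$ is forced to be a left identity for $\wedge$, hence equals $T$ by uniqueness), so $T[s,t] = (\neg F)[s,t] \overset{\eqref{EC5}}{=} F[t,s] \overset{\eqref{EC6}}{=} s$. Then, for $\alpha \in M_{\#}$ we have $\alpha \vee \neg\alpha = T$, and computing
\[
 s = T[s,s] = (\alpha \vee \neg\alpha)[s,s] \overset{\eqref{VA}}{=} \alpha\big[s,\ (\neg\alpha)[s,s]\big] \overset{\eqref{EC5}}{=} \alpha\big[s,\ \alpha[s,s]\big] \overset{\eqref{EC4}}{=} \alpha[s,s],
\]
where the last step is negative redundancy with $t = u = s$. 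This gives $\alpha[s,s] = s$, as desired.

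There is essentially no hard step here; the only thing to be careful about is the identity $\neg F = T$, which rests solely on the conventions for $T, F, U$ in a $C$-algebra together with \eqref{C1}--\eqref{C2}, and the correct bookkeeping of the argument order in \eqref{EC5} when substituting $t = s$. Everything else is a direct chain of substitutions into the $C$-set axioms.
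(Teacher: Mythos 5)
Your proof is correct and follows essentially the same route as the paper: both derive $s = T[s,s]$ from \eqref{EC5} and \eqref{EC6} (via $T=\neg F$), rewrite $T$ as $\alpha\vee\neg\alpha$ using $\alpha\in M_{\#}$, and then apply \eqref{VA}, \eqref{EC5} and negative redundancy \eqref{EC4} to collapse the expression to $\alpha[s,s]$. The only difference is that you spell out the justification of $\neg F = T$, which the paper leaves implicit.
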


\begin{proof}
 Let $\alpha \in M_{\#}$ and $s \in S_{\bot}$.
 \begin{align*}
  s & = T[s, s] & \text{ from } \eqref{EC5}, \eqref{EC6} \\
    & = (\alpha \vee (\neg \alpha))[s, s] & \text{ since } \alpha \in M_{\#} \\
    & = \alpha[s, (\neg \alpha)[s, s]] & \text{ from } \eqref{VA}\\
    & = \alpha[s, \alpha[s, s]] & \text{ from } \eqref{EC5} \\
    & = \alpha[s, s] & \text{ from } \eqref{EC4}
 \end{align*}
\end{proof}

In view of \pref{PropBset}, the axiom \eqref{B1} of $B$-sets holds for the elements of Boolean algebra $M_{\#}$. Hence, we have the following corollary.

\begin{corollary} \label{CorBset}
The pair $(S_{\bot}, M_{\#})$ is a $B$-set.
\end{corollary}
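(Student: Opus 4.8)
The plan is to verify that the pair $(S_{\bot}, M_{\#})$, with the action inherited by restricting the $C$-set action of $(S_{\bot}, M)$ to tests drawn from the Boolean subalgebra $M_{\#}$, satisfies the six $B$-set axioms \eqref{B1}--\eqref{B6}. The key observation is that the operations $\vee, \wedge, \neg$ on $M_{\#}$ are exactly the induced operations (by the Remark following \tref{SubdirIrredCAlg}), and the Boolean constants $T, F$ of $M_{\#}$ coincide with the corresponding constants $T, F$ of $M$; so each $B$-set axiom is simply a restriction of a known $C$-set fact or an already-proved proposition.

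Concretely, I would proceed axiom by axiom. Axiom \eqref{B1}, $\alpha[a,a] = a$, is precisely \pref{PropBset}, which gives $\alpha[s,s] = s$ for every $\alpha \in M_{\#}$. Axiom \eqref{B2}, $\alpha[\alpha[a,b],c] = \alpha[a,c]$, is the positive redundancy axiom \eqref{EC3} of $C$-sets, which holds for all $\alpha \in M$ and in particular for $\alpha \in M_{\#}$. Likewise axiom \eqref{B3}, $\alpha[a,\alpha[b,c]] = \alpha[a,c]$, is the negative redundancy axiom \eqref{EC4}. Axiom \eqref{B4}, $F[a,b] = b$, is the $F$-axiom \eqref{EC6}. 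Axiom \eqref{B5}, $\neg\alpha[a,b] = \alpha[b,a]$, is the $\neg$-axiom \eqref{EC5}. Finally axiom \eqref{B6}, $(\alpha \wedge \beta)[a,b] = \alpha[\beta[a,b],b]$, is the $\wedge$-axiom \eqref{EC7}; here one should note that for $\alpha, \beta \in M_{\#}$ the meet $\alpha \wedge \beta$ computed in $M_{\#}$ agrees with the meet computed in $M$ (since $M_{\#}$ carries the induced operations), so the right-hand side remains meaningful within the restricted action.

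I do not anticipate a genuine obstacle: the content is entirely bookkeeping, the only subtlety being to point out explicitly that $M_{\#}$ is closed under the $C$-algebra operations (Remark following \tref{SubdirIrredCAlg}) so that the restricted action is well defined and the constants and meets match up. The single step that carries real mathematical weight is \pref{PropBset}, which supplies the one $B$-set axiom (\eqref{B1}) that fails for general tests in a $C$-set — and that has already been established above — so the corollary follows immediately.
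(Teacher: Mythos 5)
Your proposal is correct and matches the paper's reasoning: the paper notes (in the remark preceding \pref{PropBset}) that the $C$-set axioms \eqref{EC6}--\eqref{EC7} already coincide with the $B$-set axioms \eqref{B2}--\eqref{B6}, so the only missing axiom is \eqref{B1}, which \pref{PropBset} supplies for $\alpha \in M_{\#}$. Your axiom-by-axiom check, including the observation that $M_{\#}$ carries the induced operations, is exactly this argument spelled out.
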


\begin{remark}
The proof of \pref{PropBset} also shows us that axiom \eqref{B1} is redundant in the definition of a $B$-set.
\end{remark}

\section{Representation of $C$-sets} \label{RepresentationCset}

With the aim of studying structural properties of $C$-sets, in this section we obtain a subdirect representation of $C$-sets in which the $C$-algebras are adas. Note that, except in \eref{e-m-m}, the $C$-algebras in all other examples of $C$-sets given in \sref{Csets} are adas.

Let $(S_\bot, M)$ be a $C$-set, where $M$ is an ada. In the main theorem (\tref{CsetSubdir}) of this section, we obtain a subdirect representation of $(S_\bot, M)$ through various results presented hereafter. In those results, we consistently use $\alpha, \beta, \gamma$ for the elements of $M$, and $r, s, t, u, v$ for the elements of $S_\bot$.

\begin{proposition} \label{PropCsetarrow}
 If $\alpha[s, t] = \alpha[t, t]$, then $\alpha^{\downarrow}[s, t] = \alpha^{\downarrow}[t, t]$.
\end{proposition}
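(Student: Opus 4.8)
The plan is to manufacture the hypothesis of $\wedge$-compatibility \eqref{EC8} with a carefully chosen $\beta$ so that $(\alpha\wedge\beta)[s,t]$ collapses to $\alpha^{\downarrow}[s,t]$. The natural candidate is $\beta = \alpha^{\downarrow}\vee\neg\alpha$: indeed, using the $\wedge$-axiom \eqref{EC7} and the $\vee$-axiom \eqref{VA}, one computes $(\alpha\wedge(\alpha^{\downarrow}\vee\neg\alpha))[s,t] = \alpha[(\alpha^{\downarrow}\vee\neg\alpha)[s,t],t] = \alpha[\alpha^{\downarrow}[s,\neg\alpha[s,t]],t] = \alpha[\alpha^{\downarrow}[s,\alpha[t,s]],t]$, and since $\alpha\wedge\alpha^{\downarrow} = \alpha^{\downarrow}$ holds in any ada (this is a consequence of \eqref{A6}, together with \eqref{A4} and \eqref{C6}), an analogous reduction should bring $(\alpha\wedge(\alpha^{\downarrow}\vee\neg\alpha))[s,t]$ to $\alpha^{\downarrow}[s,t]$. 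So first I would establish, purely in the ada $M$, the identity $\alpha\wedge(\alpha^{\downarrow}\vee\neg\alpha) = \alpha^{\downarrow}$ (and likewise with the arguments of the action instantiated), using the $C$-algebra and ada axioms \eqref{C1}--\eqref{C7} and \eqref{A1}--\eqref{A6}.

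Granting that, the argument runs as follows. From the hypothesis $\alpha[s,t] = \alpha[t,t]$, apply \eqref{EC8} with $\beta := \alpha^{\downarrow}\vee\neg\alpha$ to obtain $(\alpha\wedge(\alpha^{\downarrow}\vee\neg\alpha))[s,t] = (\alpha\wedge(\alpha^{\downarrow}\vee\neg\alpha))[t,t]$. By the ada identity above, the left side equals $\alpha^{\downarrow}[s,t]$ and the right side equals $\alpha^{\downarrow}[t,t]$, which is exactly the desired conclusion. The only subtlety is that to rewrite $(\alpha\wedge(\alpha^{\downarrow}\vee\neg\alpha))[s,t]$ as $\alpha^{\downarrow}[s,t]$ I must know the $C$-set action respects the $C$-algebra identity $\alpha\wedge(\alpha^{\downarrow}\vee\neg\alpha) = \alpha^{\downarrow}$ — but this is immediate, since equal elements of $M$ give equal actions.

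If the direct reduction of $(\alpha\wedge(\alpha^{\downarrow}\vee\neg\alpha))[s,t]$ inside the $C$-set turns out to be cleaner than working in $M$, an alternative is to expand it via \eqref{EC7}, \eqref{VA}, \eqref{EC5}, and the redundancy axioms \eqref{EC3}, \eqref{EC4}: $(\alpha\wedge(\alpha^{\downarrow}\vee\neg\alpha))[s,t] = \alpha[\alpha^{\downarrow}[s,\alpha[t,s]],t]$, then use $\alpha\wedge\alpha^{\downarrow}=\alpha^{\downarrow}$ (so $\alpha[\alpha^{\downarrow}[x,y],z] = \alpha^{\downarrow}[\alpha[x,z],\,\ldots]$ via the $\wedge$-axiom read in the right order) to peel off the outer $\alpha$ and land on $\alpha^{\downarrow}[s,t]$. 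Either way, the main obstacle is the bookkeeping in verifying the ada identity $\alpha\wedge(\alpha^{\downarrow}\vee\neg\alpha)=\alpha^{\downarrow}$ and threading it correctly through the non-commutative $\wedge$-axiom \eqref{EC7}; once that identity is in hand, the proposition is a one-line application of $\wedge$-compatibility.
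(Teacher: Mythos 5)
Your proof has a fatal gap: the key identity you rely on is false. In the ada $\mathbb{3}$, take $\alpha = U$. Then $\alpha^{\downarrow}\vee\neg\alpha = F\vee U = U$, so $\alpha\wedge(\alpha^{\downarrow}\vee\neg\alpha) = U\wedge U = U$, whereas $\alpha^{\downarrow} = U^{\downarrow} = F$. The auxiliary claim $\alpha\wedge\alpha^{\downarrow}=\alpha^{\downarrow}$ fails for the same reason: $U\wedge U^{\downarrow} = U\wedge F = U\neq F$ (remember $\wedge$ is not commutative in a $C$-algebra, and $U$ is a \emph{left}-zero for $\wedge$). Since $\mathbb{3}$ generates the variety of adas, neither equation is an ada identity. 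The obstruction is structural, not a matter of bookkeeping: because $U\wedge\beta = U$ for every $\beta$, no term of the form $\alpha\wedge\beta$ can coincide with $\alpha^{\downarrow}$, so $\wedge$-compatibility \eqref{EC8} — which only yields conclusions about conjunctions with $\alpha$ on the \emph{left} — can never reach $\alpha^{\downarrow}$. (Note also that the case $\alpha = U$ is exactly where the proposition has content beyond the hypothesis: the hypothesis $U[s,t]=U[t,t]$ is vacuous, yet the conclusion asserts $F[s,t]=F[t,t]$.)

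Your strategy can be salvaged by flipping the conjunction. The identity $\alpha^{\downarrow}\wedge\alpha = \alpha^{\downarrow}$ \emph{does} hold in every ada (check it in $\mathbb{3}$: for $\alpha=U$ one gets $F\wedge U = F$ since $F$ is a left-zero for $\wedge$), and \pref{PropCset}(v) is precisely the left-handed compatibility statement: from $\alpha[s,t]=\alpha[t,t]$ it gives $(\beta\wedge\alpha)[s,t]=(\beta\wedge\alpha)[t,t]$ for any $\beta$. Taking $\beta=\alpha^{\downarrow}$ and rewriting $\alpha^{\downarrow}\wedge\alpha$ as $\alpha^{\downarrow}$ yields the proposition in one line. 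This is a genuinely different (and arguably shorter) route than the paper's, which instead uses \eqref{A6} to write $\alpha = \alpha^{\downarrow}\vee\alpha$, expands both sides of the hypothesis via the $\vee$-axiom \eqref{VA} to get $\alpha^{\downarrow}[s,\alpha[t,t]] = \alpha^{\downarrow}[t,\alpha[t,t]]$, and then cancels the common second argument using \pref{PropCset}(iv). As written, however, your argument does not go through.
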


\begin{proof}
Using \eqref{A6} and \eqref{VA}, $\alpha[s, t] = (\alpha^{\downarrow} \vee \alpha)[s, t] = \alpha^{\downarrow}[s, \alpha[s, t]] = \alpha^{\downarrow}[s, \alpha[t, t]]$. On the other hand, observe that $\alpha[s, t] = \alpha[t, t] = (\alpha^{\downarrow} \vee \alpha)[t, t] = \alpha^{\downarrow}[t, \alpha[t, t]]$ so that $\alpha^{\downarrow}[s, \alpha[t, t]] = \alpha^{\downarrow}[t, \alpha[t, t]]$.  Consequently, we have  $\alpha^{\downarrow}[s, t] = \alpha^{\downarrow}[t, t]$ by \pref{PropCset}(iv).
\end{proof}

Considering the $C$-set as a two-sorted algebra, in the following, we define its congruence.

\begin{definition}
A \emph{congruence} of a $C$-set is a pair $(\sigma, \tau)$, where $\sigma$ is an equivalence relation on $S_{\bot}$ and $\tau$ is a congruence on the ada $M$ such that
$$(s, t), (u, v) \in \sigma \; \text{ and }\; (\alpha, \beta) \in \tau \Rightarrow (\alpha[s, u], \beta[t, v]) \in \sigma.$$
\end{definition}

\begin{notation}
Under an equivalence relation $\sigma$ on a set $A$, the equivalence class of an element $p \in A$ will be denoted by $\overline{p}^{{\sigma}}$. Within a given context, if there is no ambiguity, we may simply denote the equivalence class by $\overline{p}$.
\end{notation}

In order to give a subdirect representation of the $C$-set $(S_\bot, M)$, we shall consider the collection of all maximal congruences on the ada $M$ so that for each such congruence $\theta$, we have $M/\theta \cong \mathbb{3}$. We shall produce an equivalence relation $E_\theta$ on $S_\bot$ such that $(E_\theta, \theta)$ is a congruence on $(S_{\bot}, M)$ for each $\theta$, and the intersection of the collection of congruences $(E_\theta, \theta)$ is trivial. Thus, $(S_\bot, M)$ is a subdirect product of basic $C$-sets $(S_\bot/E_\theta, M/\theta)$.

\begin{definition} \label{DefinitionETheta}
 For each maximal congruence $\theta$ on $M$, we define a relation on $S_{\bot}$ by
 $$E_{\theta} = \{ (s, t) \in S_{\bot} \times S_{\bot} : \;  \beta[s, t] = \beta[t, t]\; \text{ for some } \beta \in \overline{T}^{\theta}\}.$$
\end{definition}

\begin{lemma} \label{LemmaEquiv}
 The relation $E_{\theta}$ is an equivalence on $S_{\bot}$.
\end{lemma}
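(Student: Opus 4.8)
The plan is to establish reflexivity, symmetry and transitivity of $E_{\theta}$ in turn, after one preparatory reduction that lets me always take the witnessing test $\beta$ in the Boolean part $M_{\#}$. Reflexivity is immediate: $T\in\overline{T}^{\theta}$ and the defining equation $T[s,s]=T[s,s]$ holds trivially, so $(s,s)\in E_{\theta}$ for every $s\in S_{\bot}$. The preparatory reduction — which I expect to be the real crux of the argument — is the claim that if $(s,t)\in E_{\theta}$ is witnessed by some $\beta\in\overline{T}^{\theta}$, then it is also witnessed by $\beta^{\downarrow}$. Here \pref{PropCsetarrow} gives $\beta^{\downarrow}[s,t]=\beta^{\downarrow}[t,t]$; since $\theta$ is a congruence on the ada $M$ and $T^{\downarrow}=T$ by \eqref{A3}, we get $\beta^{\downarrow}\mathbin{\theta}T$, i.e.\ $\beta^{\downarrow}\in\overline{T}^{\theta}$; and $\beta^{\downarrow}\in M^{\downarrow}=M_{\#}$. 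So from now on I may assume every witness lies in $M_{\#}$, in which case \pref{PropBset} collapses the defining condition $\beta[s,t]=\beta[t,t]$ to simply $\beta[s,t]=t$, and $(S_{\bot},M_{\#})$ is a $B$-set by \cref{CorBset}.

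For symmetry, take $\beta\in M_{\#}\cap\overline{T}^{\theta}$ with $\beta[s,t]=t$, and show $\beta$ also witnesses $(t,s)$. First, $T=\neg F$ together with \eqref{EC5} and \eqref{EC6} gives $T[s,t]=s$. Then, since $\beta\in M_{\#}$ implies $\neg\beta\vee\beta=T$, the $\vee$-axiom \eqref{VA} yields $s=T[s,t]=(\neg\beta\vee\beta)[s,t]=(\neg\beta)[s,\beta[s,t]]=(\neg\beta)[s,t]=\beta[t,s]$, the last step by \eqref{EC5}; as $\beta[s,s]=s$ by \pref{PropBset}, this is exactly $\beta[t,s]=\beta[s,s]$, so $(t,s)\in E_{\theta}$. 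For transitivity, take $\beta_{1},\beta_{2}\in M_{\#}\cap\overline{T}^{\theta}$ with $\beta_{1}[s,t]=t$ and $\beta_{2}[t,r]=r$, and put $\gamma=\beta_{1}\wedge\beta_{2}$, which again lies in $M_{\#}\cap\overline{T}^{\theta}$. Applying $\wedge$-compatibility \eqref{EC8} to $\beta_{1}[s,t]=\beta_{1}[t,t]$ gives $\gamma[s,t]=\gamma[t,t]=t$, and applying \pref{PropCset}(v) to $\beta_{2}[t,r]=\beta_{2}[r,r]$ gives $\gamma[t,r]=\gamma[r,r]=r$. Then positive redundancy \eqref{EC3} gives $\gamma[s,r]=\gamma[\gamma[s,t],r]=\gamma[t,r]=r=\gamma[r,r]$, so $\gamma$ witnesses $(s,r)\in E_{\theta}$.

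Thus the only nonroutine ingredient is the $M_{\#}$-reduction: without it the raw condition $\beta[s,t]=\beta[t,t]$ is awkward, since both $T[s,t]=s$ and the $B$-set identity $\beta[u,u]=u$ are needed, and it is precisely there that the ada structure on $M$ (the operation $(\;)^{\downarrow}$ and \pref{PropCsetarrow}) is used in an essential way. Everything after that reduction is a two- or three-step computation directly from the $C$-set axioms.
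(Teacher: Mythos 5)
Your proof is correct, but it is organized around an idea the paper does not need for this lemma. Reflexivity is identical. For transitivity your argument is essentially the paper's: the same witness $\alpha\wedge\beta\in\overline{T}^{\theta}$, the same chain through positive redundancy \eqref{EC3}, $\wedge$-compatibility \eqref{EC8} and \pref{PropCset}(v); your normalization $\gamma[t,t]=t$ via \pref{PropBset} is a cosmetic extra. The real divergence is symmetry and the preparatory $M_{\#}$-reduction. The paper proves symmetry in one line directly from the raw condition, with no Boolean witness needed: $\beta[t,s]=\beta[\beta[t,t],s]=\beta[\beta[s,t],s]=\beta[s,s]$, using \eqref{EC3} twice and the hypothesis $\beta[s,t]=\beta[t,t]$. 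So your claim that the raw condition is ``awkward'' and that the ada structure (the operation $(\;)^{\downarrow}$ and \pref{PropCsetarrow}) is used ``in an essential way'' here is not accurate: the paper's proof of this lemma never touches $(\;)^{\downarrow}$ and would go through for any $C$-algebra with $T,F,U$. Your reduction to witnesses in $M_{\#}\cap\overline{T}^{\theta}$ is itself sound (and the underlying idea --- replacing a witness $\beta$ by $\beta^{\downarrow}$ via \pref{PropCsetarrow} --- is exactly what the paper does use later, in the proof of \lref{LemmaIntersec}), but for establishing that $E_{\theta}$ is an equivalence it only adds dependencies on \pref{PropCsetarrow}, \pref{PropBset}, \eqref{VA} and the closure properties of $M_{\#}$ where two applications of \eqref{EC3} suffice.
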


\begin{proof}
Since $T[s, s] = T[s, s]$ and $T \in \overline{T}^{\theta}$, we have $(s, s) \in E_{\theta}$ so that the binary relation $E_{\theta}$ on $S_{\bot}$ is reflexive.

For symmetry, let $(s, t) \in E_{\theta}$. Then there exists $\beta \in \overline{T}^{\theta}$ such that $\beta[s, t] = \beta[t, t]$. Using \eqref{EC3}, we have $\beta[t, s] = \beta[\beta[t, t], s] = \beta[\beta[s, t], s] = \beta[s, s]$ so that $(t, s) \in E_{\theta}$.

Let $(s, t), (t, r) \in E_{\theta}$. Then there exist $\alpha, \beta \in \overline{T}^{\theta}$, such that $\alpha[s, t] = \alpha[t, t]$ and $\beta[t, r] = \beta[r, r]$. As $\theta$ is a congruence on $M$,  $(\alpha, T), (\beta, T) \in \theta$  implies $(\alpha \wedge \beta, T) \in \theta$ so that $(\alpha \wedge \beta) \in \overline{T}^{\theta}$.  Note that
\[\begin{array}{rcll}
  (\alpha \wedge \beta)[s, r] &=& (\alpha \wedge \beta)[(\alpha \wedge \beta)[s, t], r] & \text{from}\; \eqref{EC3} \\
  &=& (\alpha \wedge \beta)[(\alpha \wedge \beta)[t, t], r] & \text{from}\; \alpha[s, t] = \alpha[t, t] \; \text{and}\; \eqref{EC8} \\
  &=& (\alpha \wedge \beta)[t, r] & \text{from}\; \eqref{EC3} \\
  &=&  (\alpha \wedge \beta)[r, r] & \text{from}\; \beta[t, r] = \beta[r, r] \text{ and \pref{PropCset}(v)}\; .
\end{array}\]
Hence $(s, r) \in E_{\theta}$ so that $E_{\theta}$ is transitive.
\end{proof}

\begin{remark} \label{RemarkQuotient}
Note that, as $\theta$ is a maximal congruence on ada $M$, $M / \theta$ must be simple, i.e., $M / \theta \cong \mathbb{3}$. Further, the quotient set  $S_{\bot} / E_{\theta}$ can be treated as a pointed set with base point $\overline{\bot}$. Thus $(S_{\bot} / E_{\theta}, M / \theta)$ is a basic $C$-set under the action \\
 \begin{equation*}
  \overline{\alpha}^{_{\theta}} [ \overline{s}^{_{E_{\theta}}}, \overline{t}^{_{E_{\theta}}} ] =
  \begin{cases}
   \overline{s}^{_{E_{\theta}}}, & \text{ if } \alpha \in \overline{T}^{\theta}; \\
   \overline{t}^{_{E_{\theta}}}, & \text{ if } \alpha \in \overline{F}^{\theta}; \\
   \overline{\bot}^{_{E_{\theta}}}, & \text{ if } \alpha \in \overline{U}^{\theta}. \\
  \end{cases}
 \end{equation*}
\end{remark}

\begin{proposition} \label{AlphaBeta}
 For any $\alpha \in M$, $\beta = \neg (\alpha^{\downarrow} \vee (\neg \alpha)^{\downarrow}) \vee U$ satisfies $\beta \wedge \alpha = U$. Moreover, if $(\alpha, U) \in \theta$ then $(\beta, T) \in \theta$.
\end{proposition}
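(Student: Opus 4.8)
The plan is to prove the two assertions of the proposition separately, the first being an identity in the ada $M$ and the second a straightforward congruence chase.

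For the identity $\beta \wedge \alpha = U$, the cleanest route is to use that the variety of adas is generated by $\mathbb{3}$ (equivalently, every ada embeds into a power of $\mathbb{3}$), so it suffices to check $\beta \wedge \alpha = U$ as $\alpha$ ranges over $\{T, F, U\}$ in $\mathbb{3}$. A direct computation from the tables for $\neg, \wedge, \vee$ together with $T^{\downarrow}=T$, $F^{\downarrow}=F$, $U^{\downarrow}=F$ gives: for $\alpha = T$ and $\alpha = F$ one has $\alpha^{\downarrow} \vee (\neg \alpha)^{\downarrow} = T$, so $\beta = \neg T \vee U = F \vee U = U$, and then $\beta \wedge \alpha = U \wedge \alpha = U$; for $\alpha = U$ one has $\alpha^{\downarrow} \vee (\neg \alpha)^{\downarrow} = F \vee F = F$, so $\beta = \neg F \vee U = T \vee U = T$, and then $\beta \wedge \alpha = T \wedge U = U$. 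Alternatively, the same computation can be carried out inside $\mathbb{3}^{X}$ in the pairs-of-sets representation of Remark \ref{RemarkStone}: writing $\alpha = (A_{1},A_{2})$ one finds $\alpha^{\downarrow} \vee (\neg\alpha)^{\downarrow} = (A_{1}\cup A_{2},\, X\setminus(A_{1}\cup A_{2}))$, hence $\beta = (X\setminus(A_{1}\cup A_{2}),\, \emptyset)$ and $\beta \wedge \alpha = (\emptyset,\emptyset) = U$; or one can derive it purely from the $C$-algebra and ada axioms \eqref{C1}--\eqref{C7} and \eqref{A1}--\eqref{A6}, which is the most tedious of the three options. I would choose the first for brevity.

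For the ``moreover'' part, I would record the side identity extracted from the $\alpha = U$ case above, namely that $\neg(U^{\downarrow} \vee (\neg U)^{\downarrow}) \vee U = T$ holds in every ada: indeed $U^{\downarrow} = F$ by \eqref{A2}, $\neg U = U$ so $(\neg U)^{\downarrow} = U^{\downarrow} = F$, $F \vee F = F$ since $F$ is a left-identity for $\vee$, $\neg F = T$ (from $\neg T = F$ and \eqref{C1}), and $T \vee U = T \vee (T \wedge U) = T$ by \eqref{C6}. Since $\theta$ is a congruence on the ada $M$ and the term $\beta = \neg(\alpha^{\downarrow} \vee (\neg\alpha)^{\downarrow}) \vee U$ is built from $\alpha$ using only operations of the ada signature, the hypothesis $(\alpha, U) \in \theta$ propagates stepwise through $(\ )^{\downarrow}$, $\neg$, $\vee$ to yield $(\beta,\, \neg(U^{\downarrow} \vee (\neg U)^{\downarrow}) \vee U) \in \theta$, i.e.\ $(\beta, T) \in \theta$ by the identity just noted.

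All the computations involved are routine; there is no real obstacle here. The only point that requires care is the non-commutativity of $\wedge$ and $\vee$ in a $C$-algebra — for instance $U \wedge \alpha = U$ while $\alpha \wedge U$ need not equal $U$, and $T \vee U = T$ while $U \vee T = U$ — so the order of the arguments must be respected throughout both parts of the argument.
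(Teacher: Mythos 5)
Your proposal is correct and follows essentially the same route as the paper: the identity $\beta \wedge \alpha = U$ is verified by reducing to the subdirectly irreducible ada $\mathbb{3}$ and checking the three cases $\alpha \in \{T, F, U\}$, and the ``moreover'' part is the same congruence chase through $(\ )^{\downarrow}$, $\neg$ and $\vee$ using $U^{\downarrow} = F$. The computations match those in the paper, so there is nothing to add.
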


\begin{proof}
Since $\mathbb{3}$ is the only subdirectly irreducible ada, it is sufficient to check the validity of the identity $\beta \wedge \alpha = U$ in $\mathbb{3}$.

 \begin{enumerate}[]
  \item If $\alpha = T$, then $\beta = \neg ( T^{\downarrow} \vee F^{\downarrow}) \vee U = \neg ( T \vee F) \vee U = F \vee U = U$.
  \item If $\alpha = F$, then $\beta = \neg ( F^{\downarrow} \vee T^{\downarrow}) \vee U = \neg ( F \vee T) \vee U = F \vee U = U$.
  \item If $\alpha = U$, then $\beta = \neg ( U^{\downarrow} \vee U^{\downarrow}) \vee U = \neg ( F \vee F) \vee U = T \vee U = T$.
 \end{enumerate}
In all these three cases, it is straightforward to see that $\beta \wedge \alpha = U$.

Suppose $(\alpha, U) \in \theta$. Since $\theta$ is a congruence, we have $(\alpha^{\downarrow}, U^{\downarrow}) = (\alpha^{\downarrow}, F) \in \theta$. Also, we have $(\neg \alpha, \neg U) = (\neg \alpha, U) \in \theta$ so that $((\neg \alpha)^{\downarrow}, F) \in \theta$. Now, by substitution with respect to $\vee$, we have $(\alpha^{\downarrow} \vee (\neg \alpha)^{\downarrow}, F) \in \theta$.

This further implies
 $(\neg(\alpha^{\downarrow} \vee (\neg \alpha)^{\downarrow}), \neg F) = (\neg(\alpha^{\downarrow} \vee (\neg \alpha)^{\downarrow}), T) \in \theta$. However, since $(U, U) \in \theta$, we have $(\neg(\alpha^{\downarrow} \vee (\neg \alpha)^{\downarrow}) \vee U, T \vee U) = (\neg(\alpha^{\downarrow} \vee (\neg \alpha)^{\downarrow}) \vee U, T) \in \theta$. Hence, $(\beta, T) \in \theta$.
\end{proof}

\begin{proposition} \label{PropAlphaST}
For each $\alpha \in M$ and each $s, t \in S_{\bot}$, we have the following:
 \begin{enumerate}[\rm(i)]
  \item $(\alpha, T) \in \theta \Rightarrow (\alpha[s, t], s) \in E_{\theta}$.
  \item $(\alpha, F) \in \theta \Rightarrow (\alpha[s, t], t) \in E_{\theta}$.
  \item $(\alpha, U) \in \theta \Rightarrow (\alpha[s, t], \bot) \in E_{\theta}$.
 \end{enumerate}
\end{proposition}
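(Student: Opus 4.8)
The plan is to handle the three implications separately, in each case exhibiting an explicit element $\beta \in \overline{T}^{\theta}$ that witnesses the required membership in $E_{\theta}$ in the sense of \dref{DefinitionETheta}; that is, a $\beta$ for which $\beta[\alpha[s,t], w] = \beta[w, w]$, where $w$ is $s$, $t$, or $\bot$ according as we are in part (i), (ii), or (iii).

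For part (i), assuming $(\alpha, T) \in \theta$, the natural witness is $\beta = \alpha$ itself, which lies in $\overline{T}^{\theta}$ by hypothesis; the required identity $\alpha[\alpha[s,t], s] = \alpha[s, s]$ is then immediate from positive redundancy \eqref{EC3}. For part (ii), assuming $(\alpha, F) \in \theta$, the fact that $\theta$ is a congruence on $M$ gives $(\neg \alpha, \neg F) = (\neg \alpha, T) \in \theta$, so $\beta = \neg \alpha \in \overline{T}^{\theta}$ is the candidate; the identity $(\neg \alpha)[\alpha[s,t], t] = (\neg \alpha)[t, t]$ follows by rewriting both sides with the $\neg$-axiom \eqref{EC5} and then applying negative redundancy \eqref{EC4} to $\alpha[t, \alpha[s,t]]$.

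The real content is in part (iii), where the difficulty is precisely to locate a suitable witness inside $\overline{T}^{\theta}$. This is exactly what \pref{AlphaBeta} supplies: for $\beta = \neg(\alpha^{\downarrow} \vee (\neg \alpha)^{\downarrow}) \vee U$ one has $\beta \wedge \alpha = U$ and, since $(\alpha, U) \in \theta$, also $(\beta, T) \in \theta$. With this $\beta$ I would compute $\beta[\alpha[s,t], t] = (\beta \wedge \alpha)[s, t] = U[s, t] = \bot$ using the $\wedge$-axiom \eqref{EC7} followed by the $U$-axiom \eqref{EC1}. Since $\beta[\bot, \bot] = \bot$ by \pref{PropCset}(i), this says $\beta[\alpha[s,t], t] = \beta[\bot, \bot]$, so \pref{PropCset}(iii), applied with test $\beta$ and the roles $s \mapsto \alpha[s,t]$, $u \mapsto t$, $r \mapsto \bot$, yields $\beta[\alpha[s,t], \bot] = \beta[\bot, \bot]$; as $\beta \in \overline{T}^{\theta}$, this is exactly $(\alpha[s,t], \bot) \in E_{\theta}$.

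I expect no genuine obstacle beyond identifying the witness in part (iii), and that is resolved by invoking \pref{AlphaBeta}; everything else reduces to short applications of the $C$-set axioms and the basic facts in \pref{PropCset}. The one point requiring care is that the defining condition of $E_{\theta}$ is asymmetric in form, so in each part the witnessing identity must be matched to the exact shape $\beta[\,\cdot\,, w] = \beta[w, w]$; the choices of $\beta$ above are arranged precisely so that this matching is direct.
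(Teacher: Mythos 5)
Your proposal is correct and follows essentially the same route as the paper: part (i) uses $\alpha$ itself with \eqref{EC3}, part (ii) uses $\neg\alpha$ with \eqref{EC5} and \eqref{EC4}, and part (iii) invokes \pref{AlphaBeta} for the witness $\beta$, computes $\beta[\alpha[s,t],t]=(\beta\wedge\alpha)[s,t]=U[s,t]=\bot=\beta[\bot,\bot]$ via \eqref{EC7} and \eqref{EC1}, and finishes with \pref{PropCset}(iii). No gaps; this matches the paper's argument step for step.
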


\begin{proof}$\;$
 \begin{enumerate}[(i)]
  \item From \eqref{EC3}, we have $\alpha[\alpha[s, t], s] = \alpha[s, s]$. Hence $(\alpha[s, t], s) \in E_{\theta}$ as $\alpha \in \overline{T}^{\theta}$.

  \item Note that $(\alpha, F) \in \theta$ implies $(\neg \alpha, T) \in \theta$. Using \eqref{EC4} and \eqref{EC5}, $(\neg \alpha)[\alpha[s, t], t] = \alpha[t, \alpha[s, t]] = \alpha[t, t] = (\neg \alpha)[t, t]$. Thus $(\alpha[s, t], t) \in E_{\theta}$.

  \item If $(\alpha, U) \in \theta$, then by \pref{AlphaBeta}, $\beta = \neg (\alpha^{\downarrow} \vee (\neg \alpha)^{\downarrow}) \vee U \in \overline{T}^{\theta}$, and $\beta \wedge \alpha = U$. Note that

  \begin{align*}
   \beta[\alpha[s, t], t] & = (\beta \wedge \alpha)[s, t] & \text{ from } \eqref{EC7} \\
                          & = U[s, t] & \text{ from \pref{AlphaBeta}} \\
                          & = \bot & \text{ from } \eqref{EC1} \\
                          & = \beta[\bot, \bot] & \text{ from \pref{PropCset}(i)}.
  \end{align*}
Consequently, by \pref{PropCset}(iii), we have $\beta[\alpha[s, t], \bot] = \beta[\bot, \bot]$. Hence, $(\alpha[s, t], \bot) \in E_{\theta}$.
 \end{enumerate}

\end{proof}

\begin{lemma} \label{LemmaCong}
 The pair $(E_{\theta}, \theta)$ is a $C$-set congruence.
\end{lemma}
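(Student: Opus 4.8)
The plan is to verify directly that the pair $(E_\theta, \theta)$ satisfies the defining condition of a $C$-set congruence: given $(s,t), (u,v) \in E_\theta$ and $(\alpha, \beta) \in \theta$, we must show $(\alpha[s,u], \beta[t,v]) \in E_\theta$. I would first reduce to the case $\alpha = \beta$, by exploiting that $\theta$ is maximal and so $M/\theta \cong \mathbb{3}$: it suffices to know which of the classes $\overline{T}^\theta, \overline{F}^\theta, \overline{U}^\theta$ contains $\alpha$ (equivalently $\beta$, since $(\alpha,\beta)\in\theta$), because we are only asked whether two elements of $S_\bot$ are $E_\theta$-related, and $E_\theta$ depends on $M$ only through the class $\overline{T}^\theta$. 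Concretely, I would split into the three cases $(\alpha,T)\in\theta$, $(\alpha,F)\in\theta$, $(\alpha,U)\in\theta$.

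In the case $(\alpha,T)\in\theta$: by \pref{PropAlphaST}(i) applied to $\alpha$ we get $(\alpha[s,u], s) \in E_\theta$, and applied to $\beta$ (which also lies in $\overline{T}^\theta$) we get $(\beta[t,v], t) \in E_\theta$. Since $(s,t) \in E_\theta$ and $E_\theta$ is an equivalence (\lref{LemmaEquiv}), transitivity and symmetry give $(\alpha[s,u], \beta[t,v]) \in E_\theta$. The case $(\alpha,F)\in\theta$ is entirely symmetric, using \pref{PropAlphaST}(ii) and the hypothesis $(u,v)\in E_\theta$ in place of $(s,t)\in E_\theta$. In the case $(\alpha,U)\in\theta$: \pref{PropAlphaST}(iii) gives $(\alpha[s,u], \bot)\in E_\theta$ and $(\beta[t,v], \bot)\in E_\theta$, and again transitivity closes the argument. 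This covers all cases since $M/\theta \cong \mathbb{3}$ partitions $M$ into exactly these three classes.

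The only thing to check carefully is that we have indeed also verified that $\theta$ itself is a congruence on the ada $M$ and that $E_\theta$ is an equivalence on $S_\bot$; but the former is immediate since $\theta$ was chosen to be a (maximal) congruence on $M$, and the latter is \lref{LemmaEquiv}. So the two-sorted congruence axioms are all in place. I expect the main (very mild) obstacle to be purely organizational: making sure the reduction ``it suffices to test $\alpha$ against the three classes of $M/\theta$'' is stated cleanly, and that in each case one invokes the correct part of \pref{PropAlphaST} for \emph{both} $\alpha$ and $\beta$ — there is no real computation beyond the already-proved \pref{PropAlphaST} and the equivalence-relation properties of $E_\theta$.
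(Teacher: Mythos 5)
Your proof is correct and rests on the same ingredients as the paper's: the case split according to whether $\alpha$ lies in $\overline{T}^{\theta}$, $\overline{F}^{\theta}$ or $\overline{U}^{\theta}$, and the three parts of \pref{PropAlphaST} applied to both $\alpha$ and $\beta$. The only difference is packaging: the paper shows that the canonical quotient maps $(\nu_{1},\nu_{2})$ onto the basic $C$-set $(S_{\bot}/E_{\theta}, M/\theta)$ of \rref{RemarkQuotient} form a $C$-set homomorphism and takes its kernel, whereas you verify the two-sorted congruence condition directly, making explicit via the transitivity of $E_{\theta}$ (\lref{LemmaEquiv}) the step that the kernel argument leaves implicit.
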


\begin{proof}
In view of \rref{RemarkQuotient}, $(S_{\bot} / E_{\theta}, M / \theta)$ is a basic $C$-set.
Consider the canonical maps $\nu_{1} : S_{\bot} \rightarrow S_{\bot} / E_{\theta}$, given by $\nu_{1}(s) = \overline{s}^{_{E_{\theta}}}$, and $\nu_{2} : M \rightarrow M / \theta \cong \mathbb{3}$, given by $\nu_{2}(\alpha) = \overline{\alpha}^{_{\theta}}$. We show that the pair $(\nu_{1}, \nu_{2})$ is a $C$-set homomorphism so that $\ker(\nu_{1}, \nu_{2}) = (E_{\theta}, \theta)$ is a $C$-set congruence.

It is straightforward to see that $\nu_{1}(\bot) = \overline{\bot}^{_{E_{\theta}}}$ and thus $\nu_{1}$ is a homomorphism of pointed sets. It is also clear that $\nu_{2}$ is a homomorphism of adas. Additionally, we require that $\nu_{1}(\alpha[s, t]) = (\nu_{2}(\alpha))[\nu_{1}(s), \nu_{1}(t)]$. In order to prove this, it suffices to consider the following three cases in view of the maximality of congruence $\theta$.

 \emph{Case I:} If $\alpha \in \overline{T}^{\theta}$, then we effectively need to show that $\overline{\alpha[s, t]}^{_{E_{\theta}}} = \overline{\alpha}^{_{\theta}} [ \overline{s}^{_{E_{\theta}}}, \overline{t}^{_{E_{\theta}}} ]$. From \rref{RemarkQuotient} and the fact that $\alpha \in \overline{T}^{_{\theta}}$, we have $\overline{\alpha}^{_{\theta}} [ \overline{s}^{_{E_{\theta}}}, \overline{t}^{_{E_{\theta}}} ] = \overline{s}^{_{E_{\theta}}}$. This reduces to show that $(\alpha[s, t], s) \in E_{\theta}$, which follows from \pref{PropAlphaST}(i).

 \emph{Case II:} In a similar vein, if $\alpha \in \overline{F}^{\theta}$, we need to show that $(\alpha[s, t], t) \in E_{\theta}$, which follows from \pref{PropAlphaST}(ii).

 \emph{Case III:} Similarly, if $\alpha \in \overline{U}^{\theta}$, we require that $(\alpha[s, t], \bot) \in E_{\theta}$, which is precisely \pref{PropAlphaST}(iii).

 This completes the proof.
\end{proof}

\begin{lemma} \label{LemmaEThetaM}
In case of the $C$-set $(M, M)$, the equivalence $E_{\theta}$ on $M$,  denoted by $E_{\theta_{M}}$, is a subset of $\theta$.
\end{lemma}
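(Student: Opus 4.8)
The plan is to unwind the definition of $E_{\theta_M}$ in the particular $C$-set $(M,M)$, where the action is $\alpha\llbracket\beta,\gamma\rrbracket=(\alpha\wedge\beta)\vee(\neg\alpha\wedge\gamma)$ and the base point is $U$. Suppose $(\beta,\gamma)\in E_{\theta_M}$; by \dref{DefinitionETheta} there is some $\alpha\in\overline{T}^{\theta}$ with $\alpha\llbracket\beta,\gamma\rrbracket=\alpha\llbracket\gamma,\gamma\rrbracket$, i.e.
\[
(\alpha\wedge\beta)\vee(\neg\alpha\wedge\gamma)=(\alpha\wedge\gamma)\vee(\neg\alpha\wedge\gamma).
\]
The goal is to deduce $(\beta,\gamma)\in\theta$. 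Since $\theta$ is a maximal (hence, as noted in \rref{RemarkQuotient}, simple) congruence on the ada $M$, with $M/\theta\cong\mathbb{3}$, it suffices to argue in $\mathbb{3}$: apply the quotient homomorphism $\nu_2:M\to M/\theta\cong\mathbb{3}$ to the displayed identity. Since $\alpha\in\overline{T}^{\theta}$, we get $\nu_2(\alpha)=T$, so the identity collapses in $\mathbb{3}$ to $T\llbracket\nu_2(\beta),\nu_2(\gamma)\rrbracket=T\llbracket\nu_2(\gamma),\nu_2(\gamma)\rrbracket$, that is $(T\wedge\nu_2(\beta))\vee(F\wedge\nu_2(\gamma))=(T\wedge\nu_2(\gamma))\vee(F\wedge\nu_2(\gamma))$. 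Using $T\wedge x = x$ and $F\wedge x = F$ and $x\vee F = x$ in $\mathbb{3}$, the left side reduces to $\nu_2(\beta)$ and the right side to $\nu_2(\gamma)$, giving $\nu_2(\beta)=\nu_2(\gamma)$, i.e. $(\beta,\gamma)\in\theta$.

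Thus the whole argument is: reduce to $\mathbb{3}$ using maximality of $\theta$ and the fact that $\nu_2$ is an ada (hence $C$-algebra) homomorphism carrying the displayed equation to an equation in $\mathbb{3}$; then observe that in $\mathbb{3}$ the witness $\alpha$, having image $T$, forces the equation $T\llbracket b,c\rrbracket = T\llbracket c,c\rrbracket$ to simplify to $b=c$. Equivalently, one may avoid passing through $\mathbb{3}$ and argue directly: from $\alpha\in M_{\#}$ (since $\overline{T}^{\theta}$-elements need not literally lie in $M_\#$, one first replaces $\alpha$ by $\alpha^{\downarrow}$ via \pref{PropCsetarrow}, noting $\alpha^{\downarrow}\in\overline{T}^{\theta}$ because $T^{\downarrow}=T$ and $\theta$ is a congruence) one has by \pref{PropBset} that $\alpha^{\downarrow}\llbracket\gamma,\gamma\rrbracket=\gamma$, and then one shows $\alpha^{\downarrow}\llbracket\beta,\gamma\rrbracket$ is $\theta$-related to $\beta$ — this is exactly the computation behind \pref{PropAlphaST}(i) specialized to $(M,M)$. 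Either route is short.

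The only mild subtlety — and the step I expect to need the most care — is the passage from "$\alpha\in\overline{T}^{\theta}$" to something usable: $\alpha$ itself is merely $\theta$-congruent to $T$, not necessarily idempotent or complemented, so one cannot directly invoke \pref{PropBset}. The clean fix is to apply $\nu_2$ and work in $\mathbb{3}$ as above, which sidesteps this entirely; alternatively, replace $\alpha$ by $\alpha^{\downarrow}\in M_{\#}\cap\overline{T}^{\theta}$ using \pref{PropCsetarrow} and \eqref{A3}. I would write the $\mathbb{3}$-based version, since it is the shortest and matches the style of \pref{AlphaBeta} and \rref{RemarkQuotient}. Everything else is a routine evaluation of $\vee$ and $\wedge$ on the truth values, so the lemma follows.
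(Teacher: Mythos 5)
Your proposal is correct and is essentially the paper's argument: the paper unwinds the same identity $(\gamma\wedge\alpha)\vee(\neg\gamma\wedge\beta)=(\gamma\wedge\beta)\vee(\neg\gamma\wedge\beta)$ and reduces each side modulo $\theta$ to $\alpha$ and $\beta$ respectively using $T\wedge x=x$, $F\wedge x=F$ and $x\vee F=x$, which is exactly your computation phrased with explicit congruence substitutions instead of the quotient map $\nu_2$. Your worry about $\alpha\in\overline{T}^{\theta}$ not lying in $M_{\#}$ is moot on this route, as you note, and the $\alpha^{\downarrow}$ detour is unnecessary.
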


\begin{proof}
Let $(\alpha, \beta) \in E_{\theta_{M}}$. Then there exists $\gamma \in \overline{T}^{\theta}$ such that $\gamma \llbracket \alpha, \beta \rrbracket = \gamma \llbracket \beta, \beta \rrbracket$. In other words,
 \begin{equation} \label{EThetaM.1}
  (\gamma \wedge \alpha) \vee (\neg \gamma \wedge \beta) = (\gamma \wedge \beta) \vee (\neg \gamma \wedge \beta)
 \end{equation}
 Since $\gamma \in \overline{T}^{\theta}$, we have $(\gamma, T) \in \theta$. Moreover, $(\alpha, \alpha) \in \theta$ as $\theta$ is reflexive. It follows that $(\gamma \wedge \alpha, T \wedge \alpha) = (\gamma \wedge \alpha, \alpha) \in \theta$. Similarly, $(\gamma, T) \in \theta$ implies that $(\neg \gamma, F) \in \theta$, and as $(\beta, \beta) \in \theta$, we have $(\neg \gamma \wedge \beta, F \wedge \beta) = (\neg \gamma \wedge \beta, F) \in \theta$. Consequently $((\gamma  \wedge \alpha) \vee (\neg \gamma \wedge \beta), \alpha \vee F) = ((\gamma  \wedge \alpha) \vee (\neg \gamma \wedge \beta), \alpha) \in \theta$. Following a similar procedure, using $(\gamma, T), (\neg \gamma, F), (\beta, \beta) \in \theta$, we obtain $((\gamma  \wedge \beta) \vee (\neg \gamma \wedge \beta), \beta) \in \theta$. Now using \eqref{EThetaM.1}, the symmetry and transitivity of $\theta$, we have $(\alpha, \beta) \in \theta$ so that $E_{\theta_{M}} \subseteq \theta$.
\end{proof}

\begin{lemma} \label{LemmaIntersec}
$\displaystyle \bigcap_{\theta} E_{\theta} = \Delta_{S_{\bot}}$,  where $\theta$ ranges over all maximal congruences on $M$.
\end{lemma}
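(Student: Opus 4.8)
The plan is to show the two inclusions $\bigcap_\theta E_\theta \supseteq \Delta_{S_\bot}$ and $\bigcap_\theta E_\theta \subseteq \Delta_{S_\bot}$. The first is immediate: by \lref{LemmaEquiv} each $E_\theta$ is an equivalence, hence reflexive, so $\Delta_{S_\bot} \subseteq E_\theta$ for every maximal congruence $\theta$, and therefore $\Delta_{S_\bot} \subseteq \bigcap_\theta E_\theta$. The substance is the reverse inclusion: I must show that if $(s,t) \in E_\theta$ for \emph{all} maximal congruences $\theta$ on $M$, then $s = t$.

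So fix $(s,t) \in \bigcap_\theta E_\theta$ and suppose, for contradiction, that $s \neq t$. The idea is to produce a single maximal congruence $\theta_0$ for which $(s,t) \notin E_{\theta_0}$. Consider the set $D = \{\alpha \in M : \alpha[s,t] = \alpha[t,t]\}$. By \dref{DefinitionETheta}, $(s,t) \in E_\theta$ means precisely that $D \cap \overline{T}^{\theta} \neq \emptyset$, i.e. some element of $D$ is congruent to $T$ mod $\theta$. I would first establish closure properties of $D$: it contains $F$ (by the $F$-axiom \eqref{EC6}, both sides equal $t$), it is closed under $\wedge$ on the right by a $\beta \in M$ by \pref{PropCset}(v) and on the left by \eqref{EC8}, and — crucially — it does \emph{not} contain $T$, since $T[s,t] = s \neq t = T[t,t]$ by \eqref{EC5} and \eqref{EC6}. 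The key point is to translate "$\alpha \in D$" into a statement forcing $\alpha$ away from the "$T$" region. Here I expect to use \pref{PropCsetarrow}: if $\alpha \in D$ then $\alpha^{\downarrow} \in D$, and $\alpha^{\downarrow}$ lives in the Boolean algebra $M_\#= M^{\downarrow}$; by \pref{PropBset}, if $\alpha^{\downarrow}$ were $T$ we would get $s = T[s,t] = t$, contradiction, so in fact $\alpha^{\downarrow} \neq T$, indeed $\alpha^{\downarrow}[s,t] = \alpha^{\downarrow}[t,t]$ with $\alpha^{\downarrow}$ Boolean.

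The plan is then to work inside the Boolean algebra $M_\#$. Let $I = \{\alpha^{\downarrow} : \alpha \in D\} \subseteq M_\#$. The closure properties of $D$ should make the down-set of $I$ (or the ideal it generates) a proper ideal of $M_\#$ not containing $T = 1$; properness is exactly the fact, just shown, that $T \notin I$. By the Boolean prime ideal theorem, extend this proper ideal to a maximal (prime) ideal $\mathfrak{m}$ of $M_\#$, with corresponding Boolean quotient map $M_\# \to \mathbb{2}$. Via the equivalence of the category of adas with the category of Boolean algebras (\rref{RemarkStone}), respectively the fact that every maximal congruence $\theta$ on $M$ restricts to a maximal congruence on $M_\#$ and conversely, this $\mathfrak m$ induces a maximal congruence $\theta_0$ on $M$ with $M/\theta_0 \cong \mathbb{3}$. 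For this $\theta_0$, every $\alpha \in D$ has $\alpha^{\downarrow} \in \mathfrak m$, hence $\alpha^{\downarrow} \notin \overline{T}^{\theta_0}$; I then need that $\alpha \notin \overline{T}^{\theta_0}$ either, which follows since $\alpha \in \overline{T}^{\theta_0}$ would force $\alpha^{\downarrow} \in \overline{T^{\downarrow}}^{\theta_0} = \overline{T}^{\theta_0}$. Thus $D \cap \overline{T}^{\theta_0} = \emptyset$, i.e. $(s,t) \notin E_{\theta_0}$, contradicting the assumption. Hence $s = t$.

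The main obstacle I anticipate is the bookkeeping around the correspondence between maximal congruences of the ada $M$ and maximal (prime) ideals of the Boolean algebra $M_\#$ — making precise that extending a proper ideal of $M_\#$ to a prime ideal really does yield a maximal congruence $\theta_0$ on $M$ with $M/\theta_0\cong\mathbb 3$, and that the restriction-to-$M_\#$ and the ada-completion operations of \rref{RemarkStone} are mutually inverse at the level of congruences. A secondary technical point is verifying that $D$ (or its image in $M_\#$) really does generate a \emph{proper} ideal: this rests on showing $T \notin D$ is inherited by the generated ideal, which in turn uses that $D$ is closed upward under the Boolean order within $M_\#$ together with \pref{PropBset} and \pref{PropCsetarrow}. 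Once these structural facts are in place, the contradiction argument is short.
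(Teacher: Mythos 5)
Your proof is correct in outline, but it takes a genuinely different route from the paper's. The paper does not argue by separating a putative pair $s\neq t$; instead it applies \tref{BsetSubdir} to the $B$-set $(S_{\bot},M_{\#})$ of \cref{CorBset} to embed $S_{\bot}$ into a function space $\big(\bigcup_{x}S_{x}\big)^{X}$ and $M_{\#}$ into $\mathbb{2}^{X}$, lifts this to $M\cong(M_{\#})^{\star}\leq\mathbb{3}^{X}$ via \rref{RemarkStone}, and then uses only the explicit coordinate congruences $\theta_{x_{0}}$ (defined by $\alpha\,\theta_{x_{0}}\,\beta\Leftrightarrow\alpha(x_{0})=\beta(x_{0})$): for $(s,t)$ in the intersection, the witness $\beta_{\theta_{x_{0}}}$ is pushed into $M_{\#}$ by \pref{PropCsetarrow} and evaluated at the coordinate $x_{0}$ to get $s'(x_{0})=t'(x_{0})$ for every $x_{0}$. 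Your argument replaces this concrete coordinatization by an abstract separation: the set $D=\{\alpha: \alpha[s,t]=\alpha[t,t]\}$, its trace $I=D\cap M_{\#}$ as a proper ideal, the Boolean prime ideal theorem, and the transfer of maximal congruences along the ada--Boolean equivalence. The common kernel of both proofs is \pref{PropCsetarrow}, which lets everything be decided inside $M_{\#}$. What your route buys is independence from the external representation theorem \tref{BsetSubdir} and from realizing $S_{\bot}$ as a set of functions; what it costs is exactly the bookkeeping you flag, namely that an ultrafilter of $M_{\#}$ really does induce a maximal congruence $\theta_{0}$ on $M$ with $M/\theta_{0}\cong\mathbb{3}$ restricting correctly to $M_{\#}$ (this is true, and is essentially the content of \rref{RemarkStone} read at the level of quotients, but it is the step you must write out). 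Two small corrections to your sketch: the relevant order property of $I$ is \emph{downward} closure (from \pref{PropCset}(v)), not upward closure; and ``$T\notin I$'' alone does not make the \emph{generated} ideal proper --- you also need $I$ closed under finite joins, which does hold: if $\gamma_{1},\gamma_{2}\in D$ then by \eqref{VA} and \pref{PropCset}(ii), $(\gamma_{1}\vee\gamma_{2})[s,t]=\gamma_{1}[s,\gamma_{2}[s,t]]=\gamma_{1}[s,\gamma_{2}[t,t]]=\gamma_{1}[t,\gamma_{2}[t,t]]=(\gamma_{1}\vee\gamma_{2})[t,t]$, so $I$ is itself an ideal and properness reduces to $T\notin D$, i.e.\ to $s\neq t$. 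With those points supplied, your proof goes through.
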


\begin{proof}
By \cref{CorBset}, $(S_{\bot}, M_{\#})$ is a $B$-set so that $(S_{\bot}, M_{\#})$ is a subdirect product of basic $B$-sets (cf. \tref{BsetSubdir}). Hence, $(S_{\bot}, M_{\#})$ is a subalgebra of a product of basic $B$-sets $(S_{x}, \mathbb{2})$, where $x$ ranges over some set $X$. That is,
 \begin{eqnarray*}
  (S_{\bot}, M_{\#}) & \leq& \prod_{x \in X} \left(S_{x}, \mathbb{2}\right) \\
                     & =& \left(\prod_{x \in X} S_{x}, \mathbb{2}^{X}\right) \\
                     & \leq& \left(\Big(\bigcup_{x \in X} S_{x}\Big)^{X}, \mathbb{2}^{X}\right) \\
 \end{eqnarray*}
 Note that the action in both $\left(\prod_{x \in X} S_{x}, \mathbb{2}^{X}\right)$ and $\left((\bigcup_{x \in X} S_{x})^{X}, \mathbb{2}^{X}\right)$ is
 \begin{equation*}
 (\alpha[s, t])(x) =
 \begin{cases}
  s(x), & \text{ if } \alpha(x) = T; \\
  t(x), & \text{ if } \alpha(x) = F.
 \end{cases}
 \end{equation*}
 Also note that the action in $\left(\prod_{x \in X} S_{x}, \mathbb{2}^{X}\right)$ is simply a restriction of that on $\left((\bigcup_{x \in X} S_{x})^{X}, \mathbb{2}^{X}\right)$. Since $(S_{\bot}, M_{\#})$ is a subalgebra of $\left((\bigcup_{x \in X} S_{x})^{X}, \mathbb{2}^{X}\right)$, we can see that $M_{\#}$ is a subalgebra of $\mathbb{2}^{X}$. Using the construction mentioned in \rref{RemarkStone}, $M \cong (M_{\#})^{\star} \leq \mathbb{3}^{X}$.

 Now for any $x_{o} \in X$, treating $M$ as a subalgebra of $\mathbb{3}^{X}$ we define maximal congruences on $M$ as follows.
 $$(\alpha, \beta) \in \theta_{x_{o}} \Leftrightarrow \alpha(x_{o}) = \beta(x_{o}).$$
 Such $\theta_{x_{o}}$ is indeed a maximal congruence on $M$. It is clearly an equivalence relation on $M$. Let $(\alpha_{1}, \beta_{1}), (\alpha_{2}, \beta_{2}) \in \theta_{x_{o}}$. Then $\alpha_{1}(x_{o}) = \beta_{1}(x_{o})$ and $\alpha_{2}(x_{o}) = \beta_{2}(x_{o})$. Thus $\alpha_{1}(x_{o}) \wedge \alpha_{2}(x_{o}) = \beta_{1}(x_{o}) \wedge \beta_{2}(x_{o})$. Thus $(\alpha_{1} \wedge \alpha_{2}, \beta_{1} \wedge \beta_{2}) \in \theta_{x_{o}}$. Similarly $\theta_{x_{o}}$ is compatible with the other operations on $M$, viz., $\vee, \neg$ and $^{\downarrow}$. Note that $M$ has only the following three equivalence classes with respect to $\theta_{x_{o}}$.
 \begin{align*}
 \overline{\bf T}^{\theta} & = \{ \alpha \in M : \alpha(x_{o}) = T \} \\
 \overline{\bf F}^{\theta} & = \{ \alpha \in M : \alpha(x_{o}) = F \} \\
 \overline{\bf U}^{\theta} & = \{ \alpha \in M : \alpha(x_{o}) = U \} \\
\end{align*}
 Thus $M / \theta_{x_{o}}$ is simple and so $\theta_{x_{o}}$ is maximal.

 We shall now show that $\bigcap E_{\theta} = \Delta_{S_{\bot}}$. Let $(s, t) \in \bigcap E_{\theta}$. Then for every maximal congruence $\theta$ on $M$, there exists a $\beta_{\theta} \in \overline{T}^{\theta}$ such that $\beta_{\theta}[s, t] = \beta_{\theta}[t, t]$. On using \pref{PropCsetarrow} we have $\beta_{\theta}^{\downarrow}[s, t] = \beta_{\theta}^{\downarrow}[t, t]$. Note that if $\beta_{\theta}$ is in $\overline{T}^{\theta}$, so is $\beta_{\theta}^{\downarrow}$. Moreover, $\beta_{\theta}^{\downarrow} \in M_{\#}$.

 As $S_{\bot} \leq (\bigcup_{x \in X} S_{x})^{X}$, we may treat $s, t$ as functions $s', t' \in (\bigcup_{x \in X} S_{x})^{X}$. Note that the {\tt if-then-else} action in $(S_{\bot}, M_{\#})$ can be treated as a restriction of that on $((\bigcup_{x \in X} S_{x})^{X}, \mathbb{2}^{X})$. Considering the maximal congruences defined above, for each $x_{o} \in X$ there exists $\beta_{\theta_{x_{o}}}^{\downarrow} \in \overline{T}^{\theta_{x_{o}}}$, that is, $\beta_{\theta_{x_{o}}}^{\downarrow}(x_{o}) = T$ and $\beta_{\theta_{x_{o}}}^{\downarrow}[s', t'] = \beta_{\theta_{x_{o}}}^{\downarrow}[t', t']$. In other words, for each $x \in X$, $(\beta_{\theta_{x_{o}}}^{\downarrow}[s', t'])(x) = (\beta_{\theta_{x_{o}}}^{\downarrow}[t', t'])(x)$. In particular for $x = x_{o}$, $(\beta_{\theta_{x_{o}}}^{\downarrow}[s', t'])(x_{o}) = (\beta_{\theta_{x_{o}}}^{\downarrow}[t', t'])(x_{o})$.

 However $(\beta_{\theta_{x_{o}}}^{\downarrow}[s', t'])(x_{o}) = s'(x_{o})$ as $\beta_{\theta_{x_{o}}}^{\downarrow}(x_{o}) = T$. Similarly, $(\beta_{\theta_{x_{o}}}^{\downarrow}[t', t'])(x_{o}) = t'(x_{o})$.

 This tells us that for each $x_{o} \in X$, $s'(x_{o}) = t'(x_{o})$, that is, $s' \equiv t'$ which means that $s = t$ in $S_{\bot}$. This completes the proof.
\end{proof}

\begin{remark}\label{Remarkinttheta}
For $\alpha , \beta \in M$ with $\alpha \ne \beta$, let $\theta_{\alpha, \beta}$ be a maximal congruence which separates $\alpha$ and $\beta$. Since $\displaystyle \bigcap_{\alpha \ne \beta \in M} \theta_{\alpha, \beta} = \Delta_{M}$, the intersection of all maximal congruences on $M$ \[\bigcap_{\theta \text{ maximal}} \theta = \Delta_{M}\]
\end{remark}

We shall now prove the main theorem of this section.

\begin{theorem} \label{CsetSubdir}
 Every $C$-set $(S_{\bot}, M)$ where $M$ is an ada is a subdirect product of basic $C$-sets.
\end{theorem}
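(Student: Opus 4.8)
The plan is to exhibit $(S_{\bot}, M)$ as a subdirect product by means of the family of quotient maps indexed by the maximal congruences of the ada $M$. For each maximal congruence $\theta$ on $M$, let $(\nu_{1}^{\theta}, \nu_{2}^{\theta})$ denote the pair of canonical surjections onto $(S_{\bot}/E_{\theta}, M/\theta)$; by \lref{LemmaCong} this is a $C$-set homomorphism, and by \rref{RemarkQuotient} its codomain is a basic $C$-set, since maximality of $\theta$ forces $M/\theta \cong \mathbb{3}$. I would then form the diagonal map
\[\Phi = (\Phi_{1}, \Phi_{2}) : (S_{\bot}, M) \longrightarrow \prod_{\theta} (S_{\bot}/E_{\theta}, M/\theta) = \Big(\prod_{\theta} S_{\bot}/E_{\theta},\ \prod_{\theta} M/\theta\Big),\]
where $\Phi_{1}(s) = (\overline{s}^{E_{\theta}})_{\theta}$ and $\Phi_{2}(\alpha) = (\overline{\alpha}^{\theta})_{\theta}$, and $\theta$ ranges over all maximal congruences on $M$. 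Since a map into a product is a homomorphism precisely when each of its composites with a projection is, and each such composite is $(\nu_{1}^{\theta}, \nu_{2}^{\theta})$, the pair $\Phi$ is a $C$-set homomorphism.

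Next I would verify that $\Phi$ is injective on both sorts. On $S_{\bot}$: if $\Phi_{1}(s) = \Phi_{1}(t)$ then $(s, t) \in E_{\theta}$ for every maximal $\theta$, hence $(s, t) \in \bigcap_{\theta} E_{\theta} = \Delta_{S_{\bot}}$ by \lref{LemmaIntersec}, so $s = t$. On $M$: if $\Phi_{2}(\alpha) = \Phi_{2}(\beta)$ then $(\alpha, \beta) \in \bigcap_{\theta} \theta = \Delta_{M}$ by \rref{Remarkinttheta}, so $\alpha = \beta$. Thus $\Phi$ embeds $(S_{\bot}, M)$ into the product $C$-set. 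Finally, for each fixed $\theta$ the $\theta$-component of $\Phi$ is exactly the surjective pair $(\nu_{1}^{\theta}, \nu_{2}^{\theta})$, so the projection of the image of $\Phi$ onto the $\theta$-th factor is all of $(S_{\bot}/E_{\theta}, M/\theta)$ on both sorts. This is precisely the assertion that $(S_{\bot}, M)$ is a subdirect product of the basic $C$-sets $(S_{\bot}/E_{\theta}, M/\theta)$.

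There is little genuine difficulty left at this stage: essentially all the content is already packaged in the preceding results — that $(E_{\theta}, \theta)$ is a $C$-set congruence (\lref{LemmaCong}, resting on \pref{PropAlphaST} and the $\wedge$-compatibility axiom) and that the family $\{E_{\theta}\}$ separates the points of $S_{\bot}$ (\lref{LemmaIntersec}, which passes through the $B$-set representation of $(S_{\bot}, M_{\#})$ and the ada--Boolean correspondence of \rref{RemarkStone}). The only points needing a little care when assembling the theorem are bookkeeping matters for a two-sorted algebra: that the product $C$-set is formed sortwise, that a homomorphism into it is determined componentwise, and that ``subdirect'' here means each of the two projections onto a given factor is onto. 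Each of these is routine once the index family has been shown to separate points on both sorts, so I expect the write-up of this theorem to be short.
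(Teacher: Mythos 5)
Your proposal is correct and follows exactly the same route as the paper: it assembles \lref{LemmaCong}, \rref{RemarkQuotient}, \lref{LemmaIntersec} and \rref{Remarkinttheta} into the standard subdirect-product conclusion, merely spelling out the diagonal embedding and the surjectivity of each projection, which the paper leaves implicit. No gaps.
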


\begin{proof}
 Let $(S_{\bot}, M)$ be a $C$-set where $M$ is an ada and $\{\theta\}$ be the collection of all maximal congruences on $M$. By \lref{LemmaCong}, for each $\theta$, the pair  $(E_{\theta}, \theta)$ is a $C$-set congruence on $(S_{\bot}, M)$ and by \rref{RemarkQuotient} $(S_{\bot} / E_{\theta}, M / \theta)$ is a basic $C$-set. Further, by \lref{LemmaIntersec} and \rref{Remarkinttheta}, the intersection of all congruences $(E_{\theta}, \theta)$ is trivial. Hence, $(S_{\bot}, M)$ is a subdirect product of $(S_{\bot} / E_{\theta}, \mathbb{3})$, where $\theta$ varies over maximal congruences on $M$.
\end{proof}

The following consequence of \tref{CsetSubdir} is useful to establish the equivalence between the programs which admit the current setup.

\begin{corollary} \label{IdCset}
 An identity $p \approx q$ is satisfied in every $C$-set $(S_{\bot}, M)$ where $M$ is an ada if and only if it is satisfied in all basic $C$-sets.
\end{corollary}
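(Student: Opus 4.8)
The plan is to read the Corollary off directly from \tref{CsetSubdir}, using only the elementary fact that an identity valid in every factor of a subdirect product is valid in the product itself.

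The forward implication is immediate: by \eref{ExampleBasicCset} every basic $C$-set has the form $(S_{\bot}, \mathbb{3})$, and $\mathbb{3}$ is an ada, so a basic $C$-set is in particular a $C$-set whose $C$-algebra is an ada. Hence any identity $p \approx q$ that holds in all $C$-sets over adas holds, a fortiori, in all basic $C$-sets.

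For the converse, assume $p \approx q$ holds in every basic $C$-set and let $(S_{\bot}, M)$ be an arbitrary $C$-set with $M$ an ada. By \tref{CsetSubdir} the pair $(S_{\bot}, M)$ embeds, via the map $(\nu_{1}, \nu_{2})$ induced by the congruences $(E_{\theta}, \theta)$ of \lref{LemmaCong}, into the product $\prod_{\theta}(S_{\bot}/E_{\theta}, M/\theta)$ of basic $C$-sets, with the coordinate projections jointly injective (this joint injectivity is exactly the content of \lref{LemmaIntersec} together with \rref{Remarkinttheta}). Fix any valuation of the variables of $p$ and $q$ into $(S_{\bot}, M)$: the $M$-sorted variables are assigned elements of $M$ and the $S_{\bot}$-sorted variables elements of $S_{\bot}$. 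Composing with the $\theta$-th projection produces a valuation into the basic $C$-set $(S_{\bot}/E_{\theta}, M/\theta)$, and since $p \approx q$ holds there, the $\theta$-coordinates of the interpretations of $p$ and $q$ agree. As this holds for every maximal congruence $\theta$ on $M$ and the projections are jointly injective, the interpretations of $p$ and $q$ in $(S_{\bot}, M)$ coincide; that is, $p \approx q$ holds in $(S_{\bot}, M)$.

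I do not expect any genuine obstacle beyond \tref{CsetSubdir} itself; the one point to keep straight is that a $C$-set is a two-sorted algebra, so the homomorphism $(\nu_{1}, \nu_{2})$ must be — and, by \lref{LemmaCong}, is — a homomorphism on both the $S_{\bot}$ sort and the ada sort, and the identity $p \approx q$ may legitimately involve variables and operations of either sort.
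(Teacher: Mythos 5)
Your argument is correct and is exactly the route the paper intends: \cref{IdCset} is stated as an immediate consequence of \tref{CsetSubdir}, with the forward direction following because every basic $C$-set $(S_{\bot},\mathbb{3})$ is itself a $C$-set over an ada, and the converse following from the standard fact that identities are inherited by subdirect products (i.e.\ by subalgebras of products) of algebras satisfying them. Your added care about the two-sorted nature of the homomorphism $(\nu_{1},\nu_{2})$ and the joint injectivity coming from \lref{LemmaIntersec} and \rref{Remarkinttheta} merely makes explicit what the paper leaves implicit.
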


\section{Agreeable $C$-sets} \label{AgreeableCset}

In this section, we shall aim to describe an algebraic formalism for the equality test over possibly non-halting programs. The equality test over the functions $f, g \in \mathcal{T}_{o}(X_{\bot})$ can be naturally described by the following:

\begin{equation} \label{Agreeable}
(f * g)(x)  =
\begin{cases}
 T, & \text{ if } f(x) = g(x) \text{ and } f(x) \neq \bot \neq g(x); \\
 F, & \text{ if } f(x) \neq g(x) \text{ and } f(x) \neq \bot \neq g(x); \\
 U, & \text{ otherwise.}
\end{cases}
\end{equation}

For simplicity of notation, we will denote the condition $f(x) = g(x) \text{ and } f(x) \neq \bot \neq g(x)$ by $f(x) = g(x) \text{ } (\neq \bot)$ and the condition $f(x) \neq g(x) \text{ and } f(x) \neq \bot \neq g(x)$ by $f(x) \neq g(x) \text{ } (\neq \bot)$. Consequently, $f * g$ can be identified with the pair of sets $(A, B)$ on $X$, where $A = \{ x \in X : f(x) = g(x) \text{ } (\neq \bot) \}$ and $B = \{ x \in X : f(x) \neq g(x) \text{ } (\neq \bot) \}$.

Keeping this model in mind, we extend the notion of agreeable $B$-sets, given by Jackson and Stokes in \cite{jackson09}, and define agreeable $C$-sets as per the following.

\begin{definition}
 A $C$-set $(S_{\bot}, M)$ equipped with a function $$* : S_{\bot} \times S_{\bot} \rightarrow M$$ is said to be \emph{agreeable} if it satisfies the following axioms for all $s, t, u, v \in S_{\bot}$ and $\alpha \in M$:

 \begin{align}
  (s * s)[s, \bot] & = s \label{EA4} & \text{(domain axiom)} \\
  \bot * s  = U & = s * \bot \label{EA1} & \text{($\bot$-comparison)} \\
  (s * t)[s, t] & = (s * t)[t, t] \label{EA2} & \text{(equality on conclusions)} \\
  \alpha[s, t] * \alpha[u, v] & = \alpha \llbracket s * u, t * v \rrbracket \label{EA3} & \text{(operation interchange)} \\
  ((s * s = T) & \wedge (s * t = U)) \Rightarrow t = \bot \label{EA5} & \text{(totality condition)}
 \end{align}

\end{definition}

While the operation interchange axiom \eqref{EA3} is readily available, the axiom \eqref{EA2} can be verified in agreeable $B$-sets. However, the other axioms are specific to the current scenario of the non-halting case. These axioms can be justified along the following lines by considering equality of functions over the functional model  $(\mathcal{T}_{o}(X_{\bot}), \mathbb{3}^{X})$ of $C$-sets.

In $\mathcal{T}_{o}(X_{\bot})$, the domain of a function is considered in the spirit of a partial function, i.e., all those points whose image is not $\bot$. In the model $(\mathcal{T}_{o}(X_{\bot}), \mathbb{3}^{X})$, the partial predicate $s * s$ represents the domain of $s$. The domain axiom \eqref{EA4} captures the behaviour of {\tt if-then-else} with respect to the domain of $s$. For instance, we expect $s * s$ takes truth value $T$ in the domain of $s$ so that $(s * s)[s, \bot] = s$. Also, in the complement of the domain of $s$,  $s * s$ should take value $U$  so that $(s * s)[s, \bot] = U[s, \bot] = \bot = s$.

Note that we check the equality of two functions over their domains. Thus the $\bot$-comparison axiom  \eqref{EA1} states that comparing the {\tt error} state $\bot$ with any element $s$ results in the {\tt undefined} predicate $U$.

The axiom of equality on conclusions \eqref{EA2} exhibits the behaviour of equality test $*$ on conclusions of the {\tt if-then-else} action of the $C$-set. Indeed, when the partial predicate $s * t = T$, $(s * t)[s, t] = s = t = (s * t)[t, t]$ and similarly if $s * t = F$, then $(s * t)[s, t] = t = (s * t)[t, t]$. Further, if $s * t = U$, then  $(s * t)[s, t] = \bot = (s * t)[t, t]$.

The axiom of operation interchange \eqref{EA3} describes how $*$ and the {\tt if-then-else} action relate to the action on the $C$-set $(M, M)$. The totality condition \eqref{EA5} is a quasi-identity, in which if $s$ is a total function but $s * t$ is undefined, then it must follow that $t$ is the empty function, i.e., $t = \zeta_{\bot}$.

Thus we arrive at the following example of agreeable $C$-sets.

\begin{example} \label{ExampleFunctionalAgreeable}
 The pair $(\mathcal{T}_{o}(X_{\bot}), \mathbb{3}^{X})$ is an agreeable $C$-set under the operation $*$  defined in \eqref{Agreeable}. For verification of the axioms \eqref{EA4} -- \eqref{EA5} refer to Appendix \ref{VerFunctionalAgreeable}. Such agreeable $C$-sets are called \emph{agreeable functional $C$-sets}.
\end{example}

\begin{example} \label{ExampleBasicAgreeable}
 Every basic $C$-set is agreeable under the operation given by
 \begin{equation} \label{BasicAgreeableEq}
 s * t =
 \begin{cases}
  T, & \text{ if } s = t \text{ } (\neq \bot); \\
  F, & \text{ if } s \neq t \text{ } (\neq \bot); \\
  U, & \text{ if } s = \bot \text{ or } t = \bot.
 \end{cases}
 \end{equation}
 For verification of the axioms \eqref{EA4} -- \eqref{EA5} refer to Appendix \ref{VerBasicAgreeable}. Such agreeable $C$-sets will be called \emph{agreeable basic $C$-sets}.
 \end{example}

 \begin{proposition} \label{PropUniqueAgreeable}
  The operation defined in \eqref{BasicAgreeableEq} is the only possible operation under which a basic $C$-set can be made agreeable.
 \end{proposition}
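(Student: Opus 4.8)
The plan is to show that for a basic $C$-set $(S_{\bot}, \mathbb{3})$, any binary operation $* : S_{\bot} \times S_{\bot} \to \mathbb{3}$ making it an agreeable $C$-set must coincide with the operation in \eqref{BasicAgreeableEq}. Since the codomain is $\mathbb{3} = \{T, F, U\}$, I need to pin down the value of $s * t$ in each of the three cases: (a) $s = \bot$ or $t = \bot$; (b) $s = t$ and $s, t \neq \bot$; (c) $s \neq t$ and $s, t \neq \bot$. In a basic $C$-set, recall that the action is $\alpha[a,b] = a$ if $\alpha = T$, $= b$ if $\alpha = F$, and $= \bot$ if $\alpha = U$.

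First I would handle case (a): by the $\bot$-comparison axiom \eqref{EA1}, $\bot * s = U = s * \bot$ directly, so the value is forced to be $U$ whenever one argument is $\bot$. Next, for case (c), suppose $s \neq t$ with both different from $\bot$, and let $\gamma = s * t \in \{T, F, U\}$. By equality on conclusions \eqref{EA2}, $\gamma[s,t] = \gamma[t,t]$. Evaluating in the basic $C$-set: if $\gamma = T$ this says $s = t$, contradicting $s \neq t$; if $\gamma = U$ this says $\bot = \bot$, which holds; if $\gamma = F$ this says $t = t$, which holds. So \eqref{EA2} alone only rules out $T$, leaving $F$ and $U$ as candidates. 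To eliminate $U$, I would invoke the totality condition \eqref{EA5} together with the domain axiom \eqref{EA4}. From \eqref{EA4}, $(s*s)[s,\bot] = s$; since $s \neq \bot$, evaluating in the basic $C$-set forces $s * s \neq U$ (else the left side would be $\bot$) and also $s*s \neq F$ (else the left side would be $\bot$), hence $s * s = T$. Thus every non-$\bot$ element $s$ satisfies $s * s = T$; this also settles case (b) as $s = t \neq \bot$ gives $s * t = s * s = T$. Now if in case (c) we had $s * t = U$, then $s * s = T$ and $s * t = U$ would trigger \eqref{EA5}, forcing $t = \bot$, contradicting $t \neq \bot$. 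Hence $s * t = F$ in case (c), matching \eqref{BasicAgreeableEq}.

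The steps I expect to be routine are the evaluations of $\alpha[\cdot,\cdot]$ in the basic $C$-set, which are just case splits on the value of $\alpha$. The one point requiring slight care is deriving $s*s = T$ from the domain axiom: I must check that in a basic $C$-set the only value of $s*s$ consistent with $(s*s)[s,\bot] = s$ when $s \neq \bot$ is $T$ — here I use that $s \neq \bot$ so that $s$ is distinguishable from the base point, making $\gamma[s,\bot] = s$ fail for $\gamma \in \{F, U\}$. (If $S_{\bot} = \{\bot\}$ is trivial the statement is vacuous.) Putting the three cases together shows $*$ is uniquely determined and equals the operation in \eqref{BasicAgreeableEq}, and combined with \eref{ExampleBasicAgreeable} this confirms that operation does make every basic $C$-set agreeable, so it is the unique such operation. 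I do not anticipate a genuine obstacle; the only subtlety is making sure the totality quasi-identity \eqref{EA5} is the right tool to kill the spurious value $U$ in case (c), which it is.
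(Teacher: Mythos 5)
Your proof is correct and follows essentially the same route as the paper: forcing $U$ when an argument is $\bot$ via \eqref{EA1}, forcing $s*s=T$ for $s\neq\bot$ by evaluating $(s*s)[s,\bot]$ under \eqref{EA4}, and in the distinct non-$\bot$ case ruling out $T$ via \eqref{EA2} and $U$ via \eqref{EA5} combined with $s*s=T$. The paper's argument is organized into the same three cases with the same axioms doing the same work, so there is nothing to add.
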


 \begin{proof}
  We shall show that for a basic $C$-set, axioms \eqref{EA4} to \eqref{EA5} restrict the operation $*$ to precisely \eqref{BasicAgreeableEq}. Let $(S_{\bot}, \mathbb{3})$ be a basic $C$-set which is agreeable, that is, it is equipped with an operation $* : S_{\bot} \times S_{\bot} \rightarrow \mathbb{3}$ which satisfies \eqref{EA4} - \eqref{EA5}. Consider the following cases:
 \begin{enumerate}[(i)]
  \item \emph{Case I}: $s = \bot$ or $t = \bot$: Then from \eqref{EA1}, we have $s * t = U$.
  \item \emph{Case II}: $s = t \text{ } (\neq \bot)$: We will show that neither $s * t = F$ nor $s * t = U$ is possible. Consequently, it must be the case that $s * t = T$.

  Assume that $s * t = F$. This, in conjunction with the hypothesis $s = t$ and \eqref{EA4}, gives that $\bot = F[s, \bot] = (s * t)[s, \bot] = (s * s)[s, \bot] = s$, a contradiction to our assumption that $s \neq \bot$. If $s * t = U$, along similar lines, we obtain $\bot = U[s, \bot] = (s * s)[s, \bot] = s$, a contradiction.
  \item \emph{Case III}: $s \neq t \text{ } (\neq \bot)$: On similar lines, we will show that $s * t \notin \{ T, U \}$, which would imply that $s * t = F$.

  Assume that $s * t = T$. It follows from \eqref{EA2} that $s = T[s, t] = (s * t)[s, t] = (s * t)[t, t] = T[t, t] = t$, a contradiction to the hypothesis $s \neq t$. Note that if $S_{\bot}$ has exactly two distinct elements then this case would be redundant. Suppose that $s * t = U$. \emph{Case II} proved above, in conjunction with the hypothesis that $s \neq \bot$, give that $s * s = T$. From the statements $s * s = T$, $s * t = U$ and quasi-identity \eqref{EA5}, we have $t = \bot$, a contradiction.
 \end{enumerate}
 Thus the operation $*$ must be as defined in \eqref{BasicAgreeableEq}.
 \end{proof}

\begin{example} \label{ExampleCAlgAgreeable}
 The $C$-set $(M, M)$ is agreeable under the operation
 $$\alpha * \beta = (\alpha \wedge \beta) \vee (\neg \alpha \wedge \neg \beta).$$
 The operation can be equivalently expressed in terms of the {\tt if-then-else} action by
 $$\alpha * \beta = \alpha \llbracket \beta, \neg \beta \rrbracket.$$
 For verification of the axioms \eqref{EA4} -- \eqref{EA5} refer to Appendix \ref{VerCAlgAgreeable}.
\end{example}

\begin{remark}
 If the $C$-algebra $M$ is  $\mathbb{3}^{X}$, the equality test on the agreeable $C$-set $(M, M)$ coincides with that of the functional case, as shown below:
 \begin{equation*}
  (\alpha * \beta)(x) =
  \begin{cases}
   T, & \text{ if } \alpha(x) = \beta(x) \text{ } (\neq U); \\
   F, & \text{ if } \alpha(x) \neq \beta(x) \text{ } (\neq U); \\
   U, & \text{ otherwise.}
  \end{cases}
 \end{equation*}
\end{remark}

We shall prove a representation theorem of agreeable $C$-sets along the lines of \tref{CsetSubdir}.

\begin{theorem} \label{TheoremRepAgreeableCSet}
Every agreeable $C$-set where $M$ is an ada is a subdirect product of agreeable basic $C$-sets.
\end{theorem}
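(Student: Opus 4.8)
The plan is to build on the subdirect representation of the underlying $C$-set given by \tref{CsetSubdir} and promote it to the agreeable setting. Since $M$ is an ada, \tref{CsetSubdir} already tells us that $(S_\bot, M)$ is a subdirect product of the basic $C$-sets $(S_\bot/E_\theta, M/\theta)$, where $\theta$ ranges over the maximal congruences on $M$ and $M/\theta \cong \mathbb{3}$. The key point to establish is that each congruence $(E_\theta, \theta)$ is compatible with the equality test $*$, i.e.\ that $(s,t),(u,v)\in E_\theta$ implies $(s*u, t*v)\in\theta$; once this is done, $*$ descends to a well-defined operation $\overline{*}$ on the quotient $(S_\bot/E_\theta, M/\theta)$, and by \pref{PropUniqueAgreeable} that operation must be the unique agreeable operation \eqref{BasicAgreeableEq} on the basic $C$-set, so each quotient is an agreeable basic $C$-set. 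The subdirect embedding of \tref{CsetSubdir} is then automatically a homomorphism of agreeable $C$-sets because it respects $*$ on each factor, and $\bigcap_\theta (E_\theta,\theta)$ is still trivial, giving the desired subdirect representation.

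The main work is therefore the compatibility claim. First I would reduce it, via \pref{PropCset}(ii) and (iv)-style manipulations, to showing $(s*u,\, t*u)\in\theta$ whenever $(s,t)\in E_\theta$ (handling the two arguments of $*$ one at a time, using symmetry of $*$ which follows from \eqref{EA3} applied with a suitable $\alpha$, or proved directly). So suppose $(s,t)\in E_\theta$: there is $\beta\in\overline{T}^\theta$ with $\beta[s,t]=\beta[t,t]$, and by \pref{PropCsetarrow} we may take $\beta\in M_\#$, so $\beta^{\downarrow}=\beta$. Applying the operation interchange axiom \eqref{EA3} with this $\beta$ gives
\[
\beta[s,u] * \beta[t,u] = \beta\llbracket s*t,\; u*u\rrbracket,
\qquad
\beta[t,u] * \beta[t,u] = \beta\llbracket t*t,\; u*u\rrbracket .
\]
Here I would exploit that $\beta[s,t]=\beta[t,t]$ together with \pref{PropCset}(iv) to get $\beta[s,u]=\beta[t,u]$, so the two left-hand sides above are equal; hence $\beta\llbracket s*t, u*u\rrbracket = \beta\llbracket t*t, u*u\rrbracket$ in the $C$-set $(M,M)$. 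By \lref{LemmaEThetaM} this forces $(s*t,\, t*t)\in\theta$ — wait, that is not quite what is wanted, so instead I would run the same argument directly with the pair $(s,u)$ versus $(t,u)$: from $\beta[s,t]=\beta[t,t]$ I get $(s,t)\in E_\theta$ collapses under $\nu_1$ of \lref{LemmaCong}, i.e.\ $\overline{s}^{E_\theta}=\overline{t}^{E_\theta}$, and then applying the \emph{homomorphism} $(\nu_1,\nu_2)$ of \lref{LemmaCong} to the agreeable axioms shows that $\overline{s}*\overline{u}$ is forced. Concretely: since $(\nu_1,\nu_2)$ is a $C$-set homomorphism and $*$ is definable from the $C$-set operations on $(M,M)$ via \eqref{EA3}, the relation $\nu_2(s*u) = \overline{s}^{E_\theta} \overline{*}\, \overline{u}^{E_\theta}$ can be checked by evaluating \eqref{EA3} at $\alpha\in\overline{T}^\theta$, $\overline{F}^\theta$, $\overline{U}^\theta$ and using \pref{PropAlphaST} and \lref{LemmaEThetaM}; this pins $s*u$ modulo $\theta$ entirely in terms of $\overline{s},\overline{u}$, hence $(s*u,t*v)\in\theta$ when $\overline{s}=\overline{t}$ and $\overline{u}=\overline{v}$.

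Having established compatibility, the remaining steps are routine: define $\overline{*}$ on $(S_\bot/E_\theta, M/\theta)$ by $\overline{s}\,\overline{*}\,\overline{t} = \overline{s*t}$; verify the agreeable axioms \eqref{EA4}--\eqref{EA5} pass to the quotient (each is either an identity, which descends automatically, or the quasi-identity \eqref{EA5}, which descends because $\nu_1$ is surjective and $t=\bot$ pulls back to $\overline{t}=\overline{\bot}$); invoke \pref{PropUniqueAgreeable} to identify $\overline{*}$ with \eqref{BasicAgreeableEq} so that $(S_\bot/E_\theta, M/\theta)$ is an \emph{agreeable} basic $C$-set; and conclude, using \lref{LemmaIntersec} and \rref{Remarkinttheta} exactly as in the proof of \tref{CsetSubdir}, that $(S_\bot, M)$ embeds subdirectly into $\prod_\theta (S_\bot/E_\theta, M/\theta)$ as agreeable $C$-sets. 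I expect the congruence-compatibility of $*$ to be the only genuine obstacle; everything downstream is bookkeeping built on \pref{PropUniqueAgreeable} and the machinery already assembled for \tref{CsetSubdir}.
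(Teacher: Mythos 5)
Your overall architecture matches the paper's: reuse the congruences $(E_{\theta}, \theta)$ from \tref{CsetSubdir}, prove they are compatible with $*$, descend $*$ to the quotients, and conclude via \lref{LemmaIntersec} and \rref{Remarkinttheta}. The gap is in the one step you yourself flag as the only genuine obstacle, namely the compatibility of $*$ with $E_{\theta}$. Your application of \eqref{EA3} pairs $\beta[s,u]$ with $\beta[t,u]$ and therefore produces $\beta\llbracket s*t,\, u*u\rrbracket$ --- an expression in $s*t$, not in $s*u$ --- which after the $E_{\theta_{M}}$ argument yields only $(s*t,\, t*t)\in\theta$. You notice this is not what is wanted, but your replacement argument does not close the gap: the claim that ``$*$ is definable from the $C$-set operations on $(M,M)$ via \eqref{EA3}'' is not true (\eqref{EA3} is an interchange law, not a definition of $*$), and the assertion $\nu_{2}(s*u)=\overline{s}\,\overline{*}\,\overline{u}$ presupposes that $\overline{*}$ is well defined on the quotient, which is precisely the compatibility statement being proved. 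As written, the key step is circular.

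The fix is small and stays inside your setup. For the one-argument-at-a-time version: given $(s,t)\in E_{\theta}$, choose $\beta\in\overline{T}^{\theta}$ with $\beta[s,t]=\beta[t,t]$ and apply $*$ against the trivial equation $\beta[u,u]=\beta[u,u]$; by \eqref{EA3},
\[
\beta[s,t]*\beta[u,u]=\beta\llbracket s*u,\; t*u\rrbracket, \qquad \beta[t,t]*\beta[u,u]=\beta\llbracket t*u,\; t*u\rrbracket,
\]
and since the left-hand sides are equal, \lref{LemmaEThetaM} gives $(s*u,\,t*u)\in E_{\theta_{M}}\subseteq\theta$; a symmetric step in the second argument plus transitivity finishes it. The paper instead handles both coordinates at once: from $(a_{1},a_{2}),(b_{1},b_{2})\in E_{\theta}$ it passes to the common witness $\alpha\wedge\beta\in\overline{T}^{\theta}$ using \eqref{EC8} and \pref{PropCset}(v), applies $*$ to the two resulting equations, and uses \eqref{EA3} once to obtain $(\alpha\wedge\beta)\llbracket a_{1}*b_{1}, a_{2}*b_{2}\rrbracket=(\alpha\wedge\beta)\llbracket a_{2}*b_{2}, a_{2}*b_{2}\rrbracket$, hence $(a_{1}*b_{1},a_{2}*b_{2})\in E_{\theta_{M}}\subseteq\theta$. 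One further caution on your closing ``bookkeeping'': quasi-identities such as \eqref{EA5} do not automatically descend to quotients, so ``verify the agreeable axioms pass to the quotient'' and then ``invoke \pref{PropUniqueAgreeable}'' needs more care than you give it; the paper avoids this by observing that only \eqref{EA3} is used in the argument and that the basic factors carry the operation \eqref{BasicAgreeableEq} in any case.
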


\begin{proof}
 Let $(S_{\bot}, M)$ be an agreeable $C$-set where $M$ is an ada. For every maximal congruences  $\theta$ on $M$, consider the pair $(E_{\theta}, \theta)$ as in \dref{DefinitionETheta}. By \lref{LemmaCong}, we have already ascertained that for each $\theta$, the pair  $(E_{\theta}, \theta)$ is a $C$-set congruence on $(S_{\bot}, M)$ and by \rref{RemarkQuotient} that $(S_{\bot} / E_{\theta}, M / \theta)$ is a basic $C$-set. In order to ascertain that this pair is indeed a congruence in the context of agreeable $C$-sets, it is sufficient to show that
 $$(a_{1}, a_{2}), (b_{1}, b_{2}) \in E_{\theta} \Rightarrow (a_{1} * b_{1}, a_{2} * b_{2}) \in \theta.$$
 Let $(a_{1}, a_{2}), (b_{1}, b_{2}) \in E_{\theta}$. Then there exist $\alpha \text{ and } \beta \in \overline{T}^{\theta}$ such that
 \begin{align}
  \alpha[a_{1}, a_{2}] & = \alpha[a_{2}, a_{2}] \label{eq1} \\
  \beta[b_{1}, b_{2}] & = \beta[b_{2}, b_{2}] \label{eq2}
 \end{align}
 Note that $(\alpha, T), (\beta, T) \in \theta$ implies that $(\alpha \wedge \beta, T \wedge T) = (\alpha \wedge \beta, T) \in \theta$. Applying \eqref{EC8} on \eqref{eq1} and \pref{PropCset}(v) on \eqref{eq2}, we have, for $(\alpha \wedge \beta) \in \overline{T}^{\theta}$ \\
 \begin{align}
 (\alpha \wedge \beta)[a_{1}, a_{2}] & = (\alpha \wedge \beta)[a_{2}, a_{2}] \label{eq3} \\
 (\alpha \wedge \beta)[b_{1}, b_{2}] & = (\alpha \wedge \beta)[b_{2}, b_{2}] \label{eq4}
 \end{align}
 These imply that
 \begin{equation}
  (\alpha \wedge \beta)[a_{1}, a_{2}] * (\alpha \wedge \beta)[b_{1}, b_{2}] = (\alpha \wedge \beta)[a_{2}, a_{2}] * (\alpha \wedge \beta)[b_{2}, b_{2}].
  \end{equation}
 From \eqref{EA3} it follows that
 $$(\alpha \wedge \beta) \llbracket a_{1} * b_{1}, a_{2} * b_{2} \rrbracket = (\alpha \wedge \beta) \llbracket a_{2} * b_{2}, a_{2} * b_{2} \rrbracket,$$
 so that $(a_{1} * b_{1}, a_{2} * b_{2}) \in E_{\theta_{M}} \subseteq \theta$, by \lref{LemmaEThetaM}. Further, by \lref{LemmaIntersec} and \rref{Remarkinttheta}, the intersection of all congruences $(E_{\theta}, \theta)$,  where $\theta$ ranges over all maximal congruences of $M$, is trivial. This completes the proof.
\end{proof}

\begin{corollary} \label{CorollaryIdentityAgreeable}
 An identity $p \approx q$ is satisfied in every agreeable $C$-set $(S_{\bot}, M)$ where $M$ is an ada if and only if it is satisfied in all agreeable basic $C$-sets.
\end{corollary}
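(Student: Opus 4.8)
The plan is to read this off directly from the subdirect representation established in \tref{TheoremRepAgreeableCSet}, using the standard fact that identities (whether between terms of sort $S_\bot$ or of sort $M$) are preserved under passing to subalgebras and to direct products, now in the two-sorted signature consisting of the $C$-algebra operations on $M$, the {\tt if-then-else} action, and the comparison operation $*$.

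For the forward implication I would simply note that every agreeable basic $C$-set is itself an agreeable $C$-set of the form $(S_\bot, \mathbb{3})$ whose test algebra $\mathbb{3}$ is an ada; hence any identity $p \approx q$ holding in all agreeable $C$-sets over adas holds, in particular, in every agreeable basic $C$-set. This direction is immediate and needs no use of the representation theorem.

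For the converse, suppose $p \approx q$ holds in all agreeable basic $C$-sets, and let $(S_\bot, M)$ be an arbitrary agreeable $C$-set with $M$ an ada. By \tref{TheoremRepAgreeableCSet}, $(S_\bot, M)$ is a subdirect product of the agreeable basic $C$-sets $(S_\bot / E_\theta, M/\theta)$ as $\theta$ ranges over the maximal congruences of $M$; in particular the pair of canonical maps $(\nu_1, \nu_2)$ yields an embedding of $(S_\bot, M)$ into the direct product $\prod_\theta (S_\bot / E_\theta, M/\theta)$. Each factor satisfies $p \approx q$ by hypothesis, a direct product of structures satisfying an identity again satisfies it, and an identity true in a structure is true in every substructure; chaining these three observations gives that $(S_\bot, M)$ satisfies $p \approx q$, which completes the argument.

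The only point requiring care — rather than a genuine obstacle — is that the embedding into the product must respect all the operations of the agreeable $C$-set signature, in particular $*$; but this is precisely what was verified in the proof of \tref{TheoremRepAgreeableCSet}, namely that each $(E_\theta, \theta)$ is a congruence of agreeable $C$-sets and that $\bigcap_\theta (E_\theta, \theta)$ is trivial (via \lref{LemmaIntersec} and \rref{Remarkinttheta}). Granting that, the corollary follows with no further computation.
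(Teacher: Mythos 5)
Your argument is correct and is exactly how the paper intends this corollary to be read off from \tref{TheoremRepAgreeableCSet}: the paper states it without further proof as an immediate consequence of the subdirect representation, relying on the same standard preservation of identities under subalgebras and direct products in the two-sorted signature. Your additional observation that the forward direction only needs $\mathbb{3}$ to be an ada is a correct and harmless elaboration.
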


In view of \cref{CorollaryIdentityAgreeable} and \eqref{BasicAgreeableEq}, we have the following result.

\begin{corollary}
 In every agreeable $C$-set where $M$ is an ada we have $s * t = t * s$.
\end{corollary}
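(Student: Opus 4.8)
The plan is to reduce the statement to the basic case via the representation theorem and then read off symmetry directly from the formula for $*$. By \cref{CorollaryIdentityAgreeable}, an identity in the language of agreeable $C$-sets holds in every agreeable $C$-set over an ada if and only if it holds in every agreeable basic $C$-set. Since $s * t = t * s$ is precisely such an identity — both sides are terms of sort $M$ built from $*$ and the $C$-set operations applied to the variables $s, t$ — it suffices to verify it in an arbitrary agreeable basic $C$-set.

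So I would let $(S_{\bot}, \mathbb{3})$ be an agreeable basic $C$-set. By \pref{PropUniqueAgreeable}, its comparison operation is forced to be the one given in \eqref{BasicAgreeableEq}. The key observation is then that the three mutually exclusive conditions defining \eqref{BasicAgreeableEq} — namely ``$s = t \text{ } (\neq \bot)$'', ``$s \neq t \text{ } (\neq \bot)$'', and ``$s = \bot$ or $t = \bot$'' — are each symmetric in $s$ and $t$: equality and disequality are symmetric relations, the side condition $s \neq \bot \neq t$ is symmetric, and the disjunction ``$s = \bot$ or $t = \bot$'' is visibly symmetric. Hence, for every pair $s, t \in S_{\bot}$, the pairs $(s,t)$ and $(t,s)$ fall into the same case of \eqref{BasicAgreeableEq}, and therefore $s * t = t * s$.

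Combining the two steps, $s * t = t * s$ holds in every agreeable basic $C$-set, and so by \cref{CorollaryIdentityAgreeable} it holds in every agreeable $C$-set whose $C$-algebra is an ada. There is no real obstacle in this argument; the only point worth a moment's care is confirming that $s * t = t * s$ genuinely is an identity of the kind to which \cref{CorollaryIdentityAgreeable} applies (which it is, both sides being $M$-valued terms), after which the conclusion is immediate from the manifestly symmetric shape of \eqref{BasicAgreeableEq}.
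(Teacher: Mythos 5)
Your proof is correct and is exactly the argument the paper intends: the corollary is stated "in view of Corollary \ref{CorollaryIdentityAgreeable} and \eqref{BasicAgreeableEq}," i.e., reduce to agreeable basic $C$-sets and observe that the operation \eqref{BasicAgreeableEq} is manifestly symmetric in $s$ and $t$. Your extra appeal to Proposition \ref{PropUniqueAgreeable} to pin down the operation on a basic agreeable $C$-set is a sensible way to make the implicit step explicit.
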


Note that the only axiom of agreeable $C$-sets that plays a role in the proof of \tref{TheoremRepAgreeableCSet} is \eqref{EA3}. The remaining axioms have been included in order that the operation on agreeable basic $C$-sets be uniquely defined. The proof of \tref{TheoremRepAgreeableCSet} suggests an alternative proof for \tref{AgreeableBsetSubdir} of \cite{jackson09}, without using the commutativity of $*$.  We provide an alternative proof in the following.

\begin{theorem}[\cite{jackson09}] \label{Prop.JS.Agreeable}
 Every agreeable $B$-set $(S, B)$ is a subdirect product of basic $B$-sets.
\end{theorem}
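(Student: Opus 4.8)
The plan is to transport the argument for \tref{TheoremRepAgreeableCSet} to the Boolean setting. Since a Boolean algebra is not an ada, the lemmas of \sref{RepresentationCset} cannot be quoted verbatim; instead I would re-run them in the simpler $B$-set environment, where \eqref{B1} is available and there is no ``undefined'' class to track. Fix a maximal congruence $\theta$ on $B$, so $B/\theta\cong\mathbb 2$ with classes $\overline T^{\theta},\overline F^{\theta}$, and set, in analogy with \dref{DefinitionETheta},
\[E_{\theta}=\{(s,t)\in S\times S:\ \beta[s,t]=t\ \text{for some}\ \beta\in\overline T^{\theta}\},\]
which equals $\{(s,t):\beta[s,t]=\beta[t,t]\ \text{for some}\ \beta\in\overline T^{\theta}\}$ by \eqref{B1}. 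Reflexivity is immediate; symmetry follows from \eqref{B2} exactly as in \lref{LemmaEquiv}; transitivity is obtained by merging two witnesses into $\alpha\wedge\beta\in\overline T^{\theta}$ and using \eqref{B2}, the $\wedge$-compatibility property (which holds in every $B$-set, by the remark following \pref{PropCset}, via commutativity of $\wedge$) and \pref{PropCset}(v). That $(E_{\theta},\theta)$ is a $B$-set congruence is then the $B$-set analogue of \pref{PropAlphaST} and \lref{LemmaCong}: for $\alpha\in\overline T^{\theta}$ one gets $(\alpha[s,t],s)\in E_{\theta}$ from \eqref{B2}, and for $\alpha\in\overline F^{\theta}$ one gets $(\alpha[s,t],t)\in E_{\theta}$ from \eqref{B5} and \eqref{B3}; there is no third case.

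Next I would carry out the one step that involves the equality test, namely the $B$-set version of \lref{LemmaEThetaM}, which (exactly as in \tref{TheoremRepAgreeableCSet}) uses only axiom \eqref{AB3}. Given $(a_{1},a_{2}),(b_{1},b_{2})\in E_{\theta}$ with witnesses $\alpha,\beta\in\overline T^{\theta}$, put $\gamma=\alpha\wedge\beta\in\overline T^{\theta}$; applying $\wedge$-compatibility and \pref{PropCset}(v) together with \eqref{B1} gives $\gamma[a_{1},a_{2}]=a_{2}$ and $\gamma[b_{1},b_{2}]=b_{2}$. Applying $*$ and then \eqref{AB3} yields
\[a_{2}*b_{2}=\gamma[a_{1},a_{2}]*\gamma[b_{1},b_{2}]=\gamma[a_{1}*b_{1},\,a_{2}*b_{2}],\]
where on the right $\gamma[\ ,\ ]$ is the action of the $B$-set $(B,B)$, i.e. $\gamma[\mu,\nu]=(\gamma\wedge\mu)\vee(\neg\gamma\wedge\nu)$. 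Since $\gamma\equiv_{\theta}T$ and $\theta$ is a Boolean congruence, $\gamma[\mu,\nu]\equiv_{\theta}\mu$; taking $\mu=a_{1}*b_{1}$ gives $(a_{1}*b_{1},a_{2}*b_{2})\in\theta$. Hence $(E_{\theta},\theta)$ is a congruence of the agreeable $B$-set, so the quotient $(S/E_{\theta},\mathbb 2)$ inherits an equality test, and as in the proof of \pref{PropUniqueAgreeable} axioms \eqref{AB1} and \eqref{AB2} force this test to be the standard one; thus each quotient is a basic agreeable $B$-set.

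Finally, the intersection step. By \tref{BsetSubdir}, $(S,B)$ is a subalgebra of some product $\prod_{x\in X}(S_{x},\mathbb 2)$, so $B$ embeds into $\mathbb 2^{X}$ with pointwise operations and $S$ into $\big(\bigcup_{x\in X}S_{x}\big)^{X}$ with the pointwise action; for each $x_{o}\in X$ the evaluation kernel $\theta_{x_{o}}=\{(\alpha,\beta):\alpha(x_{o})=\beta(x_{o})\}$ is a maximal congruence on $B$. If $(s,t)\in\bigcap_{x_{o}}E_{\theta_{x_{o}}}$, then for each $x_{o}$ there is $\beta$ with $\beta(x_{o})=T$ and $\beta[s,t]=t$, so evaluating at $x_{o}$ gives $s(x_{o})=t(x_{o})$; hence $s=t$, and since also $\bigcap_{x_{o}}\theta_{x_{o}}=\Delta_{B}$, the induced map exhibits $(S,B)$ as a subdirect product of the basic agreeable $B$-sets $(S/E_{\theta_{x_{o}}},\mathbb 2)$. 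I do not foresee a genuine obstacle: the work is the careful re-derivation of the three lemmas in the $B$-set setting and attention to the argument order in \eqref{AB3}. The point of the exercise, exactly as for the $C$-set version, is that commutativity of $*$ is never invoked — indeed it drops out of the representation, being trivially true in every basic factor.
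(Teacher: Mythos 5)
Your proposal is correct and follows essentially the same route as the paper: form the relation $E_{\theta}$ from a maximal congruence (equivalently, an ultrafilter) on $B$, verify the congruence property for $*$ via a single application of \eqref{AB3} after merging the two witnesses into $\alpha\wedge\beta$, and conclude by triviality of the intersection. The only cosmetic difference is that the paper phrases the last reduction in ultrafilter language (showing $a_{1}*b_{1}\in F\Leftrightarrow a_{2}*b_{2}\in F$ via the filter order), whereas you reduce $\gamma[\mu,\nu]\equiv_{\theta}\mu$ directly modulo the congruence; these are interchangeable.
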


\begin{proof}
 Let $F$ be an ultrafilter of $B$. Consider the relation $E_{F} = \{ (s, t) \in S \times S : \gamma[s, t] = t \text{ for some } \gamma \in F \}$ as defined in \cite{jackson09}. The pair $(E_{F}, F)$ is a $B$-set congruence. In order to show that the pair $(E_{F}, F)$ is a congruence on agreeable $B$-sets, we show that
 $$(a_{1}, a_{2}), (b_{1}, b_{2}) \in E_{F} \Rightarrow (a_{1} * b_{1}, a_{2} * b_{2}) \in \theta_{F},$$ where $\theta_{F}$ is the congruence on $B$ induced by the ultrafilter $F$.

 Since $(a_{1}, a_{2}), (b_{1}, b_{2}) \in E_{F}$, there exist $\alpha, \beta \in F$ such that
 \begin{align}
  \alpha[a_{1}, a_{2}] & = a_{2} \label{eq3} \\
  \beta[b_{1}, b_{2}] & = b_{2} \label{eq4}
 \end{align}
 In view of the commutativity of $\wedge$, \eqref{B6}, \eqref{eq3} and \eqref{B1}, we obtain $(\alpha \wedge \beta)[a_{1}, a_{2}] = (\beta \wedge \alpha)[a_{1}, a_{2}] = \beta[\alpha[a_{1}, a_{2}], a_{2}] = \beta[a_{2}, a_{2}] = a_{2}$. Similarly we can obtain \break $(\alpha \wedge \beta)[b_{1}, b_{2}] = b_{2}$. This implies that
 \begin{equation*}
  (\alpha \wedge \beta)[a_{1}, a_{2}] * (\alpha \wedge \beta)[b_{1}, b_{2}] = a_{2} * b_{2}
 \end{equation*}
 From axiom \eqref{AB3}, we can deduce that $$(\alpha \wedge \beta)[a_{1} * b_{1}, a_{2} * b_{2}] = a_{2} * b_{2}.$$
 Since $F$ is an ultrafilter of $B$, it suffices to ascertain that
 $$a_{1} * b_{1} \in F \Leftrightarrow a_{2} * b_{2} \in F.$$

 Assume that $a_{1} * b_{1} \in F$. Since $\alpha \wedge \beta \in F$, we have $(\alpha \wedge \beta) \wedge (a_{1} * b_{1}) \in F$. Further, as $F$ is a filter and $(\alpha \wedge \beta) \wedge (a_{1} * b_{1}) \leq ((\alpha \wedge \beta) \wedge (a_{1} * b_{1})) \vee (\neg (\alpha \wedge \beta) \wedge (a_{2} * b_{2})) = (\alpha \wedge \beta) [ a_{1} * b_{1}, a_{2} * b_{2} ]$ we have $(\alpha \wedge \beta) [ a_{1} * b_{1}, a_{2} * b_{2} ] = a_{2} * b_{2} \in F$.

 Conversely, assume that $a_{2} * b_{2} \in F$. The symmetry of equivalence relation $E_{F}$ implies that $(a_{2}, a_{1}), (b_{2}, b_{1}) \in E_{F}$. Along similar lines as above, we obtain $a_{1} * b_{1} \in F$. This completes the proof.
\end{proof}

\section{Conclusion}

The axiomatization of systems based on various program constructs are extremely useful in the study of program semantics, in general, and in establishing program equivalence, in particular. While many authors have studied the axiomatization of the {\tt if-then-else} construct, this work considered the case where the programs and tests could possibly be non-halting. In this connection, this work introduced the notion of $C$-sets to axiomatize the systems of {\tt if-then-else} in which the tests are drawn from an abstract $C$-algebra. The axioms of $C$-sets include a quasi-identity for $\wedge$-compatibility along with various other identities.  When the $C$-algebra is an ada, we obtained a subdirect representation of $C$-sets through basic $C$-sets. This in turn establishes the completeness of the axiomatization and paves the way for determining the equivalence of programs under consideration through basic $C$-sets.
Further, in order to axiomatize {\tt if-then-else} systems with the equality test, this work extended the concept of $C$-sets to agreeable $C$-sets and obtained similar results.

To study further, one may investigate a complete axiomatization of the systems under consideration in which all the axioms are equations. It is also desirable to extend the results to the general case of $C$-sets without restricting the $C$-algebra to be an ada.

\appendix

\section{Proofs} \label{SectionAnnexure}

\subsection{Verification of \eref{e-m-m}} \label{VerCAlgCset}

\begin{flushleft}

 Axiom \eqref{EC1}: We will show that $U \llbracket \alpha, \beta \rrbracket = U$. Here $U \llbracket \alpha, \beta \rrbracket = (U \wedge \alpha) \vee (\neg U \wedge \beta) = (U \wedge \alpha) \vee (U \wedge \beta) = U \vee U = U$. \newline

 Axiom \eqref{EC6}: By definition we have $F \llbracket \alpha, \beta \rrbracket = (F \wedge \alpha) \vee (\neg F \wedge \beta) = (F \wedge \alpha) \vee (T \wedge \beta) = F \vee \beta = \beta$. \newline

 Axiom \eqref{EC5}: We have $(\neg \alpha) \llbracket \beta, \gamma \rrbracket = (\neg \alpha \wedge \beta) \vee ( \alpha \wedge \gamma)$ while $\alpha \llbracket \gamma, \beta \rrbracket = (\alpha \wedge \gamma) \vee (\neg \alpha \wedge \beta)$. Thus we have to check the validity of the identity $(\neg \alpha \wedge \beta) \vee ( \alpha \wedge \gamma) = (\alpha \wedge \gamma) \vee (\neg \alpha \wedge \beta)$. In view of \tref{SubdirIrredCAlg}, this identity is valid in all $C$-algebras if it is valid in the three element $C$-algebra $\mathbb{3}$.
 \begin{enumerate}[\rm]
  \item \emph{Case I}: $\alpha = T$: In this case $\neg \alpha \llbracket \beta, \gamma \rrbracket = (\neg T \wedge \beta) \vee (T \wedge \gamma) = (F \wedge \beta) \vee (T \wedge \gamma) = F \vee \gamma = \gamma$. On the other hand $\alpha \llbracket \gamma, \beta \rrbracket = (T \wedge \gamma) \vee (\neg T \wedge \beta) = \gamma \vee F = \gamma$.
  \item \emph{Case II}: $\alpha = F$: Here we have $\neg \alpha \llbracket \beta, \gamma \rrbracket = (\neg F \wedge \beta) \vee (F \wedge \gamma) = \beta \vee F = \beta$. Also $\alpha \llbracket \gamma, \beta \rrbracket = (F \wedge \gamma) \vee (\neg F \wedge \beta) = F \vee \beta = \beta$.
  \item \emph{Case III}: $\alpha = U$: This reduces to the verification of axiom \eqref{EC1}, from which it follows that both expressions evaluate to $U$. \newline
 \end{enumerate}

 Axiom \eqref{EC3}: We will show that $\alpha \llbracket \alpha \llbracket \beta, \gamma \rrbracket, \delta \rrbracket = \alpha \llbracket \beta, \delta \rrbracket$. Here $\alpha \llbracket \alpha \llbracket \beta, \gamma \rrbracket, \delta \rrbracket = (\alpha \wedge ((\alpha \wedge \beta) \vee (\neg \alpha \wedge \gamma))) \vee (\neg \alpha \wedge \delta)$. Also $\alpha \llbracket \beta, \delta \rrbracket = (\alpha \wedge \beta) \vee (\neg \alpha \wedge \delta)$. As above, the verification reduces to checking the validity in $C$-algebra $\mathbb{3}$.
 \begin{enumerate}[\rm]
  \item \emph{Case I}: $\alpha = T$: In this case $\alpha \llbracket \alpha \llbracket \beta, \gamma \rrbracket, \delta \rrbracket = (T \wedge ((T \wedge \beta) \vee (\neg T \wedge \gamma))) \vee (\neg T \wedge \delta) = (\beta \vee F) \vee F = \beta$. Also $\alpha \llbracket \beta, \delta \rrbracket = (T \wedge \beta) \vee (\neg T \wedge \delta) = \beta \vee F = \beta$.
  \item \emph{Case II}: $\alpha = F$: Here we have $\alpha \llbracket \alpha \llbracket \beta, \gamma \rrbracket, \delta \rrbracket = (F \wedge ((F \wedge \beta) \vee (\neg F \wedge \gamma))) \vee (\neg F \wedge \delta) = F \vee \delta = \delta$. Also $\alpha \llbracket \beta, \delta \rrbracket = (F \wedge \beta) \vee (\neg F \wedge \delta) = F \vee \delta = \delta$.
  \item \emph{Case III}: $\alpha = U$: Since $\neg U = U$ and $U$ is a left-identity for $\wedge$ and $\vee$ in this case, both expressions evaluate to $U$. \newline
 \end{enumerate}

 Axiom \eqref{EC4}: Here $\alpha \llbracket \beta, \alpha \llbracket \gamma, \delta \rrbracket \rrbracket = (\alpha \wedge \beta) \vee (\neg \alpha \wedge ((\alpha \wedge \gamma) \vee (\neg \alpha \wedge \delta)))$ while $\alpha \llbracket \beta, \delta \rrbracket = (\alpha \wedge \beta) \vee (\neg \alpha \wedge \delta)$. In view of \tref{SubdirIrredCAlg} it suffices to consider the following three cases:
 \begin{enumerate}[\rm]
  \item \emph{Case I}: $\alpha = T$: Here $(\alpha \wedge \beta) \vee (\neg \alpha \wedge ((\alpha \wedge \gamma) \vee (\neg \alpha \wedge \delta))) = (T \wedge \beta) \vee (\neg T \wedge ((T \wedge \gamma) \vee (\neg T \wedge \delta))) = \beta \vee F = \beta$ while $(\alpha \wedge \beta) \vee (\neg \alpha \wedge \delta) = (T \wedge \beta) \vee (\neg T \wedge \delta) = \beta \vee F = \beta$.
  \item \emph{Case II}: $\alpha = F$: In this case $(\alpha \wedge \beta) \vee (\neg \alpha \wedge ((\alpha \wedge \gamma) \vee (\neg \alpha \wedge \delta))) = (F \wedge \beta) \vee (\neg F \wedge ((F \wedge \gamma) \vee (\neg F \wedge \delta))) = F \vee (F \vee \delta) = \delta$. On the other hand $(\alpha \wedge \beta) \vee (\neg \alpha \wedge \delta) = (F \wedge \beta) \vee (\neg F \wedge \delta) = F \vee \delta = \delta$.
  \item \emph{Case III}: $\alpha = U$: It is easy to see that both expressions in this case evaluate to $U$. \newline
 \end{enumerate}

 Axiom \eqref{EC7}: Here $(\alpha \wedge \beta) \llbracket \gamma, \delta \rrbracket = ((\alpha \wedge \beta) \wedge \gamma) \vee (\neg (\alpha \wedge \beta) \wedge \delta)$ while $\alpha \llbracket \beta \llbracket \gamma, \delta \rrbracket, \delta \rrbracket = (\alpha \wedge ((\beta \wedge \gamma) \vee (\neg \beta \wedge \delta))) \vee (\neg \alpha \wedge \delta)$. It suffices to consider the following three cases:
 \begin{enumerate}[\rm]
  \item \emph{Case I}: $\alpha = T$: Here $(\alpha \wedge \beta) \llbracket \gamma, \delta \rrbracket = ((T \wedge \beta) \wedge \gamma) \vee (\neg (T \wedge \beta) \wedge \delta) = (\beta \wedge \gamma) \vee (\neg \beta \wedge \delta)$. On the other hand $\alpha \llbracket \beta \llbracket \gamma, \delta \rrbracket, \delta \rrbracket = (T \wedge ((\beta \wedge \gamma) \vee (\neg \beta \wedge \delta))) \vee (\neg T \wedge \delta) = ((\beta \wedge \gamma) \vee (\neg \beta \wedge \delta)) \vee F = (\beta \wedge \gamma) \vee (\neg \beta \wedge \delta)$.
  \item \emph{Case II}: $\alpha = F$: In this case we have $(\alpha \wedge \beta) \llbracket \gamma, \delta \rrbracket = ((F \wedge \beta) \wedge \gamma) \vee (\neg (F \wedge \beta) \wedge \delta) = F \vee \delta = \delta$. Also $\alpha \llbracket \beta \llbracket \gamma, \delta \rrbracket, \delta \rrbracket = (F \wedge ((\beta \wedge \gamma) \vee (\neg \beta \wedge \delta))) \vee (\neg F \wedge \delta) = F \vee \delta = \delta$.
  \item \emph{Case III}: $\alpha = U$: It is easy to see that both expressions in this case evaluate to $U$. \newline
 \end{enumerate}

 Axiom \eqref{EC2}: Here $\alpha \llbracket \beta \llbracket \gamma, \delta \rrbracket, \beta \llbracket \rho, \omega \rrbracket \rrbracket = (\alpha \wedge ((\beta \wedge \gamma) \vee (\neg \beta \wedge \delta ))) \vee (\neg \alpha \wedge ((\beta \wedge \rho) \vee (\neg \beta \wedge \omega)))$. On the other hand $\beta \llbracket \alpha \llbracket \gamma, \rho \rrbracket, \alpha \llbracket \delta, \omega \rrbracket \rrbracket = (\beta \wedge ((\alpha \wedge \gamma) \vee (\neg \alpha \wedge \rho ))) \vee (\neg \beta \wedge ((\alpha \wedge \delta) \vee (\neg \alpha \wedge \omega)))$. It suffices to consider the following three cases:
 \begin{enumerate}[\rm]
  \item \emph{Case I}: $\alpha = T$: Here $\alpha \llbracket \beta \llbracket \gamma, \delta \rrbracket, \beta \llbracket \rho, \omega \rrbracket \rrbracket = (T \wedge ((\beta \wedge \gamma) \vee (\neg \beta \wedge \delta ))) \vee (\neg T \wedge ((\beta \wedge \rho) \vee (\neg \beta \wedge \omega))) = ((\beta \wedge \gamma) \vee (\neg \beta \wedge \delta )) \vee F = (\beta \wedge \gamma) \vee (\neg \beta \wedge \delta)$. Also $\beta \llbracket \alpha \llbracket \gamma, \rho \rrbracket, \alpha \llbracket \delta, \omega \rrbracket \rrbracket = (\beta \wedge ((T \wedge \gamma) \vee (\neg T \wedge \rho ))) \vee (\neg \beta \wedge ((T \wedge \delta) \vee (\neg T \wedge \omega))) = (\beta \wedge (\gamma \vee F)) \vee (\neg \beta \wedge (\delta \vee F)) = (\beta \wedge \gamma) \vee (\neg \beta \wedge \delta)$.
  \item \emph{Case II}: $\alpha = F$: In this case $\alpha \llbracket \beta \llbracket \gamma, \delta \rrbracket, \beta \llbracket \rho, \omega \rrbracket \rrbracket = (F \wedge ((\beta \wedge \gamma) \vee (\neg \beta \wedge \delta ))) \vee (\neg F \wedge ((\beta \wedge \rho) \vee (\neg \beta \wedge \omega))) = F \vee ((\beta \wedge \rho) \vee (\neg \beta \wedge \omega)) = (\beta \wedge \rho) \vee (\neg \beta \wedge \omega)$. Similarly $\beta \llbracket \alpha \llbracket \gamma, \rho \rrbracket, \alpha \llbracket \delta, \omega \rrbracket \rrbracket = (\beta \wedge ((F \wedge \gamma) \vee (\neg F \wedge \rho ))) \vee (\neg \beta \wedge ((F \wedge \delta) \vee (\neg F \wedge \omega))) = (\beta \wedge (F \vee \rho)) \vee (\neg \beta \wedge (F \vee \omega)) = (\beta \wedge \rho) \vee (\neg \beta \wedge \omega)$.
  \item \emph{Case III}: $\alpha = U$: In this case both expressions evaluate to $U$ by checking casewise for $\beta \in \{ T, F, U \}$. \newline
 \end{enumerate}

 Axiom \eqref{EC8}: In view of \tref{SubdirIrredCAlg} we have $M \leq \mathbb{3}^{X}$ for some set $X$. Since $\alpha \llbracket \gamma, \delta \rrbracket = \alpha \llbracket \delta, \delta \rrbracket$ we have $(\alpha \llbracket \gamma, \delta \rrbracket)(x) = (\alpha \llbracket \delta, \delta \rrbracket)(x)$ for all $x \in X$ by treating $\alpha, \gamma, \delta$ as elements of $\mathbb{3}^{X}$. This reduces to the expression $(\alpha(x) \wedge \gamma(x)) \vee (\neg \alpha(x) \wedge \delta(x)) = (\alpha(x) \wedge \delta(x)) \vee (\neg \alpha(x) \wedge \delta(x))$ which holds for all $x \in X$. Consider $\beta \in M$ as an element in $\mathbb{3}^{X}$. Now $((\alpha \wedge \beta) \llbracket \gamma, \delta \rrbracket)(x) = ((\alpha(x) \wedge \beta(x)) \wedge \gamma(x)) \vee (\neg (\alpha(x) \wedge \beta(x)) \wedge \delta(x))$. On similar lines $((\alpha \wedge \beta) \llbracket \delta, \delta \rrbracket)(x) = ((\alpha(x) \wedge \beta(x)) \wedge \delta(x)) \vee (\neg (\alpha(x) \wedge \beta(x)) \wedge \delta(x))$. It suffices to consider the following three cases:
 \begin{enumerate}[\rm]
  \item \emph{Case I}: $\alpha(x) = T$: From the hypothesis we have $(\alpha(x) \wedge \gamma(x)) \vee (\neg \alpha(x) \wedge \delta(x)) = (\alpha(x) \wedge \delta(x)) \vee (\neg \alpha(x) \wedge \delta(x))$ that is $(T \wedge \gamma(x)) \vee (\neg T \wedge \delta(x)) = (T \wedge \delta(x)) \vee (\neg T \wedge \delta(x))$. Consequently we have $\gamma(x) = \delta(x)$. Thus $((\alpha \wedge \beta) \llbracket \gamma, \delta \rrbracket)(x) = ((T \wedge \beta(x)) \wedge \gamma(x)) \vee (\neg (T \wedge \beta(x)) \wedge \delta(x)) = (\beta(x) \wedge \gamma(x)) \vee (\neg \beta(x) \wedge \delta(x)) = (\beta(x) \wedge \delta(x)) \vee (\neg \beta(x) \wedge \delta(x)) = ((T \wedge \beta(x)) \wedge \delta(x)) \vee (\neg (T \wedge \beta(x)) \wedge \delta(x)) = ((\alpha \wedge \beta) \llbracket \delta, \delta \rrbracket)(x)$.
  \item \emph{Case II}: $\alpha(x) = F$: In this case $((\alpha \wedge \beta) \llbracket \gamma, \delta \rrbracket)(x) = ((F \wedge \beta(x)) \wedge \gamma(x)) \vee (\neg (F \wedge \beta(x)) \wedge \delta(x)) = F \vee \delta(x) = ((F \wedge \beta(x)) \wedge \delta(x)) \vee (\neg (F \wedge \beta(x)) \wedge \delta(x)) = ((\alpha \wedge \beta) \llbracket \delta, \delta \rrbracket)(x)$.
  \item \emph{Case III}: $\alpha(x) = U$: It is easy to see that $((\alpha \wedge \beta) \llbracket \gamma, \delta \rrbracket)(x) = U = ((\alpha \wedge \beta) \llbracket \delta, \delta \rrbracket)(x)$.
 \end{enumerate}
 Thus $((\alpha \wedge \beta) \llbracket \gamma, \delta \rrbracket)(x) = ((\alpha \wedge \beta) \llbracket \delta, \delta \rrbracket)(x)$ for all $x \in X$. This implies that the identity $(\alpha \wedge \beta) \llbracket \gamma, \delta \rrbracket = (\alpha \wedge \beta) \llbracket \delta, \delta \rrbracket$ holds in the $C$-algebra $M$. \newline

\end{flushleft}

\subsection{Verification of \eref{ExampleFunctionalCset}} \label{VerFunctionalCset}

\begin{flushleft}

In order to verify the axioms we will rely on the pairs of sets representation of the $C$-algebra $\mathbb{3}^{X}$ by Guzm\'{a}n and Squier in \cite{guzman90}. Every $\alpha \in \mathbb{3}^{X}$ can be represented by the pair of sets $(A, B) = (\alpha^{-1}(T), \alpha^{-1}(F))$. In this representation ${\bf T} = (X, \emptyset), {\bf F} = (\emptyset, X)$ and ${\bf U} = (\emptyset, \emptyset)$. Thus

\begin{equation*}
 \alpha[f, g](x) = (A, B)[f, g](x) = \begin{cases}
                                      f(x), & \text{ if } x \in A; \\
                                      g(x), & \text{ if } x \in B; \\
                                      \bot, & \text{ otherwise.}
                                     \end{cases}
\end{equation*}

 Axiom \eqref{EC1}: In view of this notation we have ${\bf U}[f, g](x) = (\emptyset, \emptyset)[f, g](x) = \bot$ for all $x \in X_{\bot}$. Thus ${\bf U}[f, g] = \zeta_{\bot}$. \newline

 Axiom \eqref{EC6}: We have ${\bf F}[f, g](x) = (\emptyset, X)[f, g](x) = g(x)$ for all $x \in X_{\bot}$. It follows that ${\bf F}[f, g] = g$. \newline

 Axiom \eqref{EC5}: If $\alpha \in \mathbb{3}^{X}$ is represented by $(A, B)$ then $\neg \alpha$ is represented by $(B, A)$. Thus $(\neg \alpha)[f, g](x) = (B, A)[f, g](x)$. It follows that

 \begin{align*}
  (B, A)[f, g](x) & =  \begin{cases}
                         f(x), & \text{ if } x \in B; \\
                         g(x), & \text{ if } x \in A; \\
                         \bot, & \text{ otherwise.} \\
                        \end{cases} \\
                  & =  \begin{cases}
                         g(x), & \text{ if } x \in A; \\
                         f(x), & \text{ if } x \in B; \\
                         \bot, & \text{ otherwise.} \\
                        \end{cases} \\
                  & =   (A, B)[g, f](x).
 \end{align*}

 Thus $(\neg \alpha)[f, g] = \alpha[g, f]$. \newline

 Axiom \eqref{EC3}: Let $\alpha \in \mathbb{3}^{X}$ be represented by $(A, B)$.
 \begin{equation*}
 (A, B)[(A, B)[f, g], h](x) = \begin{cases}
                               (A, B)[f, g](x), & \text{ if } x \in A; \\
                               h(x), & \text{ if } x \in B; \\
                               \bot, & \text{ otherwise.}
                              \end{cases}
 \end{equation*}

 Since $(A, B)[f, g](x) = f(x)$ when $x \in A$ we have
 \begin{align*}
 (A, B)[(A, B)[f, g], h](x) & = \begin{cases}
                                f(x), & \text{ if } x \in A; \\
                                h(x), & \text{ if } x \in B; \\
                                \bot, & \text{ otherwise.} \\
                               \end{cases} \\
                            & = (A, B)[f, h](x).
 \end{align*}

 Thus $\alpha[\alpha[f, g],  h] = \alpha[f, h]$. \newline

 Axiom \eqref{EC4}: Let $(A, B)$ represent $\alpha \in \mathbb{3}^{X}$.

 \begin{equation*}
 (A, B)[f, (A, B)[g, h]](x) = \begin{cases}
                               f(x), & \text{ if } x \in A; \\
                               (A, B)[g, h](x), & \text{ if } x \in B; \\
                               \bot, & \text{ otherwise.} \newline
                              \end{cases}
 \end{equation*}

 Since $(A, B)[g, h](x) = h(x)$ when $x \in B$ we have

 \begin{align*}
 (A, B)[f, (A, B)[g, h]](x) & = \begin{cases}
                                f(x), & \text{ if } x \in A; \\
                                h(x), & \text{ if } x \in B; \\
                                \bot, & \text{ otherwise.} \\
                               \end{cases} \\
                            & = (A, B)[f, h](x). \end{align*}

 Thus $\alpha[f, \alpha[g, h]] = \alpha[f, h]$. \newline

 Axiom \eqref{EC7}: Let $\alpha$ and $\beta \in \mathbb{3}^{X}$ be represented by $(A, B)$ and $(C, D)$ respectively. We have $(A, B) \wedge (C, D) = (A \cap C, B \cup (A \cap D))$. Thus we have the following:

 \begin{equation*}
 (\alpha \wedge \beta)[f, g](x) = (A \cap C, B \cup (A \cap D))[f, g](x) = \begin{cases}
                                                                            f(x), & \text{ if } x \in A \cap C; \\
                                                                            g(x), & \text{ if } x \in B \cup (A \cap D); \\
                                                                            \bot, & \text{ otherwise.}
                                                                           \end{cases}
 \end{equation*}

 Similarly we have

 \begin{align*}
 (A, B)[(C, D)[f, g], g](x) & = \begin{cases}
                                (C, D)[f, g](x), & \text{ if } x \in A; \\
                                g(x), & \text{ if } x \in B; \\
                                \bot, & \text{ otherwise.} \\
                               \end{cases} \\
                           & = \begin{cases}
                                f(x), & \text{ if } x \in A \cap C; \\
                                g(x), & \text{ if } x \in A \cap D; \\
                                \bot, & \text{ if } x \in A \cap (X \setminus (C \cup D)); \\
                                g(x), & \text{ if } x \in B; \\
                                \bot, & \text{ otherwise.} \\
                               \end{cases} \\
                           & = \begin{cases}
                                f(x), & \text{ if } x \in A \cap C; \\
                                g(x), & \text{ if } x \in B \cup (A \cap D); \\
                                \bot, & \text{ otherwise.}
                               \end{cases}
 \end{align*}

 Thus $(A \cap C, B \cup (A \cap D))[f, g](x) = (A, B)[(C, D)[f, g], g](x)$ for all $x \in X_{\bot}$ and so $(\alpha \wedge \beta)[f, g] = \alpha[\beta[f, g], g]$. \newline

 Axiom \eqref{EC2}: As earlier let $\alpha$ and $\beta \in \mathbb{3}^{X}$ be represented by $(A, B)$ and $(C, D)$ respectively.

 \begin{align*}
  (A, B)[(C, D)[f, g], (C, D)[h, k]](x) & = \begin{cases}
                                            (C, D)[f, g](x), & \text{ if } x \in A; \\
                                            (C, D)[h, k](x), & \text{ if } x \in B; \\
                                            \bot, & \text{ otherwise.} \\
                                           \end{cases} \\
                                        & = \begin{cases}
                                            f(x), & \text{ if } x \in A \cap C; \\
                                            g(x), & \text{ if } x \in A \cap D; \\
                                            \bot, & \text{ if } x \in A \cap (X \setminus (C \cup D)); \\
                                            h(x), & \text{ if } x \in B \cap C; \\
                                            k(x), & \text{ if } x \in B \cap D; \\
                                            \bot, & \text{ if } x \in B \cap (X \setminus (C \cup D)); \\
                                            \bot, & \text{ otherwise.} \\
                                           \end{cases} \\
                                       & = \begin{cases}
                                            f(x), & \text{ if } x \in A \cap C; \\
                                            g(x), & \text{ if } x \in A \cap D; \\
                                            h(x), & \text{ if } x \in B \cap C; \\
                                            k(x), & \text{ if } x \in B \cap D; \\
                                            \bot, & \text{ otherwise.}
                                           \end{cases}
 \end{align*}

 Also we have

 \begin{align*}
  (C, D)[(A, B)[f, h], (A, B)[g, k]](x) & = \begin{cases}
                                             (A, B)[f, h](x), & \text{ if } x \in C; \\
                                             (A, B)[g, k](x), & \text{ if } x \in D; \\
                                             \bot, & \text{ otherwise.} \\
                                            \end{cases} \\
                                        & = \begin{cases}
                                            f(x), & \text{ if } x \in C \cap A; \\
                                            h(x), & \text{ if } x \in C \cap B; \\
                                            \bot, & \text{ if } x \in C \cap (X \setminus (A \cup B)); \\
                                            g(x), & \text{ if } x \in D \cap A; \\
                                            k(x), & \text{ if } x \in D \cap B; \\
                                            \bot, & \text{ if } x \in D \cap (X \setminus (A \cup B)); \\
                                            \bot, & \text{ otherwise.} \\
                                           \end{cases} \\
                                       & = \begin{cases}
                                            f(x), & \text{ if } x \in C \cap A; \\
                                            h(x), & \text{ if } x \in C \cap B; \\
                                            g(x), & \text{ if } x \in D \cap A; \\
                                            k(x), & \text{ if } x \in D \cap B; \\
                                            \bot, & \text{ otherwise.} \\
                                           \end{cases} \\
 \end{align*}

 Thus $(A, B)[(C, D)[f, g], (C, D)[h, k]](x) = (C, D)[(A, B)[f, h], (A, B)[g, k]](x)$ for all $x \in X_{\bot}$ and so $\alpha[\beta[f, g], \beta[h, k]] = \beta[\alpha[f, h], \alpha[g, k]]$. \newline

 Axiom \eqref{EC8}: Let $\alpha, \beta \in \mathbb{3}^{X}$ be represented by $(A, B)$ and $(C, D)$ respectively. Note that $\alpha \wedge \beta$ is represented by $(A \cap C, B \cup (A \cap D))$. For $f, g \in \mathcal{T}_{o}(X_{\bot})$, $\alpha[f, g] = \alpha[g, g]$ implies that $\alpha[f, g](x) = \alpha[g, g](x)$ for all $x \in X_{\bot}$. Consider

 \begin{equation*}
  (A, B)[f, g](x) = \begin{cases}
                     f(x), & \text{ if } x \in A; \\
                     g(x), & \text{ if } x \in B; \\
                     \bot, & \text{ otherwise.}
                    \end{cases}
 \end{equation*}

 Similarly we have

 \begin{equation*}
  (A, B)[g, g](x) = \begin{cases}
                     g(x), & \text{ if } x \in A; \\
                     g(x), & \text{ if } x \in B; \\
                     \bot, & \text{ otherwise.}
                    \end{cases}
 \end{equation*}

 Given that $(A, B)[f, g](x) = (A, B)[g, g](x)$ for all $x \in X_{\bot}$ as a consequence we have $f(x) = g(x)$ for all $x \in A$. Consider

 \begin{equation*}
  (A \cap C, B \cup (A \cap D))[f, g](x) = \begin{cases}
                                            f(x), & \text{ if } x \in A \cap C; \\
                                            g(x), & \text{ if } x \in B \cup (A \cap D); \\
                                            \bot, & \text{ otherwise.}
                                           \end{cases}
 \end{equation*}

 In view of the fact that $f(x) = g(x)$ for all $x \in A$ and that $A \cap C \subseteq A$ we have

 \begin{equation*}
  (A \cap C, B \cup (A \cap D))[f, g](x) = \begin{cases}
                                            g(x), & \text{ if } x \in A \cap C; \\
                                            g(x), & \text{ if } x \in B \cup (A \cap D); \\
                                            \bot, & \text{ otherwise.}
                                           \end{cases}
 \end{equation*}

 Thus $(\alpha \wedge \beta)[f, g] = (\alpha \wedge \beta)[g, g]$ and so quasi-identity \eqref{EC8} holds. \newline

\end{flushleft}

\subsection{Verification of \eref{ExampleBasicCset}} \label{VerBasicCset}

\begin{flushleft}

 Axiom \eqref{EC1}: By definition $U[s, t] = \bot$. \newline

 Axiom \eqref{EC6}: By definition $F[s, t] = t$. \newline

 Axiom \eqref{EC5}: We need to show that $(\neg \alpha)[s, t] = \alpha[t, s]$. It suffices to consider the following three cases:
 \begin{enumerate}[\rm]
  \item \emph{Case I}: $\alpha = T$: In this case $(\neg T)[s, t] = F[s, t] = t = T[t, s]$.
  \item \emph{Case II}: $\alpha = F$: In this case $(\neg F)[s, t] = T[s, t] = s = F[t, s]$.
  \item \emph{Case III}: $\alpha = U$: In this case $(\neg U)[s, t] = U[s, t] = \bot = U[t, s]$. \newline
  \end{enumerate}

 Axiom \eqref{EC3}: We need to verify the validity of the identity $\alpha[\alpha[s, t], u] = \alpha[s, u]$. It suffices to consider the following three cases:
 \begin{enumerate}[\rm]
  \item \emph{Case I}: $\alpha = T$: In this case $T[T[s, t], u] = T[s, t] = s = T[s, u]$.
  \item \emph{Case II}: $\alpha = F$: In this case $F[F[s, t], u] = u = F[s, u]$.
  \item \emph{Case III}: $\alpha = U$: In this case $U[U[s, t], u] = \bot = U[s, u]$. \newline
  \end{enumerate}

 Axiom \eqref{EC4}: We need to verify the validity of the identity $\alpha[s, \alpha[t, u]] = \alpha[s, u]$. It suffices to consider the following three cases:
 \begin{enumerate}[\rm]
  \item \emph{Case I}: $\alpha = T$: In this case $T[s, T[t, u]] = s = T[s, u]$.
  \item \emph{Case II}: $\alpha = F$: In this case $F[s, F[t, u]] = F[t, u] = u = F[s, u]$.
  \item \emph{Case III}: $\alpha = U$: In this case $U[s, U[t, u]] = \bot = U[s, u]$. \newline
  \end{enumerate}

 Axiom \eqref{EC7}: We need to verify the validity of the identity $(\alpha \wedge \beta)[s, t] = \alpha[\beta[s, t], t]$. It suffices to consider the following three cases:
 \begin{enumerate}[\rm]
  \item \emph{Case I}: $\alpha = T$: In this case $(T \wedge \beta)[s, t] = \beta[s, t] = T[\beta[s, t], t]$.
  \item \emph{Case II}: $\alpha = F$: In this case $(F \wedge \beta)[s, t] = F[s, t] = t = F[\beta[s, t], t]$.
  \item \emph{Case III}: $\alpha = U$: In this case $(U \wedge \beta)[s, t] = U[s, t] = \bot = U[\beta[s, t], t]$. \newline
  \end{enumerate}

 Axiom \eqref{EC2}: We need to verify the validity of the identity $\alpha[\beta[s, t], \beta[u, v]] = \beta[\alpha[s, u], \alpha[t, v]]$. It suffices to consider the following three cases:
 \begin{enumerate}[\rm]
  \item \emph{Case I}: $\alpha = T$: In this case $T[\beta[s, t], \beta[u, v]] = \beta[s, t] = \beta[T[s, u], T[t, v]]$.
  \item \emph{Case II}: $\alpha = F$: In this case $F[\beta[s, t], \beta[u, v]] = \beta[u, v] = \beta[F[s, u], F[t, v]]$.
  \item \emph{Case III}: $\alpha = U$: In this case $U[\beta[s, t], \beta[u, v]] = \bot$. Consider $\beta \in \{ T, F, U \}$. It is easy to see that in each case we have $\beta[\bot, \bot] = \bot$. In other words $U[\beta[s, t], \beta[u, v]] = \bot = \beta[\bot, \bot] = \beta[U[s, u], U[t, v]]$. \newline
  \end{enumerate}

 Axiom \eqref{EC8}: For verification of this quasi-identity we shall again consider the following three cases:
 \begin{enumerate}[\rm]
  \item \emph{Case I}: $\alpha = T$: The hypothesis $\alpha[s, t] = \alpha[t, t]$ gives $T[s, t] = T[t, t]$ that is $s = t$. It is easy to see that for each $\beta \in \{ T, F, U \}$ we have $\beta[s, t] = \beta[t, t]$ that is $(T \wedge \beta)[s, t] = (T \wedge \beta)[t, t]$.
  \item \emph{Case II}: $\alpha = F$: In this case $(F \wedge \beta)[s, t] = F[s, t] = t = F[t, t] = (F \wedge \beta)[t, t]$.
  \item \emph{Case III}: $\alpha = U$: In this case $(U \wedge \beta)[s, t] = U[s, t] = \bot = U[t, t] = (U \wedge \beta)[t, t]$. \newline
  \end{enumerate}

\end{flushleft}

\subsection{Verification of \eref{ExampleFunctionalAgreeable}} \label{VerFunctionalAgreeable}

\begin{flushleft}

Axiom \eqref{EA4}: We will show that $(f * f)[f, \zeta_{\bot}] = f$. For any $f \in \mathcal{T}_{o}(X_{\bot})$, $f * f = (A, B)$ where $A = \{ x \in X : f(x) \neq \bot \}$ and $B = \emptyset$. Thus
 \begin{align*}
 (A, B)[f, \zeta_{\bot}](x) & = \begin{cases}
                                f(x), & \text{ if } x \in A; \\
                                \bot, & \text{ if } x \notin A.
                               \end{cases} \\
                            & = \begin{cases}
                                f(x), & \text{ if } f(x) \neq \bot; \\
                                \bot, & \text{ if } f(x) = \bot.
                               \end{cases} \\
                            & = f(x)
 \end{align*} \newline

Axiom \eqref{EA1}: For any $f \in \mathcal{T}_{o}(X_{\bot})$, $\zeta_{\bot} * f = (A, B)$ where $A = \{ x \in X : \zeta_{\bot}(x) = f(x) \text{ } (\neq \bot) \}$ and $B = \{ x \in X : \zeta_{\bot}(x) \neq f(x) \text{ } (\neq \bot) \}$. Thus $\zeta_{\bot} * f = (A, B) = (\emptyset, \emptyset) = U = f * \zeta_{\bot}$. \newline

Axiom \eqref{EA2}: Let $f, g \in \mathcal{T}_{o}(X_{\bot})$ and $f * g = (A, B)$. Then
 \begin{equation*}
  (f * g)[f, g](x) =
  \begin{cases}
   f(x), & \text{ if } x \in A; \\
   g(x), & \text{ if } x \in B; \\
   \bot, & \text{ otherwise.}
  \end{cases}
 \end{equation*}
  Thus for $x \in A$ we have $f(x) = g(x) \text{ } (\neq \bot)$ and so $(f * g)[f, g](x) = f(x) = g(x) = (f * g)[g, g](x)$. Similarly when $x \in B$ we have $(f * g)[f, g](x) = g(x) = (f * g)[g, g](x)$. If $x \in (A \cup B)^{\mathsf{c}}$ then $(f * g)[f, g](x) = \bot = (f * g)[g, g](x)$. \newline

Axiom \eqref{EA3}: We will show that $\alpha[f, g] * \alpha[h, k] = \alpha \llbracket f * h, g * k \rrbracket$. Let $\alpha = (A, B)$ and $\mathscr{F}_{1} = \alpha[f, g]$, where
  \begin{equation*}
   \mathscr{F}_{1}(x) = \alpha[f, g](x) =
   \begin{cases}
    f(x), & \text{ if } x \in A; \\
    g(x), & \text{ if } x \in B; \\
    \bot, & \text{ otherwise.}
   \end{cases}
  \end{equation*}
  Similarly let $\mathscr{F}_{2} = \alpha[h, k]$ where
  \begin{equation*}
   \mathscr{F}_{2}(x) = \alpha[h, k](x) =
   \begin{cases}
    h(x), & \text{ if } x \in A; \\
    k(x), & \text{ if } x \in B; \\
    \bot, & \text{ otherwise.}
   \end{cases}
  \end{equation*}
  Let $\mathscr{F}_{1} * \mathscr{F}_{2} = (C, D)$ where $C = \{ x \in X : \mathscr{F}_{1}(x) = \mathscr{F}_{2}(x) (\neq \bot) \}$ and $D = \{ x \in X : \mathscr{F}_{1}(x) \neq \mathscr{F}_{2}(x) (\neq \bot) \}$. Let $f * h = (E, F)$ where $E = \{ x \in X : f(x) = h(x) (\neq \bot) \}$ and $F = \{ x \in X : f(x) \neq h(x) (\neq \bot) \}$. Let $g * k = (G, H)$ where $G = \{ x \in X : g(x) = k(x) (\neq \bot) \}$ and $H = \{ x \in X : g(x) \neq k(x) (\neq \bot) \}$. Then we have
  \begin{align*}
   (A, B) \llbracket (E, F), (G, H) \rrbracket & = \big( (A, B) \wedge (E, F) \big) \vee \big( \neg ((A, B)) \wedge (G, H) \big) \\
                                               & = \big( (A \cap E, B \cup (A \cap F) \big) \vee \big( B \cap G, A \cup (B \cap H) \big) \\
                                               & = \Big( (A \cap E) \cup \big( (B \cup (A \cap F)) \cap (B \cap G) \big), \big( B \cup (A \cap F) \big) \cap \big( A \cup (B \cap H) \big) \Big)
  \end{align*}
  In effect we need to show that $$(C, D) = \Big( (A \cap E) \cup \big( (B \cup (A \cap F)) \cap (B \cap G) \big), \big( B \cup (A \cap F) \big) \cap \big( A \cup (B \cap H) \big) \Big).$$

 \emph{Claim: $(A \cap E) \cup \big( (B \cup (A \cap F)) \cap (B \cap G) \big) \subseteq C$}. \\
   Let $x \in (A \cap E) \cup \big( (B \cup (A \cap F)) \cap (B \cap G) \big)$. Then $x \in (A \cap E)$ or $x \in \big( (B \cup (A \cap F)) \cap (B \cap G) \big)$.
   Suppose $x \in A \cap E$. It follows that $x \in A$ and so $\mathscr{F}_{1}(x) = f(x)$, $\mathscr{F}_{2}(x) = h(x)$. It is also true that $x \in E$ and so $f(x) = h(x) (\neq \bot)$. Thus $\mathscr{F}_{1}(x) = \mathscr{F}_{2}(x) (\neq \bot)$ from which it follows that $x \in C$.

   Suppose $x \in \big( (B \cup (A \cap F)) \cap (B \cap G) \big)$. Then $x \in B \cap G$ and hence $x \in B$, from which it follows that $\mathscr{F}_{1}(x) = g(x)$ and $\mathscr{F}_{2}(x) = k(x)$. Moreover as $x \in G$, $g(x) = k(x) (\neq \bot)$. Thus $\mathscr{F}_{1}(x) = \mathscr{F}_{2}(x) (\neq \bot)$ and so $x \in C$.

 \emph{Claim: $C \subseteq (A \cap E) \cup \big( (B \cup (A \cap F)) \cap (B \cap G) \big)$}. \\
   Let $x \in C$. Then $\mathscr{F}_{1}(x) = \mathscr{F}_{2}(x) (\neq \bot)$.
   \begin{enumerate}[\rm]
    \item \emph{Case I: $x \in A$:} From the fact that $\mathscr{F}_{1}(x) = \mathscr{F}_{2}(x) (\neq \bot)$, on assuming that $x \in A$, we have $\mathscr{F}_{1}(x) = f(x)$ and $\mathscr{F}_{2}(x) = h(x)$, that is, $f(x) = h(x) (\neq \bot)$, that is, $x \in E$. Hence $x \in (A \cap E)$ and so $x \in (A \cap E) \cup \big( (B \cup (A \cap F)) \cap (B \cap G) \big)$.
    \item \emph{Case II: $x \in B$:} If $x \in B$ then $\mathscr{F}_{1}(x) = g(x)$, $\mathscr{F}_{2}(x) = k(x)$ and as $\mathscr{F}_{1}(x) = \mathscr{F}_{2}(x) (\neq \bot)$, it follows that $g(x) = k(x) (\neq \bot)$. Hence $x \in B \cap G \subseteq B \subseteq (B \cup (A \cap F))$ from which the result follows.
    \item \emph{Case III: $x \in ( A \cup B)^{\mathsf{c}}$:} Then $\mathscr{F}_{1}(x) = \bot = \mathscr{F}_{2}(x)$. However, our assumption states that $\mathscr{F}_{1}(x) = \mathscr{F}_{2}(x) (\neq \bot)$, a contradiction. It follows that this case cannot occur under the given hypothesis.
   \end{enumerate}

 \emph{Claim: $\big( B \cup (A \cap F) \big) \cap \big( A \cup (B \cap H) \big) \subseteq D$}. \\
   Let $x \in \big( B \cup (A \cap F) \big) \cap \big( A \cup (B \cap H) \big)$. Then $x \in \big( B \cup (A \cap F) \big)$ and $x \in \big( A \cup (B \cap H) \big)$. From the fact that $x \in \big( B \cup (A \cap F) \big)$ we have $x \in B$ or $x \in (A \cap F)$.

   If $x \in B$ then from the fact that $A \cap B = \emptyset$, we have $x \notin A$. However, as it is given that $x \in \big( A \cup (B \cap H) \big)$, it must be the case that $x \in B \cap H$. Moreover, $x \in B$ implies that $\mathscr{F}_{1}(x) = g(x)$ and $\mathscr{F}_{2}(x) = k(x)$, and $x \in H$ implies that $g(x) \neq k(x) (\neq \bot)$. Consequently $\mathscr{F}_{1}(x) \neq \mathscr{F}_{2}(x) (\neq \bot)$, that is, $x \in D$.

   On the other hand, if $x \in (A \cap F)$, then $x \in A$ and so $\mathscr{F}_{1}(x) = f(x)$ and $\mathscr{F}_{2}(x) = h(x)$ and $x \in F$ means that $f(x) \neq h(x) (\neq \bot)$. Thus $\mathscr{F}_{1}(x) \neq \mathscr{F}_{2}(x) (\neq \bot)$, which means that $x \in D$.

 \emph{Claim: $D \subseteq \big( B \cup (A \cap F) \big) \cap \big( A \cup (B \cap H) \big)$}.
   Let $x \in D$. Then $\mathscr{F}_{1}(x) \neq \mathscr{F}_{2}(x) (\neq \bot)$.
   \begin{enumerate}[\rm]
    \item \emph{Case I: $x \in A$:} Then $\mathscr{F}_{1}(x) = f(x)$ and $\mathscr{F}_{2}(x) = h(x)$. From the hypothesis, we have $f(x) \neq h(x) (\neq \bot)$ which implies that $x \in F$. Hence $x \in (A \cap F) \subseteq A$, which gives that $x \in \big( B \cup (A \cap F) \big) \cap \big( A \cup (B \cap H) \big)$.
    \item \emph{Case II: $x \in B$:} In this case $\mathscr{F}_{1}(x) = g(x)$ and $\mathscr{F}_{2}(x) = k(x)$. From the hypothesis, it follows that $g(x) \neq k(x) (\neq \bot)$ which implies that $x \in H$. Thus $x \in B \cap H \subseteq B$ which implies that $x \in \big( B \cup (A \cap F) \big) \cap \big( A \cup (B \cap H) \big)$.
    \item \emph{Case III: $x \in (A \cup B)^{\mathsf{c}}$:} Then $\mathscr{F}_{1}(x) = \bot = \mathscr{F}_{2}(x)$, a contradiction to our assumption. It follows that this case cannot occur.
   \end{enumerate}
  Thus $(C, D) = \Big( (A \cap E) \cup \big( (B \cup (A \cap F)) \cap (B \cap G) \big), \big( B \cup (A \cap F) \big) \cap \big( A \cup (B \cap H) \big) \Big)$. \newline

Axiom \eqref{EA5}: Let $f * f = (X, \emptyset)$ and $f * g = (\emptyset, \emptyset)$ for $f, g \in \mathcal{T}_{o}(X_{\bot})$. It follows from $f * f = (X, \emptyset)$ that $\{ x \in X : f(x) \neq \bot \} = X$ that is, for all $x \in X$, $f(x) \neq \bot$. Consider $f * g = (A, B) = (\emptyset, \emptyset)$ where $A = \{ x \in X : f(x) = g(x) (\neq \bot) \}$ and $B = \{ x \in X : f(x) \neq g(x) (\neq \bot) \}$. Suppose that for some $x \in X$, we have $g(x) \neq \bot$. Using the fact that $f(x) \neq \bot$ and that $A = \emptyset$, we can deduce that $f(x) \neq g(x)$, and so by definition, $x \in B$. However, it is given that $B = \emptyset$, a contradiction. Hence, $g(x) = \bot$ for each $x \in X$, that is $g = \zeta_{\bot}$. Therefore the last quasi-identity holds.
\end{flushleft}

\subsection{Verification of \eref{ExampleBasicAgreeable}} \label{VerBasicAgreeable}

\begin{flushleft}

Axiom \eqref{EA4}: For $s \in S_{\bot} \setminus \{ \bot \}$, $s * s = T$. Thus $(s * s)[s, \bot] = T[s, \bot] = s$. If $s = \bot$ then $s * s = \bot * \bot = U$. Thus $(s * s)[s, \bot] = U[\bot, \bot] = \bot = s$. \newline

Axiom \eqref{EA1}: $\bot * s = U = s * \bot$ by definition. \newline

Axiom \eqref{EA2}: For $s, t \in S_{\bot}$, we have the following:
  \begin{enumerate}[\rm]
   \item \emph{Case I: $s * t = T$:} This occurs if and only if $s = t$ and $s \neq \bot \neq t$. Thus $(s * t)[s, t] = T[s, t] = s = t = T[t, t] = (s * t)[t, t]$.
   \item \emph{Case II: $s * t = F$:} In this case, we have $(s * t)[s, t] = F[s, t] = t = F[t, t] = (s * t)[t, t]$.
   \item \emph{Case III: $s * t = U$:} Here we have $(s * t)[s, t] = U[s, t] = \bot = U[t, t] = (s * t)[t, t]$. \newline
  \end{enumerate}

Axiom \eqref{EA3}: It suffices to consider the following three cases:
  \begin{enumerate}[\rm]
   \item \emph{Case I: $\alpha = T$:} Then $\alpha[s, t] * \alpha[u, v] = T[s, t] * T[u, v] = s * u$. And $\alpha \llbracket s * u, t * v \rrbracket = (T \wedge (s * u)) \vee (F \wedge (t * v)) = s * u$.
   \item \emph{Case II: $\alpha = F$:} Then $\alpha[s, t] * \alpha[u, v] = F[s, t] * F[u, v] = t * v$. And $\alpha \llbracket s * u, t * v \rrbracket = (F \wedge (s * u)) \vee (T \wedge (t * v)) = t * v$.
   \item \emph{Case III: $\alpha = U$:} Then $\alpha[s, t] * \alpha[u, v] = U[s, t] * U[u, v] = \bot * \bot = U$. And $\alpha \llbracket s * u, t * v \rrbracket = (U \wedge (s * u)) \vee (U \wedge (t * v)) = U \vee U = U$. \newline
  \end{enumerate}

Axiom \eqref{EA5}: Let $s, t \in S_{\bot}$ such that $s * s = T$ and $s * t = U$. Clearly $s \neq t$. As $s * s = T$, this means that $s \neq \bot$. If $t \neq \bot$ then in view of the fact that $s \neq t$ we have $s * t = F$ by definition, which is a contradiction to our assumption that $s * t = U$. Thus $t = \bot$.
\end{flushleft}

\subsection{Verification of \eref{ExampleCAlgAgreeable}} \label{VerCAlgAgreeable}

\begin{flushleft}

 Axiom \eqref{EA4}: For $\alpha \in M$, $\alpha * \alpha = (\alpha \wedge \alpha) \vee (\neg \alpha \wedge \neg \alpha) = \alpha \vee \neg \alpha$. Hence $(\alpha * \alpha) \llbracket \alpha, U \rrbracket = (\alpha \vee \neg \alpha) \llbracket \alpha, U \rrbracket = ((\alpha \vee \neg \alpha) \wedge \alpha) \vee (\neg \alpha \wedge \alpha \wedge U)$. This reduces to checking the validity of the identity $((\alpha \vee \neg \alpha) \wedge \alpha) \vee (\neg \alpha \wedge \alpha \wedge U) = \alpha$ in the $C$-algebra $M$. In view of \tref{SubdirIrredCAlg}, it suffices to check over three elements $T, F, U$.
 \begin{enumerate}[\rm]
  \item \emph{Case I}: $\alpha = T$: $((\alpha \vee \neg \alpha) \wedge \alpha) \vee (\neg \alpha \wedge \alpha \wedge U) = ((T \vee F) \wedge T) \vee (F \wedge T \wedge U) = T \vee F = T = \alpha$.
  \item \emph{Case II}: $\alpha = F$: $((\alpha \vee \neg \alpha) \wedge \alpha) \vee (\neg \alpha \wedge \alpha \wedge U) = ((F \vee T) \wedge F) \vee (T \wedge F \wedge U) = F \vee F = F = \alpha$.
  \item \emph{Case III}: $\alpha = U$: $((\alpha \vee \neg \alpha) \wedge \alpha) \vee (\neg \alpha \wedge \alpha \wedge U) = ((U \vee U) \wedge U) \vee (U \wedge U \wedge U) = U \vee U = U = \alpha$. \newline
 \end{enumerate}

 Axiom \eqref{EA1}: For $\alpha \in M$, $U * \alpha = (U \wedge \alpha) \vee (\neg U \wedge \neg \alpha) = U \vee U = U$. On the other hand $\alpha * U = (\alpha \wedge U) \vee (\neg \alpha \wedge \neg U)$. It suffices to consider the following three cases:
 \begin{enumerate}[\rm]
  \item \emph{Case I}: $\alpha = T$: $\alpha * U = T * U = (T \wedge U) \vee (F \wedge \neg U) = U \vee F = U$.
  \item \emph{Case II}: $\alpha = F$: $\alpha * U = F * U = (F \wedge U) \vee (T \wedge \neg U) = F \vee U = U$.
  \item \emph{Case III}: $\alpha = U$: $\alpha * U = U * U = (U \wedge U) \vee (U \wedge \neg U) = U \vee U = U$. \newline
 \end{enumerate}

 Axiom \eqref{EA2}: For $\alpha, \beta \in M$ we have $(\alpha * \beta) \llbracket \alpha, \beta \rrbracket = ((\alpha \wedge \beta) \vee (\neg \alpha \wedge \neg \beta)) \llbracket \alpha, \beta \rrbracket = (((\alpha \wedge \beta) \vee (\neg \alpha \wedge \neg \beta)) \wedge \alpha) \vee (\neg ((\alpha \wedge \beta) \vee (\neg \alpha \wedge \neg \beta)) \wedge \beta)$. On the other hand $(\alpha * \beta) \llbracket \beta, \beta \rrbracket = (((\alpha \wedge \beta) \vee (\neg \alpha \wedge \neg \beta)) \wedge \beta) \vee ( \neg ((\alpha \wedge \beta) \vee (\neg \alpha \wedge \neg \beta)) \wedge \beta)$. It suffices to check the validity of this identity in the following three cases:
 \begin{enumerate}[\rm]
  \item \emph{Case I}: $\alpha = T$: Here $(\alpha * \beta) \llbracket \alpha, \beta \rrbracket = (((T \wedge \beta) \vee (F \wedge \neg \beta)) \wedge T) \vee (\neg ((T \wedge \beta) \vee (F \wedge \neg \beta)) \wedge \beta) = ((\beta \vee F) \wedge T) \vee (\neg (\beta \vee F) \wedge \beta) = \beta \vee (\neg \beta \wedge \beta) = \beta$. Also $(\alpha * \beta) \llbracket \beta, \beta \rrbracket = (((T \wedge \beta) \vee (F \wedge \neg \beta)) \wedge \beta) \vee ( \neg ((T \wedge \beta) \vee (F \wedge \neg \beta)) \wedge \beta) = ((\beta \vee F) \wedge \beta) \vee (\neg (\beta \vee F) \wedge \beta) = \beta \vee (\neg \beta \wedge \beta) = \beta$.
  \item \emph{Case II}: $\alpha = F$: In this case $(\alpha * \beta) \llbracket \alpha, \beta \rrbracket = (((F \wedge \beta) \vee (T \wedge \neg \beta)) \wedge F) \vee (\neg ((F \wedge \beta) \vee (T \wedge \neg \beta)) \wedge \beta) = ((F \vee \neg \beta) \wedge F) \vee (\neg (F \vee \neg \beta) \wedge \beta) = (\neg \beta \wedge F) \vee (\beta \wedge \beta) = (\neg \beta \wedge \beta) \vee \beta$ using the fact that $x \wedge F = x \wedge \neg x$ holds in all $C$-algebras. This expression reduces to $(\neg \beta \wedge \beta) \vee \beta = \beta$. Also $(\alpha * \beta) \llbracket \beta, \beta \rrbracket = (((F \wedge \beta) \vee (T \wedge \neg \beta)) \wedge \beta) \vee ( \neg ((F \wedge \beta) \vee (T \wedge \neg \beta)) \wedge \beta) = ((F \vee \neg \beta) \wedge \beta) \vee (\neg (F \vee \neg \beta) \wedge \beta) = (\neg \beta \wedge \beta) \vee (\beta \wedge \beta) = (\neg \beta \wedge \beta) \vee \beta = \beta$.
  \item \emph{Case III}: $\alpha = U$: In this case $(\alpha * \beta) \llbracket \alpha, \beta \rrbracket = (((U \wedge \beta) \vee (U \wedge \neg \beta)) \wedge U) \vee (\neg ((U \wedge \beta) \vee (U \wedge \neg \beta)) \wedge \beta) = ((U \vee U) \wedge U) \vee (\neg (U \vee U) \wedge \beta) = U$. Also $(\alpha * \beta) \llbracket \beta, \beta \rrbracket = (((U \wedge \beta) \vee (U \wedge \neg \beta)) \wedge \beta) \vee ( \neg ((U \wedge \beta) \vee (U \wedge \neg \beta)) \wedge \beta) = ((U \vee U) \wedge \beta) \vee (\neg (U \vee U) \wedge \beta) = U$. \newline
 \end{enumerate}

 Axiom \eqref{EA3}: We need to check the validity of the identity $\alpha \llbracket \beta, \gamma \rrbracket * \alpha \llbracket \delta, \rho \rrbracket = \alpha \llbracket \beta * \delta, \gamma * \rho \rrbracket$. It suffices to consider the following three cases:
 \begin{enumerate}[\rm]
  \item \emph{Case I}: $\alpha = T$: In this case $\alpha \llbracket \beta, \gamma \rrbracket * \alpha \llbracket \delta, \rho \rrbracket = T \llbracket \beta, \gamma \rrbracket * T \llbracket \delta, \rho \rrbracket = \beta * \delta$. On the other hand $\alpha \llbracket \beta * \delta, \gamma * \rho \rrbracket = T \llbracket \beta * \delta, \gamma * \rho \rrbracket = \beta * \delta$.
  \item \emph{Case II}: $\alpha = F$: Here $\alpha \llbracket \beta, \gamma \rrbracket * \alpha \llbracket \delta, \rho \rrbracket = F \llbracket \beta, \gamma \rrbracket * F \llbracket \delta, \rho \rrbracket = \gamma * \rho$. Also $\alpha \llbracket \beta * \delta, \gamma * \rho \rrbracket = F \llbracket \beta * \delta, \gamma * \rho \rrbracket = \gamma * \rho$.
  \item \emph{Case III}: $\alpha = U$: We have $\alpha \llbracket \beta, \gamma \rrbracket * \alpha \llbracket \delta, \rho \rrbracket = U \llbracket \beta, \gamma \rrbracket * U \llbracket \delta, \rho \rrbracket = U$. Also $\alpha \llbracket \beta * \delta, \gamma * \rho \rrbracket = U \llbracket \beta * \delta, \gamma * \rho \rrbracket = U$. \newline
 \end{enumerate}

 Axiom \eqref{EA5}: In order to verify quasi-identity \eqref{EA5}, we recall from \tref{SubdirIrredCAlg} that $M$ is a subalgebra of $\mathbb{3}^{X}$ for some set $X$. Suppose that $\alpha * \alpha = \mathbf{T}$ and $\alpha * \beta = \mathbf{U}$ for some $\alpha, \beta \in M$, that is, $\alpha \vee \neg \alpha = \mathbf{T}$ and $(\alpha \wedge \beta) \vee (\neg \alpha \wedge \neg \beta) = \mathbf{U}$. Treating $\alpha, \beta$ as elements of $\mathbb{3}^{X}$, we have $\alpha(x) \vee \neg (\alpha(x)) = T$, and $(\alpha(x) \wedge \beta(x)) \vee (\neg \alpha(x) \wedge \neg \beta(x)) = U$ for each $x \in X$. Since $\alpha(x) \vee \neg \alpha(x) = T$, where $\alpha(x) \in \{ T, F, U \}$, there are only two possible cases:
 \begin{enumerate}
  \item \emph{Case I}: $\alpha(x) = T$: Since $(\alpha * \beta)(x) = U$ we have $(\alpha(x) \wedge \beta(x)) \vee (\neg \alpha(x) \wedge \neg \beta(x)) = (T \wedge \beta(x)) \vee (F \wedge \neg \beta(x)) = \beta(x) = U$.
  \item \emph{Case II}: $\alpha(x) = F$: Since $(\alpha * \beta)(x) = U$ we have $(\alpha(x) \wedge \beta(x)) \vee (\neg \alpha(x) \wedge \neg \beta(x)) = (F \wedge \beta(x)) \vee (T \wedge \neg \beta(x)) = \neg \beta(x) = U$, from which it follows that $\beta(x) = U$.
 \end{enumerate}
 Thus for each $x \in X$, $\beta(x) = U$ that is, $\beta = \mathbf{U}$. Therefore, the quasi-identity \eqref{EA5} holds.
\end{flushleft}

\end{document}